\newtheorem{thm}{Theorem}[section]
\newtheorem{lem}[thm]{Lemma}
\newtheorem{cor}[thm]{Corollary}
\newtheorem{pro}[thm]{Proposition}
\theoremstyle{definition}   % will be in roman format after this
\newtheorem{ex}[thm]{Example}
\newtheorem{rmk}[thm]{Remark}
\newtheorem{defi}[thm]{Definition}
\newcommand {\emptycomment}[1]{}
\newcommand{\be }{\begin{equation}}
\newcommand{\ee }{\end{equation}}
\newcommand{\huaL}{\mathcal{L}}
\newcommand{\huaE}{\mathcal{E}}
\newcommand{\huaV}{\mathcal{V}}
\newcommand{\huaC}{{\mathcal{C}}}%{\mathcal{C}}
\newcommand{\huaD}{\mathcal{D}}
\newcommand{\huaH}{\mathcal{H}}
\newcommand{\frke}{\mathfrak e}
\newcommand{\frks}{\mathfrak s}
\newcommand{\Id}{\rm{Id}}
\newcommand{\br}[1]{   [ \cdot,    \cdot  ]   }
\newcommand{\Hom}{\mathrm{Hom}}
\newcommand{\gl}{\mathfrak {gl}}
\newcommand{\ad}{\mathrm{ad}}
\newcommand{\K}{\mathbb{K}}
\newcommand{\R}{\mathbb{R}}
\newcommand{\C}{\mathbb{C}}
\newcommand{\BH}{\mathbb{H}}
\newcommand{\BO}{\mathbb{O}}
\begin{document}

\title[The Yang-Baxter equation, Leibniz algebras,  racks]{The Yang-Baxter equation, Leibniz algebras,  racks and related algebraic structures}

%\author{Rong Tang}
%\address{Department of Mathematics, Jilin University, Changchun 130012, Jilin, China}
%\email{tangrong@jlu.edu.cn}

\author{Nanyan Xu}
\address{Department of Mathematics, Jilin University, Changchun 130012, Jilin, China}
\email{xuny20@mails.jlu.edu.cn}

\author{Yunhe Sheng}
\address{Department of Mathematics, Jilin University, Changchun 130012, Jilin, China}
\email{shengyh@jlu.edu.cn}

%\date{\today}

\begin{abstract}
The purpose of this paper is to clarify the relations between various constructions of solutions of the Yang-Baxter equation from Leibniz algebras, racks, 3-Leibniz algebras, 3-racks, linear racks, trilinear racks, and give new constructions of solutions of the Yang-Baxter equation. First we show that a 3-Leibniz algebra naturally gives rise to a 3-rack on the underlying vector space, which generalizes Kinyon's construction of racks from Leibniz algebras. Then we show that a trilinear rack naturally gives rise to a linear rack. Combined with Lebed's construction of solutions of the Yang-Baxter equation from linear racks, our results give an intrinsic explanation of Abramov and Zappala's construction of solutions of the Yang-Baxter equation from trilinear racks. Next we show that a 3-Leibniz algebra gives rise to a trilinear rack, which generalizes Abramov and Zappala's construction from 3-Lie algebras. Finally, we construct solutions of the Yang-Baxter equation using central extensions of 3-Leibniz algebras and Leibniz algebras. In particular, given a 3-Leibniz algebra, there are two different approaches to construct solutions of the Yang-Baxter equation, namely either consider the central extension of the Leibniz algebra on the fundamental objects, or consider the Leibniz algebra on the fundamental objects of the central extension of the 3-Leibniz algebra. We also show that there is a homomorphism between the corresponding solutions.
\end{abstract}

\renewcommand{\thefootnote}{}
\footnotetext{2020 Mathematics Subject Classification.
%22E60, Lie algebras of Lie groups
17A32,%Leibniz algebras
17B38, %Yang-Baxter equations and Rota-Baxter operators
16T25, %Yang-Baxter equations
%65L99, %Numerical analysis/None of the above, but in this section
%16T05, %    Hopf algebras and their applications
%81R50, % Quantum groups and related algebraic methods applied to problems in quantum theory
%65L06, %Multistep, Runge-Kutta and extrapolation methods for ordinary differential equations
%53A55, %Differential invariants (local theory), geometric objects
%18M60%      Operads (general)
%17B37, %Quantum groups (quantized enveloping algebras) and related deformations81R50,
%81R12, %Groups and algebras in quantum theory and relations with integrable systems%16T26,
%17A30,
%17B62
}
\keywords{Leibniz algebra, rack, 3-Leibniz algebra, linear rack, Yang-Baxter equation, central extension}

\maketitle

\tableofcontents

\allowdisplaybreaks

%\end{document}

\section{Introduction}
\label{sec:intr}

\subsection{The Yang-Baxter equation}
In 1967, Yang discovered the Yang-Baxter equation as the consistency condition for the factorization in the quantum mechanical many-body problem \cite{Yang}. At the same time, the Yang-Baxter equation also appeared in Baxter's work of  eight-vertex model \cite{Baxter}. A solution of the Yang-Baxter equation on a vector space $V$ is an  invertible  linear map $R:V\otimes V\rightarrow V\otimes V$ satisfying
\begin{eqnarray}\label{YBE}
(R\otimes{\Id}_V)({\Id}_V\otimes R)(R\otimes{\Id}_V)=({\Id}_V\otimes R)(R\otimes{\Id}_V)({\Id}_V\otimes R).
\end{eqnarray}
 The Yang-Baxter equation is widely used in various areas of mathematics and physics, such as integrable systems, knot theory, statistical mechanics, quantum field theory, quantum computing, braided categories, etc.
%More precisely, Sklyanin, Takhtajan and Faddeev proposed the quantum inverse scattering method \cite{STF}, which naturally gave rise to the notion of the classical Yang-Baxter equation \cite{BD}, whose solutions are closely associated with integrable classical systems, or soliton equation. Sheng and Tang studied the classical Leibniz Yang-Baxter equation using relative Rota-Baxter operators on Leibniz algebras \cite{TS}.
FRT-construction, introduced by Faddeev, Reshetikhin and Takhtajan, provided a way to obtain some bialgebras by means of $R$-matrices, where $R$-matrices are viewed as matrix solutions of the Yang-Baxter equation \cite{FRT}.
%Wu showed the role played by the Yang-Baxter equation in generating knot invariants using the method of statistical mechanics and clarified that the Yang-Baxter equation arose naturally in the realization of invariances under Reidemeister moves III \cite{Wu}.
Jones proposed ``baxterization'' which shows the connection between representations of the braid group and solutions of the Yang-Baxter equation \cite{Jones}.

In order to simplify the study of the Yang-Baxter equation, Drinfel'd proposed to consider the Yang-Baxter equation at the level of sets \cite{Drinfeld}. %To be specific, let $X$ be a set and $R:X\times X\to X\times X$ an {\bf invertible} map. If $R$ satisfies
%\begin{eqnarray}\label{set-YBE}
%(R\times {\Id})({\Id}\times R)(R\times {\Id})=({\Id}\times R)(R\times {\Id})({\Id}\times R),
%\end{eqnarray}
%then $R$ is called a set-theoretic solution of the Yang-Baxter equation \eqref{set-YBE}, In this paper, we described it as a {\bf set-theoretic Yang-Baxter operator}, abbreviated as set-YBO. A {\bf set-YBO homomorphism} between two set-YBOs $R_1:X\times X\to X\times X$ and $R_2:Y\times Y\to Y\times Y$ is a map $f:X\to Y$ such that $(f\times f)R_1=R_2(f\times f)$.
There are a lot of research results on the set-theoretical solutions of the Yang-Baxter equation. Etingof-Schedler-Soloviev \cite{ESS} and Lu-Yan-Zhu \cite{LYZ} paid attention to the non-degenerate involutive set-theoretical solutions and studied the structure group,   which is used to classify set-theoretical solutions. %Weinstein and Xu obtained a construction of set-YBOs while studying Poisson Lie groups and its associatied symplectic groupoids \cite{WX}.
 In 2006, Rump introduced the notion of braces, which are generalization of radical rings and give  rise to non-degenerate involutive set-theoretical solutions of the Yang-Baxter equation \cite{Rump}. Later, Guarnieri and Vendramin extended braces to the noncommutative setting by defining skew braces in order to obtain non-degenerate non-involutive set-theoretical solutions \cite{GV2}. Bardakov and Gubarev obtained non-degenerate set-theoretical solutions of the Yang-Baxter equation using the connection between Rota-Baxter groups \cite{GLS} and skew-left braces \cite{BG}. In \cite{BGST}, Bai, Guo, Sheng and Tang introduced the notion of post-groups, which are equivalent to skew left braces, and give rise to   set-theoretical solutions of the Yang-Baxter equation.

\subsection{Racks, Leibniz algebras, 3-racks and 3-Leibniz algebras}
 Set-theoretical solutions of the Yang-Baxter equation are closely related to many algebraic structures as mentioned above. It is worth noting that there is another algebraic structure, named rack, which also plays an important role in the study of the Yang-Baxter equation. As an algebraic interpretation of Reidemeister moves, a {\bf rack} is a set $X$ with a binary operator $\lhd:X\times X\to X$ satisfying
\begin{itemize}
  \item {\rm self-distributive law:} $(x\lhd y)\lhd z=(x\lhd z)\lhd(y\lhd z),\,\,\,\forall x,y,z\in X$,
  \item for all $x\in X$, the map $\cdot\lhd x:X\rightarrow X$ sending $y\in X$ to $y\lhd x$ is a bijection.
\end{itemize}\noindent
If the binary operation $\lhd$ only satisfies the self-distributive law, then $(X,\lhd)$ is called a {\bf shelf}.
Racks are connected with the theories of groups, group presentations and crossed modules, and open the practical possibility of finding a complete sequence of computable invariants for framed links and for 3-manifolds. Many examples of racks are listed in \cite{FR}. In \cite{AG}, Andruskiewitsch and Gra${\rm\tilde{n}}$a proved that there is a one-to-one corresponding between racks and certain set-theoretical solutions of the Yang-Baxter equation (Theorem \ref{rack-set-YBO}).  Biyogmam generalized racks to $n$-racks and gave some examples in \cite{Biyogmam}. In this paper, we consider the case $n=3$, i.e. $3$-racks.  One can construct a $3$-rack $(X,T^\lhd)$ by a rack $(X,\lhd)$ \cite{EGM}
and  one can also obtain a rack $(X\times X,\lhd^T)$ from a $3$-rack $(X,T)$ \cite{ESZ}.

Racks are closely related to Leibniz algebras, which were introduced by Loday in his study of algebraic K-theory \cite{Loday1}. Specially, a {\bf Leibniz algebra} is a vector space $\huaE$ together with a bilinear operation $[\cdot,\cdot]_\huaE$ such that
\begin{eqnarray*}
\label{Leibniz}[[x,y]_\huaE,z]_\huaE=[[x,z]_\huaE,y]_\huaE+[x,[y,z]_\huaE]_\huaE,
\,\,\,\forall x,y,z\in\huaE.
\end{eqnarray*}
In \cite{Kinyon}, Kinyon proved that a Leibniz algebra $(\huaE,[\cdot,\cdot]_\huaE)$ gives rise to a rack structure $\lhd$ on $\huaE$ via  $x\lhd y=\exp(\ad^R_y)(x)$ (Theorem \ref{lei to rack thm}).

\emptycomment{Noting that Leibniz algebras widely appear in many areas of mathematics, such as differential geometry, homological algebra, classical algebraic topology and non-commutative geometry.
}Furthermore, Casas, Loday and Pirashvili defined $n$-Leibniz algebras and proved that an $n$-Leibniz algebra gives rise to a Leibniz algebra structure on the set of fundamental objects \cite{Casas}. Biyogmam  proved that the tangent space of a Lie $n$-rack at the neutral element has an $n$-Leibniz algebra structure \cite{Biyogmam}. Since $3$-Leibniz algebras and $3$-racks are generalizations of Leibniz algebras and racks respectively, the following question naturally arises:
\begin{itemize}
\item[{\bf Q:}]Is there a $3$-rack structure on a $3$-Leibniz algebra as a generalization of Kinyon's result?
\end{itemize}

On the other hand,   Lebed showed that a central Leibniz algebra $(\huaE,[\cdot,\cdot]_\huaE,{\bf 1})$ gives rise to a solution of the Yang-Baxter equation via the following formula \cite{Lebed1}:
 \begin{eqnarray*}
R^{Lei}(x\otimes y)=y\otimes x+{\bf 1}\otimes[x,y]_\huaE,\,\,\,\forall x,y\in\huaE,
\end{eqnarray*}
Due to the close relationship between Leibniz algebras and $3$-Leibniz algebras, it is natural to consider the following question:
\begin{itemize}
  \item[{\bf Q:}]Can $3$-Leibniz algebras give rise to solutions of the Yang-Baxter equation?
\end{itemize}

\subsection{Linear racks and trilinear racks}

Kr${\rm \ddot{a}}$hmer and Wagemann defined linear shelfs as a coalgebraic version of shelfs \cite{Krahmer}. Similar definition has been proposed in \cite{Carter} and \cite{Lebed3}. In this paper, we use linear racks to express the coalgebraic version of racks so that the relationship between linear racks and linear shelfs is similar to the relationship between racks and shelfs.
Abramov and Zappala generalized linear racks to reversible TSD, which we call trilinear racks in this article. They showed that trilinear racks can be used to construct solutions of the Yang-Baxter equation \cite{Abramov}. On the other hand,   Lebed showed that linear racks give rise to solutions of the Yang-Baxter equation in \cite{Lebed3}. However, the relation between the above two approaches of constructing solutions of the Yang-Baxter equation is not clear. Since at the level of sets, there is a passage from $3$-racks to racks, so it is natural to consider the following question:
\begin{itemize}
\item[{\bf Q:}]Is there a passage from trilinear racks to linear racks as a coalgebraic version of the passage from $3$-racks to racks? If so, is the construction of solutions of the Yang-Baxter equation from trilinear racks given in \cite{Abramov} is consistent with the one given in  \cite{Lebed3}?
\end{itemize}

In \cite{Abramov}, the authors also construct  trilinear racks from   $3$-Lie algebras. While Lebed show that one can construct a linear rack from a Leibniz algebra in \cite{Lebed3}. Since $3$-Leibniz algebras are natural generalizations of Leibniz algebras, so we  propose the following question:
\begin{itemize}
\item[{\bf Q:}]Whether trilinear rack structures can be constructed using $3$-Leibniz algebras?
\end{itemize}

\subsection{Main results and outline of the paper}

The purpose of this paper is to solve the questions proposed above, and clarify the relations between various constructions of solutions of the Yang-Baxter equation. First we show that a 3-Leibniz algebra naturally gives rise to a 3-rack structure on the underlying vector space, which generalizes Kinyon's construction of racks from Leibniz algebras given in \cite{Kinyon}. Therefore, there are two approaches to obtain a rack from a 3-Leibniz algebra, and we show that there is a rack morphism between those two racks, which is a map preserves rack structure. If we further consider the corresponding solutions of the Yang-Baxter equation, there is also a homomorphism between solutions. We summarize these results in the following commutative diagram, and the colorful arrow are what we obtain in this paper. This is the content in Section \ref{sec-3-Lei-and-3-rack}.\vspace{-5mm}
\begin{center}
\begin{equation}\label{diagram:sec2}
\begin{array}{l}
\xymatrix@C=8ex@R=1.6ex{
  &\txt{\rm $3$-rack\\$(\huaL,T)$}
    \ar[r]^-{{\rm Lemma}~\ref{3-rack-to-rack}}
  &\txt{\rm rack\\$(\huaL\times\huaL,\lhd_{T})$}
    {\color{blue} \ar@[blue]@{-->}[dd]^-{\phi}}\ar[r]^-{{\rm Theorem}~\ref{rack-set-YBO}}
  &\txt{$R^{\lhd_{T}}$}
  \ar@[blue]@{-->}[dd]^-{\phi}  \\
  \txt{\rm $3$-Leibniz algebra\\$(\huaL,[\cdot,\cdot,\cdot]_\huaL)$}
    \ar@[blue]@{-->}[ru]^-{{\rm Theorem}~\ref{from-3-Lei-to-3-rack-thm}\quad}
    \ar[dr]_-{\rm Proposition~\ref{3-lei-to-lei}\quad}
  &\rotatebox{165}{\color{blue}{\txt{\Huge $\circlearrowright$}}}
  &&\\
  &\txt{\rm Leibniz algebra\\$(\huaL\otimes\huaL,\{\cdot,\cdot\})$}
    \ar[r]_-{{\rm Theorem}~\ref{lei to rack thm}}
  &\txt{\rm rack\\$(\huaL\otimes\huaL,\lhd)$}
 \ar[r]_-{{\rm Theorem}~\ref{rack-set-YBO}}
  &\txt{$R^{\lhd}$}\\
}
\end{array}
\end{equation}
\end{center}

Then we show that a trilinear rack naturally gives rise to a linear rack, which generalizes the constructions from a 3-Leibniz algebra to a Leibniz algebra \cite{Casas} and from a 3-rack to a rack \cite{ESZ}. Then combine with  Lebed's construction of solutions of the Yang-Baxter equation from linear racks give   in \cite{Lebed3}, a trilinear rack naturally gives rise to a solution  of the Yang-Baxter equation. This alternatively gives an intrinsic explanation of the constructions given by Abramov and Zappala in \cite{Abramov}, as explained by the following diagram. This is the content of Section \ref{sec-lin-rack-tri-rack}.\vspace{-5mm}
\begin{center}
\begin{equation}\label{diagram:sec3}
\begin{array}{l}
\hspace{1mm}\xymatrix@R=0.1pc{
 \txt{trilinear rack}\ar@[blue]@{-->}[r]^{\text{Theorem \ref{TSD-to-quan-rack}}} \ar@/_{3pc}/[rr]!U^(.4){\quad\quad\text{\cite[Theorem~3.9]{Abramov}}}& \txt{linear rack} \ar@{>}[r]^{\text{\cite[Proposition~4.2]{Lebed3}}}& \txt{solutions of YBE} \\
 &  &
 }
 \end{array}
 \end{equation}
 \end{center}

 Motivated from Lebed's construction of linear racks from  Leibniz algebras given in \cite{Lebed3}, we show that a 3-Leibniz algebra $\huaL$
naturally gives rise to a trilinear rack structure on $\K\oplus\huaL$. On the one hand, this generalizes the construction of trilinear racks from 3-Lie algebras given in  \cite{Abramov}. On the other hand, this makes it possible to construct solutions of the Yang-Baxter equation from a 3-Leibniz algebra directly. This is the content of Section \ref{sec-3-Lei-tri}.

Finally in Section \ref{sec-lei-3-lei},
we show that central 3-Leibniz algebras give rise to central Leibniz algebras, which lead to solutions of the Yang-Baxter equation. % Note that central 3-Leibniz algebras and central Leibniz algebras are equivalent to central extensions of 3-Leibniz algebras and Leibniz algebras. This motivates us to construct solutions of the Yang-Baxter equation using central extensions.
Given a 3-Leibniz algebra $\huaL$, there are two approaches to obtain a central Leibniz algebra, which lead to two different solutions of the Yang-Baxter equation. One approach is to consider the trivial central extension of the Leibniz algebra on the fundamental objects, while the other approach is to consider the Leibniz algebra on the fundamental objects of the trivial central extension of the 3-Leibniz algebra. We show that there is a homomorphism between the corresponding solutions of the Yang-Baxter equation, as illustrated by the following commutative diagram where $\overline{\huaL}=\K\oplus\huaL$:\vspace{-5mm}
\begin{center}
\begin{equation}\label{diagram:sec5}
\begin{array}{l}
\xymatrix@C=4ex@R=2ex{
  &\txt{central Leibniz algebra\\ $(\mathds{K}\oplus(\huaL\otimes\huaL),[\cdot,\cdot]_0,(1_\mathds{K},0))$}
  \ar[r]^-{{\rm Theorem}~\ref{lei to solution}} \ar@[blue]@{^{(}-->}[dd]_-{\frks}
  &\txt{$R^{Lei}_{\mathds{K}\oplus(\huaL\otimes\huaL)}$}
  \ar@[blue]@{^{(}-->}[dd]_-{\frks}\\
  \txt{$3$-Leibniz algebra\\$(\huaL,[\cdot,\cdot,\cdot]_\huaL)$}
  \ar@[blue]@{-->}[dr]_-{{\rm Corollary}~\ref{3-lei-cen-ext-to-sol}\quad}
  \ar@[blue]@{-->}[ur]^-{{\rm Corollary}~\ref{3-lei-to-lei-and-sol-1}\quad}
  &&\\
  &\txt{central Leibniz algebra~\\
  $(\overline{\huaL}\otimes\overline{\huaL},\{\cdot,\cdot\},(1_\mathds{K},0)\otimes(1_\mathds{K},0))$}
  \ar[r]^-{{\rm Theorem}~\ref{lei to solution}}
  &\txt{$R^{Lei}_{\overline{\huaL}\otimes\overline{\huaL}}$}
}
\end{array}
\end{equation}
\end{center}

In this paper, we work over an algebraically closed field $\mathds{K}$ of characteristic 0 and all the vector spaces are over $\K$ and finite-dimensional.

\section{A passage from 3-Leibniz algebras to 3-racks}\label{sec-3-Lei-and-3-rack}

In this section, we construct a $3$-rack structure on a $3$-Leibniz algebra, which generalizes Kinyon's result in \cite{Kinyon}. Consequently, a 3-Leibniz algebra $\huaL$ gives rise to two rack structures on $\huaL\times\huaL$ and $\huaL\otimes \huaL$ respectively. We also show that there is a homomorphism between the corresponding set-theoretical solutions of the Yang-Baxter equation on $\huaL\times\huaL$ and $\huaL\otimes \huaL$.

\begin{defi}\cite{Casas}
A right {\bf $3$-Leibniz algebra} is a vector space $\huaL$ with a linear map $[\cdot,\cdot,\cdot]_\huaL:\huaL\otimes\huaL\otimes\huaL\rightarrow\huaL$ such that for any $x_1,x_2,x_3,y_1,y_2\in\huaL$, the following identity holds:
\begin{eqnarray}\label{3-Lei-equation}
&&[[x_1,x_2,x_3]_\huaL,y_1,y_2]_\huaL=[[x_1,y_1,y_2]_\huaL,x_2,x_3]_\huaL
+[x_1,[x_2,y_1,y_2]_\huaL,x_3]_\huaL+[x_1,x_2,[x_3,y_1,y_2]_\huaL]_\huaL.
\end{eqnarray}
In addition, if $[\cdot,\cdot,\cdot]_\huaL$ is skew-symmetric, then  $(\huaL,[\cdot,\cdot,\cdot]_\huaL)$ is a $3$-Lie algebra.
\end{defi}

\begin{defi}
Let $(\huaL_1,[\cdot,\cdot,\cdot]_{\huaL_1})$ and $(\huaL_2,[\cdot,\cdot,\cdot]_{\huaL_2})$ be two $3$-Leibniz algebras. A linear map $f:\huaL_1\rightarrow\huaL_2$ is a {\bf $3$-Leibniz algebra homomorphism} if
\begin{eqnarray*}
f[x,y,z]_{\huaL_1}=[f(x),f(y),f(z)]_{\huaL_2},\,\,\,\forall x,y,z\in\huaL_1.
\end{eqnarray*}
Moreover, $f$ is a {\bf $3$-Leibniz algebra isomorphism} if $f$ is bijective.
\end{defi}

\begin{rmk}
%A left $3$-Leibniz algebra is a vector space $\huaL$ with a linear map $[\cdot,\cdot,\cdot]_\huaL:\huaL\otimes\huaL\otimes\huaL\rightarrow\huaL$ such that for any $x_1,x_2,y_1,y_2,y_3\in\huaL$, the following identity holds:
%\begin{eqnarray*}
%&&[x_1,x_2,[y_1,y_2,y_3]_\huaL]_\huaL=[[x_1,x_2,y_1]_\huaL,y_2,y_3]_\huaL+[y_1,[x_1,x_2,y_2]_\huaL,y_3]_\huaL+[y_1,y_2,[x_1,x_2,y_3]_\huaL]_\huaL.
%\end{eqnarray*}
It is obvious that, given a right $3$-Leibniz algebra $(\huaL,[\cdot,\cdot,\cdot]_\huaL)$, then $(\huaL,[\cdot,\cdot,\cdot]'_\huaL)$ is a left $3$-Leibniz algebra, where $[x_1,x_2,x_3]'_\huaL=[x_3,x_2,x_1]_\huaL$ for all $x_1,x_2,x_3\in\huaL$. In this paper, we only consider right $3$-Leibniz algebras.
\end{rmk}

\begin{ex}\cite{Nambu,Yam}\label{ex-octonion}
Let $\BO$ be the octonions, one of the four normed division algebras (the
other three are $\R$, $\C$ and $\BH$), which can be viewed as an $8$-dimensional vector space over $\R$ with a basis $\{e_0,e_1,e_2,e_3,e_4,e_5,e_6,e_7\}$, where $e_0$ is  identified with the real number ${\rm 1}$. The multiplication of $\BO$ is non-associative and non-commutative, which is defined as follows:
\begin{center}
\begin{tabular}{|c|c|c|c|c|c|c|c|c|}
  \hline
  $e_ie_j$& $e_0$ & $e_1$ & $e_2$ & $e_3$ & $e_4$ & $e_5$ & $e_6$ & $e_7$\\
  \hline
  $e_0$ & $e_0$ & $e_1$ & $e_2$ & $e_3$ & $e_4$ & $e_5$ & $e_6$ & $e_7$\\
  \hline
   $e_1$ & $e_1$ & $-e_0$ & $e_4$ & $e_7$ & $-e_2$ & $e_6$ & $-e_5$ & $-e_3$\\
   \hline
   $e_2$ & $e_2$ & $-e_4$ & $-e_0$ & $e_5$ & $e_1$ & $-e_3$ & $e_7$ & $-e_6$\\
   \hline
  $e_3$ & $e_3$ & $-e_7$ & $-e_5$ & $-e_0$ & $e_6$ & $e_2$ & $-e_4$ & $e_1$\\
  \hline
  $e_4$ & $e_4$ & $e_2$ & $-e_1$ & $-e_6$ & $-e_0$ & $e_7$ & $e_3$ & $-e_5$\\
  \hline
  $e_5$ & $e_5$ & $-e_6$ & $e_3$ & $-e_2$ & $-e_7$ & $-e_0$ & $e_1$ & $e_4$\\
  \hline
  $e_6$ & $e_6$ & $e_5$ & $-e_7$ & $e_4$ & $-e_3$ & $-e_1$ & $-e_0$ & $e_2$\\
  \hline
  $e_7$ & $e_7$ & $e_3$ & $e_6$ & $-e_1$ & $e_5$ & $-e_4$ & $-e_2$ & $-e_0$\\
  \hline
\end{tabular}
\end{center}
Define a linear map $[\cdot,\cdot,\cdot]:\BO\otimes\BO\otimes\BO\to\BO$ by $[x,y,z]=z(yx)-y(zx)+(xy)z-(xz)y+(yx)z-y(xz)$. Then $(\BO,[\cdot,\cdot,\cdot])$ is a $3$-Leibniz algebra, but not a $3$-Lie algebra.
\end{ex}

\begin{ex}\cite{BZ}\label{ex-nil-3-lei}
Let $\huaL$ be a $3$-dimensional vector space with a basis $\{e_1,e_2,e_3\}$.
Then $\huaL$ equipped with the following linear map $[\cdot,\cdot,\cdot]_\huaL:\huaL\otimes\huaL\otimes\huaL\to\huaL$ is a $3$-Leibniz algebra:
$$[e_2,e_3,e_3]_\huaL=e_1,\,\,\,\,\,[e_3,e_3,e_3]_\huaL=e_2.$$
\end{ex}

\begin{pro}\label{3-lei-to-lei}\cite{Casas}
Let $(\huaL,[\cdot,\cdot,\cdot]_\huaL)$ be a $3$-Leibniz algebra. Then $(\huaL\otimes\huaL,\{\cdot,\cdot\})$ is a Leibniz algebra, where the Leibniz bracket on the fundamental objects is defined by
\begin{eqnarray}\label{fundamental-element-formula}
&&\{x_1\otimes x_2,y_1\otimes y_2\}
=[x_1,y_1,y_2]_\huaL\otimes x_2+x_1\otimes[x_2,y_1,y_2]_\huaL,\,\,\,
\forall x_1\otimes x_2,y_1\otimes y_2\in\huaL\otimes\huaL.
\end{eqnarray}
\end{pro}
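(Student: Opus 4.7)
The plan is to verify the Leibniz identity
\[
\{\{a,b\},c\} = \{\{a,c\},b\} + \{a,\{b,c\}\}
\]
directly on fundamental tensors $a = x_1\otimes x_2$, $b = y_1\otimes y_2$, $c = z_1\otimes z_2$, and then extend by bilinearity. Since the defining formula is linear in each tensor slot, bilinearity reduces the check to pure tensors, and the bracket $\{\cdot,\cdot\}$ extends uniquely by bilinearity to all of $\huaL\otimes\huaL$.

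First I would expand the left-hand side $\{\{x_1\otimes x_2,y_1\otimes y_2\},z_1\otimes z_2\}$. The inner bracket produces two pure tensors, and applying $\{\cdot, z_1\otimes z_2\}$ to each gives four terms: two with a doubly-nested 3-bracket on one tensor slot (and a bare $x_1$ or $x_2$ on the other), and two ``cross'' terms of the form $[x_1,y_1,y_2]_\huaL\otimes[x_2,z_1,z_2]_\huaL$ and $[x_1,z_1,z_2]_\huaL\otimes[x_2,y_1,y_2]_\huaL$. Expanding $\{\{a,c\},b\}$ the same way produces the same two cross terms, plus doubly-nested terms with $z_1,z_2$ innermost. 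Expanding $\{a,\{b,c\}\}$ produces four more doubly-nested terms (no new cross terms), because in each case the bare $x_1$ or $x_2$ is combined with a single 3-bracket.

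After cancelling the cross terms that appear identically on both sides, what remains is to check the equalities
\begin{align*}
[[x_1,y_1,y_2]_\huaL,z_1,z_2]_\huaL &= [[x_1,z_1,z_2]_\huaL,y_1,y_2]_\huaL + [x_1,[y_1,z_1,z_2]_\huaL,y_2]_\huaL + [x_1,y_1,[y_2,z_1,z_2]_\huaL]_\huaL,\\
[[x_2,y_1,y_2]_\huaL,z_1,z_2]_\huaL &= [[x_2,z_1,z_2]_\huaL,y_1,y_2]_\huaL + [x_2,[y_1,z_1,z_2]_\huaL,y_2]_\huaL + [x_2,y_1,[y_2,z_1,z_2]_\huaL]_\huaL,
\end{align*}
tensored on the appropriate sides with $x_2$ and $x_1$ respectively. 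Each of these is precisely the 3-Leibniz identity \eqref{3-Lei-equation} applied with $(x_1,x_2,x_3)$ replaced by $(x_i,y_1,y_2)$ and $(y_1,y_2)$ replaced by $(z_1,z_2)$; so both equalities hold.

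The only real subtlety is bookkeeping: making sure that exactly the two cross terms cancel between the two summands on the right-hand side and the one on the left, and that the pattern of ``which slot is $x_1$ and which is $x_2$'' is preserved so that the 3-Leibniz identity can be applied verbatim to each tensor factor. Once the expansion is arranged cleanly, the result reduces to two independent instances of \eqref{3-Lei-equation}, and no further structure on $\huaL$ is needed.
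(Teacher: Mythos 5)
Your proof is correct. The paper itself gives no argument for this proposition (it is quoted from Casas--Loday--Pirashvili \cite{Casas}), and your direct expansion is the standard verification: the two cross terms $[x_1,y_1,y_2]_\huaL\otimes[x_2,z_1,z_2]_\huaL$ and $[x_1,z_1,z_2]_\huaL\otimes[x_2,y_1,y_2]_\huaL$ do cancel against $\{\{a,c\},b\}$ exactly as you describe, and the remaining identity in each tensor slot is precisely \eqref{3-Lei-equation} with $(x_1,x_2,x_3)$ replaced by $(x_i,y_1,y_2)$ and the pair $(y_1,y_2)$ replaced by $(z_1,z_2)$.
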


\begin{pro}\label{lei-to-3-lei}\cite{Casas}
Let $(\huaE,[\cdot,\cdot]_\huaE)$ be a Leibniz algebra. Then $(\huaE,[\cdot,\cdot,\cdot]_\huaE)$ is a $3$-Leibniz algebra, where $[\cdot,\cdot,\cdot]_\huaE$ is defined as follows:
$$[x,y,z]_\huaE=[x,[y,z]_\huaE]_\huaE,\quad\forall x,y,z\in\huaE.$$
\end{pro}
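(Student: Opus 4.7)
The plan is to verify the $3$-Leibniz identity \eqref{3-Lei-equation} directly from the definition $[x,y,z]_\huaE=[x,[y,z]_\huaE]_\huaE$, using the binary Leibniz identity $[[u,v]_\huaE,w]_\huaE=[[u,w]_\huaE,v]_\huaE+[u,[v,w]_\huaE]_\huaE$ twice. First, I would rewrite both sides purely in terms of the binary bracket. Setting $a:=[y_1,y_2]_\huaE$ and $b:=[x_2,x_3]_\huaE$ for brevity, the left-hand side of \eqref{3-Lei-equation} becomes $[[x_1,b]_\huaE,a]_\huaE$, while the three summands on the right-hand side become, respectively, $[[x_1,a]_\huaE,b]_\huaE$, $[x_1,[[x_2,a]_\huaE,x_3]_\huaE]_\huaE$, and $[x_1,[x_2,[x_3,a]_\huaE]_\huaE]_\huaE$.

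Next, I would apply the binary Leibniz identity once to the outer bracket of the left-hand side, which yields
\begin{eqnarray*}
[[x_1,b]_\huaE,a]_\huaE=[[x_1,a]_\huaE,b]_\huaE+[x_1,[b,a]_\huaE]_\huaE.
\end{eqnarray*}
The first summand here is exactly the first right-hand-side term. For the residual term $[x_1,[b,a]_\huaE]_\huaE$, I would apply the Leibniz identity a second time to the inner bracket $[b,a]_\huaE=[[x_2,x_3]_\huaE,a]_\huaE$, which expands as $[[x_2,a]_\huaE,x_3]_\huaE+[x_2,[x_3,a]_\huaE]_\huaE$. Substituting this back and using the fact that $[x_1,\cdot]_\huaE$ is linear in the second slot recovers the remaining two right-hand-side terms, completing the verification.

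There is no real obstacle: the whole argument is a two-step unpacking in which the Leibniz rule is invoked once on the outer bracket and once on the inner bracket, with every term landing exactly where needed and no sign subtleties to negotiate. The only point requiring mild care is bookkeeping of the positions of the brackets. As an optional remark, I would note that Proposition~\ref{lei-to-3-lei} can also be seen as a special case of the well-known fact that any (right) Leibniz algebra carries an $n$-Leibniz bracket obtained by nested bracketing, and that it provides a partial converse to Proposition~\ref{3-lei-to-lei} at the level of the underlying vector space.
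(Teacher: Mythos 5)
Your verification is correct: with $a=[y_1,y_2]_\huaE$ and $b=[x_2,x_3]_\huaE$, one application of the right Leibniz identity to the outer bracket and one to the inner bracket $[b,a]_\huaE=[[x_2,x_3]_\huaE,a]_\huaE$ produces exactly the three terms of \eqref{3-Lei-equation}. The paper states this proposition without proof, citing \cite{Casas}, and your direct two-step computation is precisely the standard argument one finds there, so there is nothing to compare beyond noting that your proof fills in the omitted verification correctly.
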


\begin{ex}\label{ex-omni-3-lie}
The omni-Lie algebra \cite{Alan,LSW} associated to a vector space $V$ is $\gl(V)\oplus V$ equipped with the following bracket operation $\{\cdot,\cdot\}$:
$$\{(A,u),(B,v)\}=(-[A,B],Bu),\quad\forall (A,u),(B,v)\in\gl(V)\oplus V,$$
which is a Leibniz algebra. Then by {\rm Proposition \ref{lei-to-3-lei}}, we obtain a $3$-Leibniz algebra $(\gl(V)\oplus  V,[\cdot,\cdot,\cdot])$, where
$$[(A,u ),(B,v ),(C,w )]=([A,[B,C]],-[B,C]u),$$
for any $(A,u),(B,v),(C,w)\in\gl(V)\oplus V$.
\end{ex}

\begin{defi}\cite{Biyogmam}
Let $X$ be a set and $T:X\times X\times X\rightarrow X$ a ternary map.
\begin{itemize}
  \item A {\bf $3$-shelf} is a pair $(X,T)$ satisfying the ternary self-distributive law:
  \begin{eqnarray*}\label{TSD-formula-set}
  &&T(T(x_{1},x_{2},x_{3}),y_{1},y_{2})
   =T(T(x_{1},y_{1},y_{2}),T(x_{2},y_{1},y_{2}),T(x_{3},y_{1},y_{2})),\,\forall x_{1},x_{2},x_{3},y_{1},y_{2}\in X.
  \end{eqnarray*}
  \item A {\bf $3$-rack} is a $3$-shelf $(X,T)$ such that for any $x,y\in X$, the map $T(\cdot,x,y):X\rightarrow X$ sending $z\in X$ to $T(z,x,y)$ is bijective.
\end{itemize}
\end{defi}

\begin{ex}\cite{Biyogmam}
  Let $(G,\cdot)$ be a group. Define a ternary map $T:G\times G\times G\rightarrow G$ by $T(g_1,g_2,g_3)={g_3}^{-1}{g_2}^{-1}{g_1}g_2g_3$ for any $g_1,g_2,g_3\in G$. Then $(G,T)$ is a $3$-rack.

  \end{ex}
  \begin{ex}\cite{Biyogmam}
  Let $M$ be a ${\rm Z_4}$-module. Define a ternary map $T:M\times M\times M\to M$ by $T(m_1,m_2,m_3)=m_1+2m_2+2m_3$ for any $m_1,m_2,m_3\in M$. Then $(M,T)$ is a $3$-rack.
\end{ex}

The following lemmas show the close relationship between racks and $3$-racks.
\begin{lem}\label{rack-to-3-rack}\cite{EGM}
Let $(X,\lhd)$ be a rack. Define a ternary map $T:X\times X\times X\rightarrow X$ by
\begin{eqnarray*}\label{rack-to-3-rack-formula}
T^\lhd(x,y,z)=x\lhd(y\lhd z),\,\,\,\forall x,y,z\in X.
\end{eqnarray*}
Then $(X,T^\lhd)$ is a $3$-rack.
\end{lem}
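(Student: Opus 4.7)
The plan is to verify the two defining conditions of a $3$-rack for the ternary operation $T^\lhd(x,y,z) = x \lhd (y \lhd z)$ by directly reducing them to the rack axioms of $(X,\lhd)$.

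First I would check the ternary self-distributive law. Expanding the left-hand side of the desired identity gives
\[
T^\lhd(T^\lhd(x_1,x_2,x_3),y_1,y_2) = \bigl(x_1 \lhd (x_2 \lhd x_3)\bigr) \lhd (y_1 \lhd y_2),
\]
while the right-hand side becomes
\[
\bigl(x_1 \lhd (y_1\lhd y_2)\bigr) \lhd \Bigl(\bigl(x_2 \lhd (y_1 \lhd y_2)\bigr) \lhd \bigl(x_3 \lhd (y_1 \lhd y_2)\bigr)\Bigr).
\]
The strategy is to apply binary self-distributivity twice on the LHS: first with the associated terms $a=x_1$, $b=x_2\lhd x_3$, $c=y_1\lhd y_2$ to pull $y_1\lhd y_2$ inside, and then apply self-distributivity again to $(x_2\lhd x_3)\lhd(y_1\lhd y_2)$. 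Both applications are immediate from the self-distributive law of the rack, and the result matches the RHS verbatim.

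Next I would verify bijectivity of the maps $T^\lhd(\,\cdot\,,x,y): X\to X$ for every $x,y\in X$. By definition this map sends $z \mapsto z \lhd (x \lhd y)$, which is precisely the binary translation $\cdot\lhd w$ for $w = x\lhd y$. Since $(X,\lhd)$ is a rack, every such translation is a bijection, so the desired condition follows at once.

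I do not anticipate any serious obstacle: the argument is essentially a two-line algebraic manipulation using only the self-distributivity of $\lhd$, followed by a one-line reduction of bijectivity. The only minor bookkeeping point is to be careful about the order in which the two applications of binary self-distributivity are made on the LHS, so that the bracketing on the resulting expression matches the RHS exactly.
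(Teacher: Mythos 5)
Your argument is correct and complete: the two successive applications of the binary self-distributive law (first pulling $y_1\lhd y_2$ inside, then distributing over $x_2\lhd x_3$) give exactly the ternary self-distributive identity, and bijectivity of $T^\lhd(\cdot,x,y)$ reduces to bijectivity of the translation $\cdot\lhd(x\lhd y)$. The paper itself quotes this lemma from \cite{EGM} without proof, and your reasoning is the standard derivation one would expect there.
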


\begin{lem}\label{3-rack-to-rack}\cite{ESZ}
Let $(X,T)$ be a $3$-rack. Define a binary map $\lhd_{T}:(X\times X)\times(X\times X)\to X\times X$ by
\begin{eqnarray}\label{3-rack-to-rack-formula}
(x_1,x_2)\lhd_{T}(y_1,y_2)=
(T(x_1,y_1,y_2),T(x_2,y_1,y_2)),\,\,\,\forall (x_1,x_2),(y_1,y_2)\in X\times X.
\end{eqnarray}
Then $(X\times X,\lhd_{T})$ is a rack.
\end{lem}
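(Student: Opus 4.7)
The plan is to verify the two rack axioms directly from the definition of $\lhd_T$, with both following essentially immediately from the corresponding axioms for the $3$-rack $(X,T)$ applied coordinate-wise.

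First I would check the self-distributive law. Fix arbitrary elements $(x_1,x_2),(y_1,y_2),(z_1,z_2)\in X\times X$. Expanding the left-hand side using \eqref{3-rack-to-rack-formula} twice gives
\begin{eqnarray*}
\bigl((x_1,x_2)\lhd_{T}(y_1,y_2)\bigr)\lhd_{T}(z_1,z_2)
=\bigl(T(T(x_1,y_1,y_2),z_1,z_2),\,T(T(x_2,y_1,y_2),z_1,z_2)\bigr),
\end{eqnarray*}
while expanding the right-hand side gives a pair whose $i$-th coordinate is
\begin{eqnarray*}
T\bigl(T(x_i,z_1,z_2),\,T(y_1,z_1,z_2),\,T(y_2,z_1,z_2)\bigr),\qquad i=1,2.
\end{eqnarray*}
The ternary self-distributive law for $T$, applied once with $x_1=x_i$, $x_2=z_1$, $x_3=z_2$ playing no role and with $y_1,y_2$ replaced by $z_1,z_2$ (more precisely, applied in the form that says $T(T(x,y_1,y_2),z_1,z_2)=T(T(x,z_1,z_2),T(y_1,z_1,z_2),T(y_2,z_1,z_2))$ for each single argument $x$), equates the two sides coordinate by coordinate.

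Second I would verify the bijectivity condition. For fixed $(y_1,y_2)\in X\times X$, the map $\cdot\lhd_T(y_1,y_2)\colon X\times X\to X\times X$ is the product map $T(\cdot,y_1,y_2)\times T(\cdot,y_1,y_2)$. Since $(X,T)$ is a $3$-rack, the single-variable map $T(\cdot,y_1,y_2)\colon X\to X$ is a bijection, hence so is its Cartesian square.

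There is no real obstacle here: the verification is purely a symbol-pushing exercise unpacking \eqref{3-rack-to-rack-formula}. The only point worth noting is that the $3$-rack axiom is applied separately in each of the two coordinates of $X\times X$, which is why the pair structure is compatible with the single ternary self-distributive identity of $T$. No further structure on $X$ is used.
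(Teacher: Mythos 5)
Your proof is correct. The paper itself gives no proof of this lemma---it is quoted from \cite{ESZ}---and your direct coordinate-wise verification is exactly the standard argument: the instance $T(T(x,y_1,y_2),z_1,z_2)=T(T(x,z_1,z_2),T(y_1,z_1,z_2),T(y_2,z_1,z_2))$ of the ternary self-distributive law (with $x_1\mapsto x$, $x_2\mapsto y_1$, $x_3\mapsto y_2$ and the outer pair $z_1,z_2$) gives each coordinate of the rack identity, and bijectivity of $\cdot\lhd_T(y_1,y_2)$ is the Cartesian square of the bijection $T(\cdot,y_1,y_2)$. Your first description of the substitution is garbled, but the ``more precisely'' clause states the correct instance, so nothing is missing.
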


There is naturally a rack structure on a Leibniz algebra.
\begin{thm}\cite{Kinyon}\label{lei to rack thm}
Let $(\huaE,[\cdot,\cdot]_\huaE)$ be a Leibniz algebra. For any $x\in\huaE$, define $\ad^R_x:\huaE\rightarrow\huaE$ by
\begin{eqnarray*}
\ad^R_xy=[y,x]_\huaE,\,\,\,\forall y\in\huaE.\label{lei-to-rack-formula}
\end{eqnarray*}
Then $(\huaE,\lhd)$ is a rack, where $x\lhd y=\exp(\ad^R_y)(x)$ and $\exp(\ad^R_y)=\sum\limits_{k=0}^{+\infty}\frac{(\ad^R_y)^k}{k!}$.
\end{thm}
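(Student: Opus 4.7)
The plan is to prove the two defining properties of a rack in turn: bijectivity of right multiplication $\cdot\lhd y$, and self-distributivity. Bijectivity is essentially formal: since $\ad^R_y$ is a linear endomorphism of $\huaE$, the series $\exp(\ad^R_y)=\sum_{k\ge 0}(\ad^R_y)^k/k!$ and $\exp(-\ad^R_y)$ are mutually inverse (taken formally, or assuming suitable nilpotency/completion so that the sums converge; in the finite-dimensional setting considered here one restricts to the appropriate class where this is legitimate). Their composition in either order is the identity by the usual Cauchy-product argument, so $\cdot\lhd y=\exp(\ad^R_y)$ is a bijection.

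The heart of the proof is the self-distributivity. The first step is to rewrite the right Leibniz identity $[[x,y]_\huaE,z]_\huaE=[[x,z]_\huaE,y]_\huaE+[x,[y,z]_\huaE]_\huaE$ in the equivalent form
\begin{eqnarray*}
\ad^R_z[x,y]_\huaE=[\ad^R_z x,y]_\huaE+[x,\ad^R_z y]_\huaE,
\end{eqnarray*}
i.e.\ $\ad^R_z$ is a derivation of the Leibniz bracket. By induction on $k$ one obtains the Leibniz rule $(\ad^R_z)^k[x,y]_\huaE=\sum_{i+j=k}\binom{k}{i}[(\ad^R_z)^i x,(\ad^R_z)^j y]_\huaE$, which after dividing by $k!$ and summing yields the automorphism property
\begin{eqnarray*}
\exp(\ad^R_z)[x,y]_\huaE=[\exp(\ad^R_z)x,\exp(\ad^R_z)y]_\huaE.
\end{eqnarray*}

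From this automorphism property I derive the conjugation identity $\exp(\ad^R_z)\,\ad^R_y\,\exp(-\ad^R_z)=\ad^R_{\exp(\ad^R_z)y}$, by evaluating both sides on an arbitrary $w\in\huaE$ and using that $\exp(\ad^R_z)$ preserves the bracket:
\begin{eqnarray*}
\exp(\ad^R_z)\bigl[\exp(-\ad^R_z)w,\,y\bigr]_\huaE=\bigl[w,\,\exp(\ad^R_z)y\bigr]_\huaE.
\end{eqnarray*}
Exponentiating this relation gives $\exp(\ad^R_z)\exp(\ad^R_y)\exp(-\ad^R_z)=\exp(\ad^R_{\exp(\ad^R_z)y})=\exp(\ad^R_{y\lhd z})$, and therefore
\begin{eqnarray*}
(x\lhd y)\lhd z=\exp(\ad^R_z)\exp(\ad^R_y)(x)=\exp(\ad^R_{y\lhd z})\exp(\ad^R_z)(x)=(x\lhd z)\lhd(y\lhd z),
\end{eqnarray*}
which is the self-distributive law. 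Combined with the bijectivity established above, this shows $(\huaE,\lhd)$ is a rack.

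The main obstacle is analytic rather than algebraic: the identity $\exp(\ad^R_z)\,\ad^R_y\,\exp(-\ad^R_z)=\ad^R_{\exp(\ad^R_z)y}$ and its exponentiation require enough control on the series to multiply and rearrange them, so one must either restrict attention to (locally) nilpotent Leibniz algebras, work in a pro-nilpotent completion, or interpret the exponentials formally. Once this is granted, the purely algebraic content is precisely the observation that the Leibniz identity makes $\ad^R_z$ a derivation, and an exponentiated derivation is an automorphism whose conjugation action on inner derivations is again inner — which is exactly what self-distributivity asserts at the level of the binary operation $\lhd$.
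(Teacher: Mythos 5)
Your proof is correct and follows exactly the strategy the paper uses for its $3$-ary generalization (Theorem \ref{from-3-Lei-to-3-rack-thm}, via Lemmas \ref{exp(D)} and \ref{exp and alpha}): the Leibniz identity makes $\ad^R_z$ a derivation, its exponential is an automorphism and hence inverts to give bijectivity, and the conjugation identity $\exp(\ad^R_z)\,\ad^R_y\,\exp(-\ad^R_z)=\ad^R_{\exp(\ad^R_z)y}$ yields self-distributivity. Your caveat about convergence of the exponential series is also consistent with the paper's own (tacit) treatment, which invokes continuity and finite-dimensionality in Lemma \ref{exp and alpha}.
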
\noindent

The goal of this section is to construct a $3$-rack structure on a $3$-Leibniz algebra generalizing Theorem \ref{lei to rack thm}. We need some preparations.

\begin{defi}
Let $(\huaL,[\cdot,\cdot,\cdot]_\huaL)$ be a $3$-Leibniz algebra. A linear map $\huaD:\huaL\rightarrow\huaL$ is called a {\bf derivation} of $\huaL$ if it satisfies:
\begin{eqnarray*}
\huaD[x,y,z]_\huaL=[\huaD x,y,z]_\huaL+[x,\huaD y,z]_\huaL+[x,y,\huaD z]_\huaL,\,\,\,\forall x,y,z\in\huaL.
\end{eqnarray*}
\end{defi}

\begin{ex}
Let $(\huaL,[\cdot,\cdot,\cdot]_\huaL)$ be a $3$-Leibniz algebra. For any $x_1,x_2\in\huaL$, define a linear map $\ad^{R}_{x_1,x_2}:\huaL\rightarrow\huaL$ by
\begin{eqnarray*}
\ad^{R}_{x_1,x_2}(y)=[y,x_1,x_2]_\huaL,\,\,\,\forall y\in\huaL.
\end{eqnarray*}
It is obvious that $\ad^{R}_{x_1,x_2}$ is a derivation of $\huaL$ by \eqref{3-Lei-equation}.
\end{ex}

\begin{lem}\label{exp(D)}
Let $(\huaL,[\cdot,\cdot,\cdot]_\huaL)$ be a $3$-Leibniz algebra and $\huaD:\huaL\rightarrow\huaL$ a derivation. Then $\exp(\huaD)=\sum_{k=0}^{+\infty}\frac{1}{k!}\huaD^k$ is a $3$-Leibniz algebra isomorphism of $\huaL$.
\end{lem}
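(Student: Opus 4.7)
The plan is to prove two things: (i) $\exp(\huaD)$ is compatible with the ternary bracket, i.e.\ $\exp(\huaD)[x,y,z]_\huaL=[\exp(\huaD)x,\exp(\huaD)y,\exp(\huaD)z]_\huaL$, and (ii) $\exp(\huaD)$ is bijective. Since the underlying vector space is finite-dimensional over a characteristic-zero field, convergence of the exponential series is not an issue, and bijectivity will follow immediately from the fact that $\exp(-\huaD)$ is a two-sided inverse (using $\exp(\huaD)\circ\exp(-\huaD)=\Id=\exp(-\huaD)\circ\exp(\huaD)$, which holds because $\huaD$ commutes with itself).

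The substantive content is (i). The key intermediate step is a multinomial Leibniz-type formula for iterates of a derivation:
\begin{eqnarray*}
\huaD^n[x,y,z]_\huaL=\sum_{i+j+k=n}\frac{n!}{i!\,j!\,k!}\,[\huaD^i x,\huaD^j y,\huaD^k z]_\huaL,\qquad\forall n\geq 0.
\end{eqnarray*}
I would establish this by induction on $n$. The base case $n=0$ is trivial. For the inductive step, apply $\huaD$ to both sides of the formula for $n$ and use the derivation identity $\huaD[a,b,c]_\huaL=[\huaD a,b,c]_\huaL+[a,\huaD b,c]_\huaL+[a,b,\huaD c]_\huaL$ termwise; re-indexing the three resulting sums and combining via the Pascal-type identity $\binom{n}{i-1,j,k}+\binom{n}{i,j-1,k}+\binom{n}{i,j,k-1}=\binom{n+1}{i,j,k}$ yields the formula for $n+1$.

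Once the multinomial formula is in hand, assertion (i) is immediate: dividing by $n!$ and summing over $n\geq 0$,
\begin{eqnarray*}
\exp(\huaD)[x,y,z]_\huaL
&=&\sum_{n=0}^{+\infty}\frac{1}{n!}\huaD^n[x,y,z]_\huaL
=\sum_{n=0}^{+\infty}\sum_{i+j+k=n}\frac{1}{i!\,j!\,k!}[\huaD^i x,\huaD^j y,\huaD^k z]_\huaL\\
&=&\sum_{i,j,k\geq 0}\frac{[\huaD^i x,\huaD^j y,\huaD^k z]_\huaL}{i!\,j!\,k!}
=[\exp(\huaD)x,\exp(\huaD)y,\exp(\huaD)z]_\huaL,
\end{eqnarray*}
where the last equality uses trilinearity of $[\cdot,\cdot,\cdot]_\huaL$ to factor the triple sum.

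The only mild obstacle is bookkeeping in the inductive step for the multinomial identity, but this is routine. Everything else, including bijectivity, is automatic in finite dimensions.
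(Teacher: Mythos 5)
Your proposal is correct and follows essentially the same route as the paper: both establish the multinomial Leibniz formula for $\huaD^k[x,y,z]_\huaL$ by induction on the derivation identity, then divide by $k!$ and sum to identify $\exp(\huaD)[x,y,z]_\huaL$ with $[\exp(\huaD)x,\exp(\huaD)y,\exp(\huaD)z]_\huaL$, and both obtain bijectivity from $\exp(-\huaD)$ being the inverse. Your write-up is in fact slightly more explicit than the paper's, which leaves the inductive step (your Pascal-type multinomial identity) unstated.
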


\begin{proof}
For any $x,y,z\in\huaL,$ the equality $\huaD[x,y,z]_\huaL=[\huaD x,y,z]_\huaL+[x,\huaD y,z]_\huaL+[x,y,\huaD z]_\huaL$ leads by induction to
\begin{eqnarray*}
\frac{1}{k!}\huaD^k[x,y,z]_\huaL=
\sum_{\substack{i_1+i_2+i_3=k\\ i_1,i_2,i_3\geqslant0}}
\Big[\frac{\huaD^{i_1}(x)}{i_{1}!},\frac{\huaD^{i_2}(y)}{i_{2}!},\frac{\huaD^{i_3}(z)}{i_{3}!}\Big]_\huaL.
\end{eqnarray*}
Then we obtain
\begin{eqnarray*}
[\exp(\huaD)(x),\exp(\huaD)(y),\exp(\huaD)(z)]_\huaL
&=&\Big[\sum_{i_1=0}^{+\infty}\frac{\huaD^{i_1}(x)}{i_1!},\sum_{i_2=0}^{+\infty}\frac{\huaD^{i_2}(y)}{i_2!},\sum_{i_3=0}^{+\infty}\frac{\huaD^{i_3}(z)}{i_3!}\Big]_\huaL\\
&=&\sum_{k=0}^{+\infty}\sum_{\substack{i_1+i_2+i_3=k\\ i_1,i_2,i_3\geqslant0}}
\Big[\frac{\huaD^{i_1}(x)}{i_1!},\frac{\huaD^{i_2}(y)}{i_2!},\frac{\huaD^{i_3}(z)}{i_3!}\Big]_\huaL\\
&=&\sum_{k=0}^{+\infty}\frac{1}{k!}\huaD^k[x,y,z]_\huaL\\
&=&\exp(\huaD)[x,y,z]_\huaL.
\end{eqnarray*}
Obviously, $\exp(-\huaD)$ is the inverse of $\exp(\huaD)$. Thus, $\exp(\huaD)$ is an isomorphism.
\end{proof}

\begin{lem}\label{exp and alpha}
Let $(\huaL,[\cdot,\cdot,\cdot]_\huaL)$ be a finite-dimensional $3$-Leibniz algebra and $\alpha:\huaL\rightarrow\huaL$ an automorphism. Then for any $x,y\in\huaL$, we have
\begin{eqnarray*}
\alpha\circ\exp(\ad^{R}_{x,y})\circ\alpha^{-1}=\exp(\ad^{R}_{\alpha(x),\alpha(y)}).
\end{eqnarray*}
\end{lem}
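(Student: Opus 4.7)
The plan is to reduce the exponential identity to its infinitesimal counterpart, namely the conjugation formula $\alpha\circ\ad^{R}_{x,y}\circ\alpha^{-1}=\ad^{R}_{\alpha(x),\alpha(y)}$, and then propagate it through the power series termwise.

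First I would verify the linear (non-exponentiated) version. For any $z\in\huaL$, using that $\alpha$ is a $3$-Leibniz algebra automorphism,
\begin{eqnarray*}
\alpha\circ\ad^{R}_{x,y}\circ\alpha^{-1}(z)
&=&\alpha\bigl[\alpha^{-1}(z),x,y\bigr]_\huaL
=\bigl[\alpha(\alpha^{-1}(z)),\alpha(x),\alpha(y)\bigr]_\huaL\\
&=&[z,\alpha(x),\alpha(y)]_\huaL
=\ad^{R}_{\alpha(x),\alpha(y)}(z).
\end{eqnarray*}
Hence $\alpha\circ\ad^{R}_{x,y}\circ\alpha^{-1}=\ad^{R}_{\alpha(x),\alpha(y)}$.

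Next I would iterate. Since conjugation by $\alpha$ is an algebra homomorphism of $\End(\huaL)$ with respect to composition, a simple induction yields
\begin{eqnarray*}
\alpha\circ(\ad^{R}_{x,y})^{k}\circ\alpha^{-1}
=\bigl(\alpha\circ\ad^{R}_{x,y}\circ\alpha^{-1}\bigr)^{k}
=\bigl(\ad^{R}_{\alpha(x),\alpha(y)}\bigr)^{k}
\end{eqnarray*}
for every $k\geqslant 0$ (with the convention that the zeroth power is the identity). Dividing by $k!$ and summing over $k$, one obtains
\begin{eqnarray*}
\alpha\circ\exp(\ad^{R}_{x,y})\circ\alpha^{-1}
=\sum_{k=0}^{+\infty}\frac{1}{k!}\,\alpha\circ(\ad^{R}_{x,y})^{k}\circ\alpha^{-1}
=\sum_{k=0}^{+\infty}\frac{1}{k!}\bigl(\ad^{R}_{\alpha(x),\alpha(y)}\bigr)^{k}
=\exp(\ad^{R}_{\alpha(x),\alpha(y)}),
\end{eqnarray*}
which is the required identity. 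The interchange of $\alpha$, $\alpha^{-1}$ with the infinite sum is legitimate since $\huaL$ is finite-dimensional, so $\End(\huaL)$ is finite-dimensional and the exponential series converges in any norm, while pre/post-composition with the fixed linear maps $\alpha,\alpha^{-1}$ is continuous.

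There is no serious obstacle here; the only point requiring a moment's care is the interchange of the automorphism with the infinite sum, which is handled by the finite-dimensionality hypothesis on $\huaL$. Everything else is a direct consequence of the fact that $\alpha$ preserves the ternary bracket, combined with the elementary observation that conjugation by an invertible linear map respects composition of endomorphisms.
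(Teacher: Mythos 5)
Your proposal is correct and follows essentially the same route as the paper's own proof: first establishing the conjugation identity $\alpha\circ\ad^{R}_{x,y}\circ\alpha^{-1}=\ad^{R}_{\alpha(x),\alpha(y)}$ from the fact that $\alpha$ preserves the bracket, then iterating it through the powers, and finally passing to the limit of the exponential series using finite-dimensionality and continuity of composition with $\alpha$. No gaps; this matches the paper's argument step for step.
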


\begin{proof}
For any $z\in\huaL$, we have
\begin{eqnarray*}
(\alpha\circ\ad^{R}_{x,y}\circ\alpha^{-1})(z)
=\alpha[\alpha^{-1}(z),x,y]_\huaL
=[z,\alpha(x),\alpha(y)]_\huaL
=\ad^{R}_{\alpha(x),\alpha(y)}(z).
\end{eqnarray*}
Since $\huaL$ is finite-dimensional, we have
\begin{eqnarray*}
\alpha\circ\sum_{k=0}^{N}\frac{(\ad^{R}_{x,y})^{k}}{k!}\circ\alpha^{-1}
=\sum_{k=0}^{N}\frac{\alpha\circ(\ad^{R}_{x,y})^{k}\circ\alpha^{-1}}{k!}
=\sum_{k=0}^{N}\frac{(\alpha\circ\ad^{R}_{x,y}\circ\alpha^{-1})^{k}}{k!}
=\sum_{k=0}^{N}\frac{(\ad^{R}_{\alpha(x),\alpha(y)})^{k}}{k!}.
\end{eqnarray*}
Notice that composition with $\alpha$ is continuous, thus we have
\begin{eqnarray*}
\alpha\circ\exp(\ad^{R}_{x,y})\circ\alpha^{-1}
&=&\alpha\circ\sum_{k=0}^{{+\infty}}\frac{(\ad^{R}_{x,y})^{k}}{k!}\circ\alpha^{-1}
=\alpha\circ\lim_{N\rightarrow {+\infty}}\sum_{k=0}^{N}\frac{(\ad^{R}_{x,y})^{k}}{k!}\circ\alpha^{-1}\\
&=&\lim_{N\rightarrow {+\infty}}\alpha\circ\sum_{k=0}^{N}\frac{(\ad^{R}_{x,y})^{k}}{k!}\circ\alpha^{-1}
=\lim_{N\rightarrow {+\infty}}\sum_{k=0}^{N}\frac{(\ad^{R}_{\alpha(x),\alpha(y)})^{k}}{k!}\\
&=&\sum_{k=0}^{+\infty}\frac{(\ad^{R}_{\alpha(x),\alpha(y)})^{k}}{k!}
=\exp(\ad^{R}_{\alpha(x),\alpha(y)}),
\end{eqnarray*}
which finishes the proof.\end{proof}

\begin{thm}\label{from-3-Lei-to-3-rack-thm}
Let $(\huaL,[\cdot,\cdot,\cdot]_\huaL)$ be a finite-dimensional $3$-Leibniz algebra. Define a ternary map $T:\huaL\times\huaL\times\huaL\rightarrow\huaL$ by
\begin{eqnarray}\label{from-3-Lei-to-3-rack-formula}
T(x,y,z)=\exp(\ad^{R}_{y,z})(x),\,\,\,\forall x,y,z\in\huaL.
\end{eqnarray}
Then $(\huaL,T)$ is a $3$-rack.
\end{thm}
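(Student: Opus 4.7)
The plan is to verify the two defining conditions of a $3$-rack directly from the formula $T(\cdot,y,z)=\exp(\ad^R_{y,z})$, using the two preparatory lemmas (\ref{exp(D)} and \ref{exp and alpha}) as the main engines.

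\textbf{Step 1: Bijectivity of $T(\cdot,y,z)$.} Since $\ad^R_{y,z}$ is a derivation of $\huaL$, Lemma \ref{exp(D)} tells us that $\exp(\ad^R_{y,z})$ is a $3$-Leibniz algebra automorphism, with explicit inverse $\exp(-\ad^R_{y,z})$. In particular it is a bijection on $\huaL$, which gives the second axiom.

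\textbf{Step 2: Ternary self-distributivity.} I will compute both sides of the identity
\begin{eqnarray*}
T(T(x_1,x_2,x_3),y_1,y_2)=T(T(x_1,y_1,y_2),T(x_2,y_1,y_2),T(x_3,y_1,y_2))
\end{eqnarray*}
and match them. Set $\alpha:=\exp(\ad^R_{y_1,y_2})$, which by Step 1 is a $3$-Leibniz algebra automorphism of $\huaL$. The left-hand side unfolds as
\begin{eqnarray*}
T(T(x_1,x_2,x_3),y_1,y_2)=\alpha\bigl(\exp(\ad^R_{x_2,x_3})(x_1)\bigr),
\end{eqnarray*}
while the right-hand side unfolds as
\begin{eqnarray*}
T(\alpha(x_1),\alpha(x_2),\alpha(x_3))=\exp\bigl(\ad^R_{\alpha(x_2),\alpha(x_3)}\bigr)\bigl(\alpha(x_1)\bigr).
\end{eqnarray*}
Applying Lemma \ref{exp and alpha} with this $\alpha$ gives
\begin{eqnarray*}
\exp\bigl(\ad^R_{\alpha(x_2),\alpha(x_3)}\bigr)=\alpha\circ\exp(\ad^R_{x_2,x_3})\circ\alpha^{-1},
\end{eqnarray*}
so the right-hand side equals $\alpha\circ\exp(\ad^R_{x_2,x_3})\circ\alpha^{-1}\circ\alpha(x_1)=\alpha\bigl(\exp(\ad^R_{x_2,x_3})(x_1)\bigr)$, which matches the left-hand side.

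\textbf{Main obstacle.} There is essentially no obstacle: all the technical work (commuting $\exp$ with automorphisms, and showing $\exp$ of a derivation is an automorphism in finite dimension) has been isolated in Lemmas \ref{exp(D)} and \ref{exp and alpha}. The only thing to keep straight is that it is crucial to apply the conjugation identity with $\alpha=\exp(\ad^R_{y_1,y_2})$ (the automorphism coming from the ``outer'' pair $(y_1,y_2)$), rather than with the operators coming from the inner triple. Once $\alpha$ is chosen correctly, self-distributivity reduces to a one-line conjugation rearrangement. The finite-dimensionality hypothesis is only needed indirectly, through Lemma \ref{exp and alpha}, to guarantee that the series $\exp(\ad^R_{y,z})$ converges and commutes with $\alpha$ termwise.
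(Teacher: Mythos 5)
Your proposal is correct and follows essentially the same route as the paper: bijectivity from Lemma \ref{exp(D)}, and self-distributivity by conjugating $\exp(\ad^R_{x_2,x_3})$ with $\alpha=\exp(\ad^R_{y_1,y_2})$ via Lemma \ref{exp and alpha} (the paper inserts $\alpha^{-1}\alpha$ into the left-hand side, you rewrite the right-hand side; these are the same computation read in opposite directions).
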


\begin{proof}
For any $x_{1},x_{2},x_{3},y_{1},y_{2}\in\huaL$, we have
\begin{eqnarray*}
 &&T(T(x_{1},x_{2},x_{3}),y_{1},y_{2})\\
 &=&T(\exp(\ad^{R}_{x_2,x_3})(x_1),y_1,y_2)\\
 &=&\exp(\ad^{R}_{y_1,y_2})\exp(\ad^{R}_{x_2,x_3})(x_1)\\
 &=&\exp(\ad^{R}_{y_1,y_2})\exp(\ad^{R}_{x_2,x_3})
    \exp(\ad^{R}_{y_1,y_2})^{-1}\exp(\ad^{R}_{y_1,y_2})(x_1)\\
 &=&\exp(\ad^{R}_{\exp(\ad^{R}_{y_1,y_2})(x_2),\exp(\ad^{R}_{y_1,y_2})(x_3)})
    \exp(\ad^{R}_{y_1,y_2})(x_1)
    \quad\quad\quad\quad{\rm (by~Lemma~\ref{exp and alpha})}\\
 &=&\exp(\ad^{R}_{T(x_2,y_1,y_2),T(x_3,y_1,y_2)})
    T(x_1,y_1,y_2)\\
 &=&T(T(x_1,y_1,y_2),T(x_2,y_1,y_2),T(x_3,y_1,y_2)).
\end{eqnarray*}
Moreover, $T(\cdot,x,y)=\exp(\ad^{R}_{x,y})$ is bijective by Lemma \ref{exp(D)}. Thus, $(\huaL,T)$ is a $3$-rack.
\end{proof}

\begin{ex}
Consider the $3$-Leibniz algebra $(\huaL,[\cdot,\cdot,\cdot]_\huaL)$ with a basis $\{e_1,e_2,e_3\}$ given in {\rm Example \ref{ex-nil-3-lei}}. By {\rm Theorem \ref{from-3-Lei-to-3-rack-thm}}, we obtain a $3$-rack $(\huaL,T)$, where for any $x,y,z\in\{e_1,e_2,e_3\}$,
\begin{align*}
\begin{split}
T(x,y,z)=\left\{
\begin{array}{ll}
\frac{1}{2}e_1+e_2+e_3, & if~x=y=z=e_3,\\
e_1+e_2, & if~x=e_2~and~y=z=e_3,\\
x, & otherwise.
\end{array}
\right.
\end{split}
\end{align*}
\end{ex}

At the end of this section, we sort out the interconnection between Leibniz algebras, $3$-Leibniz algebras, racks and $3$-racks. It is obvious that there exist two ways from $3$-Leibniz algebras to racks.
On the one hand, we can construct a $3$-rack $(\huaL,T)$ by Theorem \ref{from-3-Lei-to-3-rack-thm} and then obtain a rack $(\huaL\times\huaL,\lhd_{T})$ by Lemma \ref{3-rack-to-rack}. On the other hand, there is a Leibniz algebra $(\huaL\otimes\huaL,\{\cdot,\cdot\})$ induced by the $3$-Leibniz algebra $\huaL$ given in Proposition \ref{3-lei-to-lei}, which induces a rack $(\huaL\otimes\huaL,\lhd)$ by Theorem \ref{lei to rack thm}. Moreover, as shown in \cite{AG}, racks naturally give rise to set-theoretical solutions of the Yang-Baxter equation.
\begin{thm}[\cite{AG}]\label{rack-set-YBO}
Let $X$ be a set and $\lhd:X\times X\to X$ a binary operator. Define
$$R:X\times X\to X\times X,\,\,\,\,\,\,R(x,y)=(y,x\lhd y),$$
then $R$ is a set-theoretical solution  of the Yang-Baxter equation if and only if $(X,\lhd)$ is a rack.\end{thm}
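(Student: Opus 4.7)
The plan is to prove both implications by direct computation, applying the two compositions on each side of the Yang-Baxter equation to a generic triple $(x,y,z) \in X \times X \times X$ and comparing the resulting triples componentwise.

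First I would compute $(R \times \Id)(\Id \times R)(R \times \Id)(x,y,z)$ step by step:
\begin{eqnarray*}
(x,y,z) &\xmapsto{R\times\Id}& (y, x\lhd y, z) \\
&\xmapsto{\Id\times R}& (y, z, (x\lhd y)\lhd z) \\
&\xmapsto{R\times\Id}& (z, y\lhd z, (x\lhd y)\lhd z).
\end{eqnarray*}
Then I would compute the other side $(\Id\times R)(R\times \Id)(\Id\times R)(x,y,z)$:
\begin{eqnarray*}
(x,y,z) &\xmapsto{\Id\times R}& (x, z, y\lhd z) \\
&\xmapsto{R\times\Id}& (z, x\lhd z, y\lhd z) \\
&\xmapsto{\Id\times R}& (z, y\lhd z, (x\lhd z)\lhd(y\lhd z)).
\end{eqnarray*}
Since the first two components of the two outputs coincide automatically, the Yang-Baxter equation $R_{12}R_{23}R_{12} = R_{23}R_{12}R_{23}$ is equivalent to the identity $(x\lhd y)\lhd z = (x\lhd z)\lhd(y\lhd z)$ holding for all $x,y,z \in X$, which is exactly the self-distributive law. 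This single computation yields both implications at the level of shelves.

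To upgrade from shelf to rack, I would then match bijectivity of $R$ with the bijectivity of the right translations $\cdot \lhd y$. A set-theoretical solution is required to be a bijection of $X \times X$; given any target $(a,b)$, the equation $R(x,y)=(y', x\lhd y') = (a,b)$ forces $y'=a$ and then $x \lhd a = b$, so $R$ is bijective if and only if each right translation $\cdot\lhd a : X \to X$ is bijective. Combining this with the computation above gives the claimed equivalence between set-theoretical solutions of the Yang-Baxter equation of the given form and rack structures on $X$.

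The argument is essentially a bookkeeping exercise, so the only real point requiring care is keeping track of which factor each occurrence of $R$ acts on in the triple products; the self-distributive law then falls out immediately from comparing third coordinates. I do not anticipate any substantive obstacle.
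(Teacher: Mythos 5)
Your proof is correct: both triple compositions are computed accurately, the comparison of third coordinates yields exactly the self-distributive law, and the analysis of bijectivity of $R$ in terms of the right translations $\cdot\lhd a$ is sound. The paper cites this result from \cite{AG} without giving a proof, and your argument is the standard direct verification, so there is nothing further to compare.
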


Define a map $\phi$ from $\huaL\times\huaL$ to $\huaL\otimes\huaL$ as follows:
\begin{eqnarray*}
\phi(x_1,x_2)=x_1\otimes x_2,\,\,\,\forall (x_1,x_2)\in\huaL\times\huaL.
\end{eqnarray*}

\begin{thm}\label{two-ways-from-3-Lei-to-rack}
Let $(\huaL,[\cdot,\cdot,\cdot]_\huaL)$ be a  $3$-Leibniz algebra. Then $\phi:\huaL\times\huaL\to\huaL\otimes\huaL$ is a rack morphism from $(\huaL\times\huaL,\lhd_{T})$ to $(\huaL\otimes\huaL,\lhd)$.
Moreover, for the solutions of the Yang-Baxter equation $R^{\lhd_{T}}$ and $R^{\lhd}$, which are induced by racks $(\huaL\times\huaL,\lhd_{T})$ and $(\huaL\otimes\huaL,\lhd)$ respectively, $\phi$ is a homomorphism from $R^{\lhd_{T}}$ to $R^{\lhd}$, that is,  $(\phi\times\phi)R^{\lhd_T}=R^{\lhd}(\phi\times\phi)$ (see Diagram \eqref{diagram:sec2}).
 \end{thm}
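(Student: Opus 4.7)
The plan is to reduce everything to a single computation: expressing the rack operation $\lhd$ on $\huaL\otimes\huaL$ tensor-factor by tensor-factor in terms of the operators $\exp(\ad^{R}_{y_1,y_2})$ that define $T$, and hence also $\lhd_T$. Once the factorwise formula is in hand, the rack-morphism property of $\phi$ is immediate from the definition \eqref{3-rack-to-rack-formula} of $\lhd_T$, and the statement about the Yang-Baxter solutions is a direct unwinding of Theorem~\ref{rack-set-YBO}.

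\textbf{Step 1: compute $\ad^{R}_{y_1\otimes y_2}$.} By Proposition~\ref{3-lei-to-lei}, the Leibniz bracket on $\huaL\otimes\huaL$ satisfies
\begin{eqnarray*}
\ad^{R}_{y_1\otimes y_2}(x_1\otimes x_2)=\{x_1\otimes x_2,y_1\otimes y_2\}=[x_1,y_1,y_2]_\huaL\otimes x_2+x_1\otimes[x_2,y_1,y_2]_\huaL,
\end{eqnarray*}
so, as endomorphisms of $\huaL\otimes\huaL$,
\begin{eqnarray*}
\ad^{R}_{y_1\otimes y_2}=\ad^{R}_{y_1,y_2}\otimes\Id+\Id\otimes\ad^{R}_{y_1,y_2}.
\end{eqnarray*}

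\textbf{Step 2: exponentiate.} The two summands commute (they act on different tensor factors), so
\begin{eqnarray*}
\exp(\ad^{R}_{y_1\otimes y_2})=\bigl(\exp(\ad^{R}_{y_1,y_2})\otimes\Id\bigr)\circ\bigl(\Id\otimes\exp(\ad^{R}_{y_1,y_2})\bigr)=\exp(\ad^{R}_{y_1,y_2})\otimes\exp(\ad^{R}_{y_1,y_2}).
\end{eqnarray*}
Applying this to $x_1\otimes x_2$ and using \eqref{from-3-Lei-to-3-rack-formula} gives
\begin{eqnarray*}
(x_1\otimes x_2)\lhd(y_1\otimes y_2)=\exp(\ad^{R}_{y_1,y_2})(x_1)\otimes\exp(\ad^{R}_{y_1,y_2})(x_2)=T(x_1,y_1,y_2)\otimes T(x_2,y_1,y_2).
\end{eqnarray*}

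\textbf{Step 3: rack morphism.} By the definition of $\phi$ and \eqref{3-rack-to-rack-formula},
\begin{eqnarray*}
\phi\bigl((x_1,x_2)\lhd_T(y_1,y_2)\bigr)=T(x_1,y_1,y_2)\otimes T(x_2,y_1,y_2)=\phi(x_1,x_2)\lhd\phi(y_1,y_2),
\end{eqnarray*}
so $\phi$ is a rack morphism.

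\textbf{Step 4: Yang-Baxter homomorphism.} By Theorem~\ref{rack-set-YBO}, $R^{\lhd_T}((x_1,x_2),(y_1,y_2))=\bigl((y_1,y_2),(x_1,x_2)\lhd_T(y_1,y_2)\bigr)$ and $R^{\lhd}(u,v)=(v,u\lhd v)$. Applying $\phi\times\phi$ to the former and $R^{\lhd}$ to $(\phi(x_1,x_2),\phi(y_1,y_2))$, the equality $(\phi\times\phi)R^{\lhd_T}=R^{\lhd}(\phi\times\phi)$ follows immediately from Step~3.

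The only nonroutine point is Step~1--2, the observation that $\ad^{R}_{y_1\otimes y_2}$ splits as a sum of two commuting tensor-diagonal derivations, which allows the exponential to factor. Once that is recorded, the rest is bookkeeping.
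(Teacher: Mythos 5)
Your proposal is correct and rests on exactly the same key observation as the paper's proof, namely the splitting $\ad^{R}_{y_1\otimes y_2}=\ad^{R}_{y_1,y_2}\otimes\Id+\Id\otimes\ad^{R}_{y_1,y_2}$ coming from \eqref{fundamental-element-formula}; the paper just carries out the factorization of the exponential by hand via the binomial expansion of $(\ad^{R}_{y_1\otimes y_2})^k$ and a comparison of double sums, whereas you invoke $\exp(A+B)=\exp(A)\exp(B)$ for commuting operators directly. The remaining steps (rack morphism, Yang--Baxter compatibility via Theorem~\ref{rack-set-YBO}) match the paper's argument.
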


\begin{proof}
For any $(x_1,x_2),(y_1,y_2)\in\huaL\times\huaL$, we have
\begin{eqnarray*}
\phi\Big((x_1,x_2)\lhd_{T}(y_1,y_2)\Big)
&\stackrel{\eqref{3-rack-to-rack-formula}}{=}&\phi(T(x_1,y_1,y_2),T(x_2,y_1,y_2))\\
&
\stackrel{\eqref{from-3-Lei-to-3-rack-formula}}{=}&\phi(\exp(\ad_{y_1,y_2}^{R})(x_1),\exp(\ad_{y_1,y_2}^{R})(x_2))\\
&=&\sum_{i=0}^{+\infty}\frac{(\ad_{y_1,y_2}^{R})^{i}}{i!}(x_1)
\otimes
\sum_{j=0}^{+\infty}\frac{(\ad_{y_1,y_2}^{R})^{j}}{j!}(x_2)\\
&=&\sum_{k=0}^{+\infty}\sum_{\substack{{i+j}=k\\i,j\geqslant 0}}
\frac{(\ad_{y_1,y_2}^{R})^{i}}{i!}(x_1)\otimes
\frac{(\ad_{y_1,y_2}^{R})^{j}}{j!}(x_2).
\end{eqnarray*}

The formula \eqref{fundamental-element-formula} shows that
\begin{eqnarray*}
\ad^{R}_{y_1\otimes y_2}=
\ad_{y_1,y_2}^{R}\otimes{\Id}+{\Id}\otimes\ad_{y_1,y_2}^{R},
\end{eqnarray*}
where $\ad^{R}_{y_1\otimes y_2}:\huaL\otimes\huaL\rightarrow\huaL\otimes\huaL$ is defined by $\ad^{R}_{y_1\otimes y_2}(x_1\otimes x_2)
=\{x_1\otimes x_2,y_1\otimes y_2\}$.
Thus we have
\begin{eqnarray*}
(\ad^{R}_{y_1\otimes y_2})^k
&=&(\ad_{y_1,y_2}^{R}\otimes{\Id}+{\Id}\otimes\ad_{y_1,y_2}^{R})^k\\
&=&\sum_{\substack{i+j=k\\i,j\geqslant 0}}
\frac{k!}{i!j!}(\ad_{y_1,y_2}^{R}\otimes{\Id})^{i}({\Id}\otimes\ad_{y_1,y_2}^{R})^{j}\\
&=&\sum_{\substack{i+j=k\\i,j\geqslant 0}}
\frac{k!}{i!j!}(\ad_{y_1,y_2}^{R})^{i}\otimes(\ad_{y_1,y_2}^{R})^{j},
\end{eqnarray*}
which shows that
\begin{eqnarray*}
\phi(x_1,x_2)\lhd \phi(y_1,y_2)
&=&(x_1\otimes x_2)\lhd(y_1\otimes y_2)\\
&=&\exp(\ad_{y_1\otimes y_2}^{R})(x_1\otimes x_2)
    \quad\quad\quad\quad{\rm (by~Theorem~\ref{lei to rack thm})}\\
&=&\sum_{k=0}^{+\infty}\frac{(\ad^{R}_{y_1\otimes y_2})^k}{k!}(x_1\otimes x_2)\\
&=&\Big(\sum_{k=0}^{+\infty}\sum_{\substack{i+j=k\\i,j\geqslant 0}}
\frac{1}{i!j!}(\ad_{y_1,y_2}^{R})^{i}\otimes(\ad_{y_1,y_2}^{R})^{j}\Big)
(x_1\otimes x_2)\\
&=&\sum_{k=0}^{+\infty}\sum_{\substack{{i+j}=k\\i,j\geqslant 0}}
\frac{(\ad_{y_1,y_2}^{R})^{i}}{i!}(x_1)\otimes
\frac{(\ad_{y_1,y_2}^{R})^{j}}{j!}(x_2).
\end{eqnarray*}
Therefore, we obtain $\phi\Big((x_1,x_2)\lhd_{T}(y_1,y_2)\Big)=\phi(x_1,x_2)\lhd \phi(y_1,y_2)$, i.e. $\phi$ is a rack morphism from $(\huaL\times\huaL,\lhd_{T})$ to $(\huaL\otimes\huaL,\lhd)$.

By Theorem \ref{rack-set-YBO}, set-theoretical solutions of the Yang-Baxter equation induced by racks $(\huaL\times\huaL,\lhd_{T})$ and $(\huaL\otimes\huaL,\lhd)$ respectively are as follows:
\begin{eqnarray*}
R^{\lhd_T}:(\huaL\times\huaL)\times(\huaL\times\huaL)&\longrightarrow&(\huaL\times\huaL)\times(\huaL\times\huaL)\\
\Big((x_1,x_2),(y_1,y_2)\Big)&\longmapsto&\Big((y_1,y_2),(x_1,x_2)\lhd_T(y_1,y_2)\Big),\\
R^{\lhd}:(\huaL\otimes\huaL)\times(\huaL\otimes\huaL)&\longrightarrow&(\huaL\otimes\huaL)\times(\huaL\otimes\huaL)\\
\Big((x_1\otimes x_2),(y_1\otimes y_2)\Big)&\longmapsto&\Big((y_1\otimes y_2),(x_1,x_2)\lhd(y_1\otimes y_2)\Big).
\end{eqnarray*}
For all $(x_1,x_2),(y_1,y_2)\in \huaL\times\huaL$, we have
\begin{eqnarray*}
(\phi\times\phi)R^{\lhd_T}\Big((x_1,x_2),(y_1,y_2)\Big)
&=&(\phi\times\phi)\Big((y_1,y_2),(x_1,x_2)\lhd_T(y_1,y_2)\Big)\\
&=&\Big(\phi(y_1,y_2),\phi\big((x_1,x_2)\lhd_T(y_1,y_2)\big)\Big)\\
&=&\Big(\phi(y_1,y_2),\phi(x_1,x_2)\lhd\phi(y_1,y_2)\Big)\\
&=&\Big(y_1\otimes y_2,(x_1\otimes x_2)\lhd(y_1\otimes y_2)\Big)\\
&=&R^{\lhd}(x_1\otimes x_2,y_1\otimes y_2)\\
&=&R^{\lhd}(\phi\times\phi)\Big((x_1,x_2),(y_1,y_2)\Big),
\end{eqnarray*}
which implies that $\phi$ is a homomorphism from $R^{\lhd_{T}}$ to $R^{\lhd}$.
\end{proof}

\section{A passage from trilinear racks to linear racks}\label{sec-lin-rack-tri-rack}

In this section, we show that a trilinear rack $\huaC$ naturally gives rise to a linear rack structure on $\huaC\otimes \huaC$. Furthermore, the solution of the Yang-Baxter equation induced by the linear rack  $\huaC\otimes \huaC$ is consistent with the solution induced by the trilinear rack $\huaC$ given in \cite{Abramov}.
This alternatively give an explanation of the construction in \cite{Abramov}.

\subsection{Linear racks and the Yang-Baxter equation}
\begin{defi}
Let $\huaC$ be a vector space and $\triangle:\huaC\rightarrow\huaC\otimes\huaC$, $\varepsilon:\huaC\rightarrow\K$ two linear maps.
If $\triangle$ and $\varepsilon$ satisfy the following commutative diagrams:
\vspace{-0.5cm}
\begin{center}
\begin{equation*}
\xymatrix@C=6ex@R=6ex{
  \txt{$\huaC$}
  \ar[r]^-{\triangle} \ar[d]_-{\triangle}
  &\txt{$\huaC\otimes\huaC$}
  \ar[d]^-{\triangle\otimes{\Id}}
  & &\txt{$\huaC$}
  \ar[d]^-{\triangle} \ar[dl]_-{\cong} \ar[dr]^-{\cong}
  &\\
  \txt{$\huaC\otimes\huaC$}
  \ar[r]_-{{\Id}\otimes\triangle}
  &\txt{$\huaC\otimes\huaC\otimes\huaC$}
  &\txt{$\K\otimes\huaC$}
  &\txt{$\huaC\otimes\huaC$}
  \ar[r]_-{{\Id}\otimes\varepsilon} \ar[l]^-{\varepsilon\otimes{\Id}}
  &\txt{$\huaC\otimes\K$}
}
\end{equation*}
\end{center}
then $(\huaC,\triangle,\varepsilon)$ is called a {\bf coassociative counital coalgebra}.

Moreover, if $\tau\circ\triangle=\triangle$, where the linear map $\tau:\huaC\otimes\huaC\rightarrow\huaC\otimes\huaC$ is defined by $\tau(x\otimes y)=y\otimes x$ for any $x\otimes y\in\huaC\otimes\huaC$, then the coalgebra $(\huaC,\triangle,\varepsilon)$ is called {\bf cocommutative}.
\end{defi}

\begin{pro}\label{ex-of-coalg}
Let $(\huaC,\triangle,\varepsilon)$ and $(\huaC',\triangle',\varepsilon')$ be two coassociative counital coalgebras. Then $(\huaC\otimes\huaC',\triangle_{\huaC\otimes\huaC'},\varepsilon_{\huaC\otimes\huaC'})$ is also a coassociative counital coalgebra, where for any $x\otimes y\in\huaC\otimes\huaC'$,
\begin{eqnarray}\label{coproduct-on-tensor}
\triangle_{\huaC\otimes\huaC'}(x\otimes y)=(x_{(1)}\otimes y_{(1)})\otimes(x_{(2)}\otimes y_{(2)}),\,\,\,\,\,\,
\varepsilon_{\huaC\otimes\huaC'}(x\otimes y)=\varepsilon_{\huaC}(x)\varepsilon_{\huaC'}(y).
\end{eqnarray}
Here we use Sweedler notation $\triangle(x)=x_{(1)}\otimes x_{(2)}$ for the coproduct $\triangle$.
\end{pro}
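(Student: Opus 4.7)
\medskip

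\noindent\textbf{Proof proposal.} The plan is to verify the three defining diagrams (coassociativity and the two counit triangles) for $(\huaC\otimes\huaC',\triangle_{\huaC\otimes\huaC'},\varepsilon_{\huaC\otimes\huaC'})$ by reducing each one to the corresponding diagram for $(\huaC,\triangle,\varepsilon)$ and $(\huaC',\triangle',\varepsilon')$ applied factorwise. The conceptual shortcut I would use is to observe that the comultiplication \eqref{coproduct-on-tensor} factors as
\begin{equation*}
\triangle_{\huaC\otimes\huaC'}=(\Id_\huaC\otimes\tau_{\huaC',\huaC}\otimes\Id_{\huaC'})\circ(\triangle\otimes\triangle'),
\end{equation*}
where $\tau_{\huaC',\huaC}\colon\huaC'\otimes\huaC\to\huaC\otimes\huaC'$ is the flip. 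Similarly the counit is the composition $\huaC\otimes\huaC'\xrightarrow{\varepsilon\otimes\varepsilon'}\K\otimes\K\xrightarrow{\cong}\K$. With this factorization, the verifications become bookkeeping with Sweedler notation.

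First I would check coassociativity. Writing $\triangle(x)=x_{(1)}\otimes x_{(2)}$ and $\triangle'(y)=y_{(1)}\otimes y_{(2)}$, both $(\triangle_{\huaC\otimes\huaC'}\otimes\Id)\triangle_{\huaC\otimes\huaC'}(x\otimes y)$ and $(\Id\otimes\triangle_{\huaC\otimes\huaC'})\triangle_{\huaC\otimes\huaC'}(x\otimes y)$ unfold to
\begin{equation*}
(x_{(1)}\otimes y_{(1)})\otimes(x_{(2)}\otimes y_{(2)})\otimes(x_{(3)}\otimes y_{(3)}),
\end{equation*}
because the coassociativity of $\triangle$ lets us write $x_{(1)(1)}\otimes x_{(1)(2)}\otimes x_{(2)}=x_{(1)}\otimes x_{(2)(1)}\otimes x_{(2)(2)}=:x_{(1)}\otimes x_{(2)}\otimes x_{(3)}$ unambiguously, and likewise for $y$. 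The two expressions therefore agree termwise.

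For the counit axioms, I would compute directly: applying $\varepsilon_{\huaC\otimes\huaC'}\otimes\Id$ after $\triangle_{\huaC\otimes\huaC'}$ to $x\otimes y$ gives
\begin{equation*}
\varepsilon(x_{(1)})\varepsilon'(y_{(1)})\,(x_{(2)}\otimes y_{(2)})=\bigl(\varepsilon(x_{(1)})x_{(2)}\bigr)\otimes\bigl(\varepsilon'(y_{(1)})y_{(2)}\bigr)=x\otimes y,
\end{equation*}
using the counit axioms for $\huaC$ and $\huaC'$ in the second step; the right counit triangle is symmetric. This completes the verification.

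The proof is routine and presents no real obstacle; the only mild care needed is to keep the Sweedler indices on the two factors separate and to insert the flip $\tau_{\huaC',\huaC}$ correctly so that the outputs live in $(\huaC\otimes\huaC')\otimes(\huaC\otimes\huaC')$ rather than in $\huaC\otimes\huaC'\otimes\huaC\otimes\huaC'$ with the wrong grouping.
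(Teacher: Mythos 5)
Your proof is correct. The paper states this proposition without any proof at all, treating it as the standard fact that the tensor product of coalgebras is a coalgebra, so there is nothing to compare against; your verification (coassociativity via the unambiguous three-fold Sweedler expression $x_{(1)}\otimes x_{(2)}\otimes x_{(3)}$ on each factor, and the counit triangles via $\varepsilon(x_{(1)})x_{(2)}=x$) is the standard argument and is sound, including the point about inserting the middle flip so that $\triangle_{\huaC\otimes\huaC'}$ lands in $(\huaC\otimes\huaC')\otimes(\huaC\otimes\huaC')$.
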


\begin{defi}
Let $(\huaC,\triangle,\varepsilon)$ and $(\huaC',\triangle',\varepsilon')$ be two coassociative counital coalgebras. A linear map $f:\huaC\rightarrow\huaC'$ is called a {\bf coalgebra morphism} if
$\triangle'\circ f=(f\otimes f)\circ\triangle$ and $\varepsilon'\circ f=\varepsilon$.
\end{defi}

\begin{defi}
Let $(\huaC,\triangle,\varepsilon)$ be a coassociative counital coalgebra and $\lhd:\huaC\otimes\huaC\rightarrow\huaC$ a coalgebra morphism. If $\lhd$ satisfies the general self-distributive law:
\begin{eqnarray}\label{gen-self-dis}
(u\lhd v)\lhd w&=&(u\lhd w_{(1)})\lhd(v\lhd w_{(2)}),\,\,\,
\forall u,v,w\in\huaC,\label{quantum shelf def1}
\end{eqnarray}
then $(\huaC,\lhd)$ is called a {\bf linear shelf}.

Moreover, if there exists a coalgebra morphism $\widetilde{\lhd}:\huaC\otimes\huaC\rightarrow\huaC$ such that $(\huaC,\widetilde{\lhd})$ is a linear shelf and the following condition holds:
\begin{eqnarray}\label{twist inverse}
(u\lhd v_{(2)})~\widetilde{\lhd}~v_{(1)}
=\varepsilon(v)u
=(u~\widetilde{\lhd}~v_{(2)})\lhd v_{(1)},\,\,\,\forall u,v\in\huaC,
\end{eqnarray}
then the linear shelf $(\huaC,\lhd)$ is called a {\bf linear rack}.
\end{defi}

\begin{defi}
Let $(\huaC,\lhd)$ and $(\huaC',{\lhd}')$ be two linear racks. A coalgebra morphism $f:\huaC\to\huaC'$ is called a {\bf linear rack morphism} if $f(x\lhd y)=f(x)\lhd' f(y)$ for all $x,y\in\huaC$.
\end{defi}

\begin{ex}\label{rack-to-lin-rack}\cite{ABRW}
Let $(X,\lhd)$ be a rack. Consider the vector space $\K[X]$ which has the elements of $X$ as a basis. $\K[X]$ equipped with the following coproduct $\triangle:\K[X]\to\K[X]\otimes\K[X]$ and counit $\varepsilon:\K[X]\to\K$ is a cocommutative coassociative counital coalgebra:
\begin{eqnarray}\label{set-to-coalg}
\triangle(x)=x\otimes x,\,\,\,\varepsilon(x)=1_{\K},\,\,\,\forall x\in X.
\end{eqnarray}
Extending $\lhd$ linearly to the coalgebra $\K[X]$, one obtain a linear rack $(\K[X],\lhd^{\K[X]})$.
\end{ex}

Let $(\huaE,[\cdot,\cdot]_\huaE)$ be a Leibniz algebra. Consider the vector space $\K\oplus\huaE$. Define two linear maps $\triangle:\K\oplus\huaE\to(\K\oplus\huaE)\otimes(\K\oplus\huaE)$ and $\varepsilon:\K\oplus\huaE\to\K$ as follows:
\begin{eqnarray*}
&&\triangle(1_\K)=1_\K\otimes1_\K,\,\,\,\varepsilon(1_\K)=1_\K;\,\,\,
\triangle(x)=x\otimes1_\K+1_\K\otimes x,\,\,\,\varepsilon(x)=0,\,\,\,\forall x\in\huaE.
\end{eqnarray*}
It is obvious that $(\K\oplus\huaE,\triangle,\varepsilon)$ is a cocommutative coassociative counital coalgebra. Define coalgebra morphisms $\lhd,\widetilde{\lhd}:\Big(\K\oplus\huaE\Big)\otimes\Big(\K\oplus\huaE\Big)\rightarrow\K\oplus\huaE$
as follows:
\begin{eqnarray*}
&&1_\K\lhd1_\K=1_\K,\,\,\,\,1_\K\lhd x=0,\,\,\,x\lhd1_\K=x,\,\,\,x\lhd y=[x,y]_\huaE;\\
&&1_\K~\widetilde{\lhd}~1_\K=1_\K,\,\,\,1_\K~\widetilde{\lhd}~x=0,\,\,\,x~\widetilde{\lhd}~1_\K=x,\,\,\,x~\widetilde{\lhd}~y=-[x,y]_\huaE.
\end{eqnarray*}

\begin{pro}\label{cen-Lei-to-lin-rack}\cite{Lebed3}
With the above notations, $(\K\oplus\huaE,\lhd)$ is a linear rack.
\end{pro}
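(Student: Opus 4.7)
The plan is to verify each clause of the definition of a linear rack in turn, exploiting the fact that $\K\oplus\huaE$ decomposes as the span of the group-like element $1_\K$ together with the primitive subspace $\huaE$. Since both $\lhd$ and $\widetilde{\lhd}$ are defined on the four ``pure'' blocks $1_\K\otimes 1_\K$, $1_\K\otimes\huaE$, $\huaE\otimes 1_\K$, $\huaE\otimes\huaE$, every identity to be checked reduces to a small, finite case analysis on whether each argument is $1_\K$ or lies in $\huaE$.

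First I would check that $\lhd\colon(\K\oplus\huaE)\otimes(\K\oplus\huaE)\to\K\oplus\huaE$ is a coalgebra morphism. Using the formula \eqref{coproduct-on-tensor} for $\triangle_\otimes$, on a bi-primitive input $x\otimes y$ with $x,y\in\huaE$ the coproduct $\triangle_\otimes(x\otimes y)$ expands into four summands, and applying $\lhd\otimes\lhd$ kills the two ``mixed'' terms (because $1_\K\lhd y=0$) while the surviving two terms produce $[x,y]_\huaE\otimes 1_\K+1_\K\otimes[x,y]_\huaE=\triangle([x,y]_\huaE)=\triangle(x\lhd y)$. The remaining three input blocks are handled trivially, and counitality $\varepsilon\circ\lhd=\varepsilon_{\otimes}$ is immediate case by case. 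An identical verification (with a single sign flip) shows that $\widetilde{\lhd}$ is also a coalgebra morphism.

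Next I would verify the general self-distributive law \eqref{gen-self-dis}. When $w=1_\K$, $w_{(1)}\otimes w_{(2)}=1_\K\otimes 1_\K$ and both sides collapse to $u\lhd v$. When $w=z\in\huaE$, one expands $w_{(1)}\otimes w_{(2)}=z\otimes 1_\K+1_\K\otimes z$ and runs through the four sub-cases according to whether $u,v$ are in $\K$ or $\huaE$. All cases with at least one of $u,v$ equal to $1_\K$ reduce to $0=0$ or to the trivial identity $[x,z]_\huaE=[x,z]_\huaE$. The only non-trivial case is $u=x,v=y,w=z\in\huaE$, where the law becomes
\begin{equation*}
[[x,y]_\huaE,z]_\huaE=[[x,z]_\huaE,y]_\huaE+[x,[y,z]_\huaE]_\huaE,
\end{equation*}
which is precisely the (right) Leibniz identity. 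The same argument applied to $\widetilde{\lhd}$ works because negating the bracket preserves the Leibniz identity, so $(\K\oplus\huaE,\widetilde{\lhd})$ is a linear shelf as well.

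Finally I would verify the twist-inverse condition \eqref{twist inverse}. For $v=1_\K$ the coproduct is $1_\K\otimes 1_\K$ and both equalities reduce to $u=u=\varepsilon(1_\K)u$. For $v=y\in\huaE$ one has $\triangle(y)=y\otimes 1_\K+1_\K\otimes y$, so $(u\lhd v_{(2)})\,\widetilde{\lhd}\,v_{(1)}=(u\lhd 1_\K)\,\widetilde{\lhd}\,y+(u\lhd y)\,\widetilde{\lhd}\,1_\K=u\,\widetilde{\lhd}\,y+(u\lhd y)$, and this vanishes on $u=1_\K$ (both terms are zero) and on $u=x\in\huaE$ (where $-[x,y]_\huaE+[x,y]_\huaE=0$), matching $\varepsilon(y)u=0$. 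The symmetric identity follows by interchanging $\lhd$ and $\widetilde{\lhd}$. None of the steps is really hard; the main bookkeeping obstacle is simply organising the eight-fold case split cleanly so that the single non-trivial instance --- the Leibniz identity --- is visible as the heart of the argument.
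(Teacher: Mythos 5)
Your proposal is correct. The paper states this proposition with only a citation to \cite{Lebed3} and gives no proof of its own, so there is nothing internal to compare against; your block-by-block verification is the standard direct check, and every case is handled correctly --- in particular you rightly identify the right Leibniz identity $[[x,y]_\huaE,z]_\huaE=[[x,z]_\huaE,y]_\huaE+[x,[y,z]_\huaE]_\huaE$ as the unique non-trivial instance of the general self-distributive law \eqref{gen-self-dis}, and you correctly observe that the sign flip in $\widetilde{\lhd}$ preserves this identity and that the two mixed terms in the twist-inverse condition \eqref{twist inverse} cancel.
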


\begin{pro}\cite{Lebed3}\label{lin-rack-to-solution}
Let $(\huaC,\lhd)$ be a linear shelf. If the coalgebra $\huaC$ is cocommutative, then $R^{\lhd}:\huaC\otimes\huaC\rightarrow\huaC\otimes\huaC$ defined by
\begin{eqnarray}\label{quantum rack to solution formula}
R^{\lhd}(u\otimes v)=v_{(1)}\otimes (u\lhd v_{(2)}),\,\,\,\forall u\otimes v\in\huaC\otimes\huaC,
\end{eqnarray}
satisfies the equation \eqref{YBE}.

Moreover, if $(\huaC,\lhd)$ is a linear rack, then $R^{\lhd}$ is a solution of the Yang-Baxter equation with the inverse
\begin{eqnarray}
(R^{\lhd})^{ -1}(u\otimes v)=(v~\widetilde{\lhd}~u_{(2)})\otimes u_{(1)},\,\,\,\forall u\otimes v\in\huaC\otimes\huaC.
\end{eqnarray}
\end{pro}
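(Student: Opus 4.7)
The plan is to verify \eqref{YBE} by direct Sweedler bookkeeping on a generic $u\otimes v\otimes w$. Applying \eqref{quantum rack to solution formula} three times in the two orders, and using that $\lhd$ is a coalgebra morphism together with \eqref{coproduct-on-tensor} (so that $\triangle(v\lhd w_{(2)})=(v_{(1)}\lhd w_{(2)(1)})\otimes(v_{(2)}\lhd w_{(2)(2)})$) at the last step of the right-hand composition, one arrives at
$$\mathrm{LHS}=w_{(1)(1)}\otimes(v_{(1)}\lhd w_{(1)(2)})\otimes\big((u\lhd v_{(2)})\lhd w_{(2)}\big),$$
$$\mathrm{RHS}=w_{(1)(1)}\otimes(v_{(1)}\lhd w_{(2)(1)})\otimes\big((u\lhd w_{(1)(2)})\lhd(v_{(2)}\lhd w_{(2)(2)})\big).$$

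To bridge the two expressions I would apply the general self-distributive law \eqref{gen-self-dis} to the third slot of $\mathrm{LHS}$, rewriting $(u\lhd v_{(2)})\lhd w_{(2)}$ as $(u\lhd w_{(2)(1)})\lhd(v_{(2)}\lhd w_{(2)(2)})$. Both sides will then sit against the $4$-fold coproduct of $w$ and differ only by interchanging the Sweedler labels $w_{(1)(2)}$ and $w_{(2)(1)}$ in the second and third tensor slots---precisely the transposition $(2,3)$ on $\triangle^{3}(w)$. To justify this swap I would exploit cocommutativity via an alternative grouping: by coassociativity, $\triangle^{3}(w)=(\Id\otimes\triangle\otimes\Id)(\triangle\otimes\Id)\triangle(w)=w_{(1)(1)}\otimes w_{(1)(2)(1)}\otimes w_{(1)(2)(2)}\otimes w_{(2)}$, whose middle two entries arise from $\triangle(w_{(1)(2)})$, so the axiom $\tau\triangle=\triangle$ legitimates the swap and equates the two sides of \eqref{YBE}.

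For the invertibility statement I would assume that $(\huaC,\lhd)$ is a linear rack and check directly that the proposed formula is a two-sided inverse of $R^\lhd$. A short calculation should give
\begin{eqnarray*}
R^{\lhd}\circ(R^{\lhd})^{-1}(u\otimes v)
&=&u_{(1)(1)}\otimes\big((v~\widetilde{\lhd}~u_{(2)})\lhd u_{(1)(2)}\big)\\
&=&u_{(1)}\otimes\big((v~\widetilde{\lhd}~u_{(2)(2)})\lhd u_{(2)(1)}\big)\\
&=&u_{(1)}\varepsilon(u_{(2)})\otimes v=u\otimes v,
\end{eqnarray*}
the three steps being coassociativity, the twist inverse identity \eqref{twist inverse}, and the counit axiom $(\Id\otimes\varepsilon)\triangle=\Id$, respectively. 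The composition $(R^{\lhd})^{-1}\circ R^{\lhd}=\Id$ is then handled symmetrically using the other half of \eqref{twist inverse}.

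The main obstacle I anticipate is the careful tracking of Sweedler indices through three compositions of $R^{\lhd}$, and articulating why cocommutativity---rather than a naive application of $\tau\triangle=\triangle$ to the obvious grouping $w_{(1)(1)}\otimes w_{(1)(2)}\otimes w_{(2)(1)}\otimes w_{(2)(2)}$, which only supplies the subgroup of $S_{4}$ generated by $(1,2)$, $(3,4)$ and $(1,3)(2,4)$---produces the $(2,3)$-transposition on $\triangle^{3}(w)$; everything else is a mechanical application of self-distributivity, coassociativity, the twist inverse, and the counit.
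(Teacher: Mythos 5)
The paper does not prove this proposition: it is quoted verbatim from \cite{Lebed3} and used as a black box, so there is no in-text argument to compare against. Your direct verification is correct and complete in outline. The two triple compositions are computed correctly (using that $\lhd$ is a coalgebra morphism for the tensor coalgebra structure \eqref{coproduct-on-tensor} in the last step of the right-hand side), the self-distributive law \eqref{gen-self-dis} applied in the third slot of the left-hand side is the right bridge, and you correctly isolate the only genuinely delicate point: the two resulting expressions differ by the transposition $(2,3)$ of the four Sweedler components of $w$, and this transposition does \emph{not} lie in the subgroup of $S_4$ generated by $(1,2)$, $(3,4)$, $(1,3)(2,4)$ that the naive grouping $(\triangle\otimes\triangle)\triangle$ exposes, so one must first regroup $\triangle^{3}(w)$ by coassociativity as $(\Id\otimes\triangle\otimes\Id)\triangle^{2}(w)$ and only then invoke $\tau\triangle=\triangle$ on the middle pair. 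That is exactly the standard argument that the iterated coproduct of a cocommutative coalgebra is $S_{n}$-invariant, and your handling of it is careful. The invertibility check is also correct: the passage from $u_{(1)(1)}\otimes u_{(1)(2)}\otimes u_{(2)}$ to $u_{(1)}\otimes u_{(2)(1)}\otimes u_{(2)(2)}$ preserves which component feeds which slot, so it needs only coassociativity, after which \eqref{twist inverse} and the counit axiom finish the computation, and the other composition is indeed symmetric. I see no gap.
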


Recall that a {\bf central Leibniz algebra} is a Leibniz algebra  $(\huaE,[\cdot,\cdot]_\huaE)$ with a central element, i.e. an element ${\bf 1}\in\huaE$ satisfies
$[{\bf 1},x]_\huaE=0=[x,{\bf 1}]_\huaE$ for all $x\in\huaE$.
 In \cite{Lebed1}, Lebed showed that a central Leibniz algebra gives rise to a solution of the Yang-Baxter equation as follows:
\begin{thm}\cite{Lebed1}\label{lei to solution}
Let $(\huaE,[\cdot,\cdot]_\huaE,{\bf 1})$ be a central Leibniz algebra. Then $R^{Lei}:\huaE\otimes\huaE\rightarrow\huaE\otimes\huaE$ defined by
\begin{eqnarray}\label{lei to solution formula}
R^{Lei}(x\otimes y)=y\otimes x+{\bf 1}\otimes[x,y]_\huaE,\,\,\,\forall x,y\in\huaE,
\end{eqnarray}
is a solution of the Yang-Baxter equation.
\end{thm}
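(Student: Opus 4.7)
The plan is to verify the Yang-Baxter equation \eqref{YBE} by direct expansion on a generic element $x\otimes y\otimes z\in\huaE^{\otimes 3}$ and comparing the two sides. Each application of $R^{Lei}$ contributes two summands by \eqref{lei to solution formula}, so a naive triple expansion yields up to eight terms, but centrality of $\mathbf{1}$ dramatically simplifies the bookkeeping: whenever $R^{Lei}$ meets a slot occupied by $\mathbf{1}$, the bracket summand dies, giving $R^{Lei}(\mathbf{1}\otimes a)=a\otimes\mathbf{1}$, $R^{Lei}(a\otimes\mathbf{1})=\mathbf{1}\otimes a$, and $R^{Lei}(\mathbf{1}\otimes\mathbf{1})=\mathbf{1}\otimes\mathbf{1}$.

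Applying $(R^{Lei}\otimes\Id)(\Id\otimes R^{Lei})(R^{Lei}\otimes\Id)$ to $x\otimes y\otimes z$ and carefully tracking every summand, I expect the result to collapse to
\begin{eqnarray*}
z\otimes y\otimes x + \mathbf{1}\otimes[y,z]_\huaE\otimes x + \mathbf{1}\otimes y\otimes[x,z]_\huaE + z\otimes\mathbf{1}\otimes[x,y]_\huaE + \mathbf{1}\otimes\mathbf{1}\otimes[[x,y]_\huaE,z]_\huaE.
\end{eqnarray*}
The analogous computation for $(\Id\otimes R^{Lei})(R^{Lei}\otimes\Id)(\Id\otimes R^{Lei})$ should produce the same first four summands together with $\mathbf{1}\otimes\mathbf{1}\otimes\bigl([[x,z]_\huaE,y]_\huaE+[x,[y,z]_\huaE]_\huaE\bigr)$ in the last slot. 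Equality of the two sides then reduces precisely to the right Leibniz identity $[[x,y]_\huaE,z]_\huaE=[[x,z]_\huaE,y]_\huaE+[x,[y,z]_\huaE]_\huaE$, which holds by assumption.

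It remains to check invertibility, as required by the paper's definition of a solution of the Yang-Baxter equation. I would propose the candidate inverse
\begin{eqnarray*}
(R^{Lei})^{-1}(x\otimes y) = y\otimes x - [y,x]_\huaE\otimes\mathbf{1},
\end{eqnarray*}
and confirm that $R^{Lei}\circ(R^{Lei})^{-1}$ and $(R^{Lei})^{-1}\circ R^{Lei}$ both equal $\Id$ by a short direct calculation, once again using $[a,\mathbf{1}]_\huaE=[\mathbf{1},a]_\huaE=0$ to kill the stray correction terms. The main obstacle is purely clerical: one must carefully track which of the three tensor slots each copy of $\mathbf{1}$ lands in — for instance not conflating $\mathbf{1}\otimes[y,z]_\huaE\otimes x$ with $\mathbf{1}\otimes x\otimes[y,z]_\huaE$ — but no conceptual input beyond centrality of $\mathbf{1}$ and the Leibniz identity is required.
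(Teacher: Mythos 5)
Your verification is correct: I checked both triple expansions and they produce exactly the five summands you list on each side, with the discrepancy confined to the last tensor slot of the $\mathbf{1}\otimes\mathbf{1}\otimes(\cdot)$ term, where it is resolved by the right Leibniz identity; your candidate inverse $y\otimes x-[y,x]_\huaE\otimes\mathbf{1}$ also checks out in both compositions, with centrality killing the correction terms as you say. Note that the paper itself offers no proof of this statement --- it is quoted from Lebed's work \cite{Lebed1} --- so your direct computation is a self-contained substitute for the citation rather than an alternative to an argument given in the text.
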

 Note that $\K\oplus\huaE$ showed in Proposition \ref{cen-Lei-to-lin-rack} is actually a central Leibniz algebra, where $\K$ is the set of central elements of $\K\oplus\huaE$. Lebed showed in \cite{Lebed3} that starting from the central Leibniz algebra $\K\oplus\huaE$, whether use Proposition \ref{cen-Lei-to-lin-rack} to get the linear rack $(\K\oplus\huaE,\lhd)$, then use Proposition \ref{lin-rack-to-solution} to get the solution of the Yang-Baxter equation $R^\lhd_{\K\oplus\huaE}$ on $\K\oplus\huaE$,
 or directly use Theorem \ref{lei to solution} to get the solution of the Yang-Baxter equation $R^{Lei}_{\K\oplus\huaE}$ on $\K\oplus\huaE$, both ways give the same result:
\begin{eqnarray*}\label{Lei-cen-ext-to-solution}
R^{Lei}_{\K\oplus\huaE}((a,x)\otimes(b,y))
&=&(b,y)\otimes(a,x)+(1_\K,0)\otimes(0,[x,y]_\huaE)\\
&=&R^{\lhd}_{\K\oplus\huaE}((a,x)\otimes(b,y)),\,\,\,
\forall (a,x),(b,y)\in\K\oplus\huaE.
\end{eqnarray*}

\subsection{Trilinear racks and the Yang-Baxter equation}
\emptycomment{
Let $(\huaC,\triangle,\varepsilon)$ be a coassociative counital coalgebra. By Example \ref{ex-of-coalg}, $\huaC\otimes\huaC\otimes\huaC$ is also a coassociative counital coalgebra. Using the Sweedler notation, the coproduct $\triangle_3:\huaC\otimes\huaC\otimes\huaC\to(\huaC\otimes\huaC\otimes\huaC)\otimes(\huaC\otimes\huaC\otimes\huaC)$ is defined by
$$\triangle_3(x\otimes y\otimes z)=(x_{(1)}\otimes y_{(1)}\otimes z_{(1)})\otimes(x_{(2)}\otimes y_{(2)}\otimes z_{(2)}).$$}
\begin{defi}
Let $(\huaC,\triangle,\varepsilon)$ be a coassociative counital coalgebra and $T:\huaC\otimes\huaC\otimes\huaC\rightarrow\huaC$ a coalgebra morphism. If for any $x,y,z,u,v\in\huaC$,
\begin{eqnarray}
T(T(x,y,z),u,v)=T(T(x,u_{(1)},v_{(1)}),T(y,u_{(2)},v_{(2)}),T(z,u_{(3)},v_{(3)})),\label{TSD-formula}
\end{eqnarray}
where $(\triangle\otimes{\Id})\circ\triangle(u)=u_{(1)}\otimes u_{(2)}\otimes u_{(3)}$ for any $u\in\huaC$, then $(\huaC,T)$ is called a {\bf trilinear shelf}.

Moreover, if there exists a coalgebra morphism $\widetilde{T}:\huaC\otimes\huaC\otimes\huaC\rightarrow\huaC$ such that $(\huaC,\widetilde{T})$ is a trilinear shelf and the following condition holds:
\begin{eqnarray}
\quad\widetilde{T}(T(x,y_{(2)},z_{(2)}),z_{(1)},y_{(1)})=\varepsilon(y)\varepsilon(z)x
=T(\widetilde{T}(x,y_{(2)},z_{(2)}),z_{(1)},y_{(1)}),\,\,\,\forall x,y,z\in\huaC,\label{reversibility}
\end{eqnarray}
then the trilinear shelf $(\huaC,T)$ is called a {\bf trilinear rack}.
\end{defi}

\begin{ex}\label{3-rack-to-tri-rack}
Let $(X,T)$ be a $3$-rack. Consider the cocommutative coassociative counital coalgebra $(\K[X],\triangle,\varepsilon)$, where $\triangle$ and $\varepsilon$ are given by \eqref{set-to-coalg}.
Extending $T$ linearly to the coalgebra $\K[X]$, we obtain a trilinear rack $(\K[X],T^{\K[X]})$.
\end{ex}

Abramov and Zappala proved that given a trilinear rack $(\huaC,T)$, if $\huaC$ is cocommutative, then one can construct a solution of the Yang-Baxter equation on $\huaC\otimes\huaC$ as follows:
\begin{thm}\cite{Abramov}\label{TSD-to-solution}
Let $(\huaC,T)$ be a trilinear rack where $\huaC$ is cocommutative. Define $R^T_{\huaC\otimes\huaC}:(\huaC\otimes\huaC)\otimes(\huaC\otimes\huaC)\rightarrow(\huaC\otimes\huaC)\otimes(\huaC\otimes\huaC)$
by:
\begin{eqnarray}
R^{T}_{\huaC\otimes\huaC}\Big((u\otimes v)\otimes(m\otimes n)\Big)
=(m_{(1)}\otimes n_{(1)})\otimes\Big(T(u,m_{(2)},n_{(2)})\otimes T(v,m_{(3)},n_{(3)})\Big).\label{TSD-to-solution-formula}
\end{eqnarray}
Then   $R^T_{\huaC\otimes\huaC}$ is a solution of the Yang-Baxter equation.
\end{thm}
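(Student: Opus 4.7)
The plan is to derive this result as a direct consequence of the paper's own passage from trilinear racks to linear racks (Theorem \ref{TSD-to-quan-rack}) combined with Lebed's construction of Yang-Baxter solutions from linear racks (Proposition \ref{lin-rack-to-solution}). This will in fact be the main point of Section \ref{sec-lin-rack-tri-rack}: the formula \eqref{TSD-to-solution-formula} is not ad hoc but is nothing but the formula \eqref{quantum rack to solution formula} applied to the linear rack on $\huaC\otimes\huaC$ produced from the trilinear rack on $\huaC$.

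More concretely, I would first recall that by Theorem \ref{TSD-to-quan-rack} the trilinear rack $(\huaC,T)$ gives rise to a linear rack $(\huaC\otimes\huaC,\lhd)$, whose operation, by analogy with Lemma \ref{3-rack-to-rack}, is
\begin{eqnarray*}
(u\otimes v)\lhd(m\otimes n)=T(u,m_{(1)},n_{(1)})\otimes T(v,m_{(2)},n_{(2)}),
\end{eqnarray*}
and whose coproduct on $\huaC\otimes\huaC$ is given by \eqref{coproduct-on-tensor}. Next, I would observe that since $\huaC$ is cocommutative, so is the coalgebra $\huaC\otimes\huaC$, as the flip intertwines with the tensor coproduct. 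Hence Proposition \ref{lin-rack-to-solution} applies and yields a solution $R^{\lhd}$ of the Yang-Baxter equation on $(\huaC\otimes\huaC)\otimes(\huaC\otimes\huaC)$.

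It then remains to match $R^{\lhd}$ with $R^{T}_{\huaC\otimes\huaC}$. Writing $y=m\otimes n$ and using \eqref{coproduct-on-tensor}, one has $y_{(1)}=m_{(1)}\otimes n_{(1)}$ and $y_{(2)}=m_{(2)}\otimes n_{(2)}$, so \eqref{quantum rack to solution formula} gives
\begin{eqnarray*}
R^{\lhd}\Big((u\otimes v)\otimes(m\otimes n)\Big)
=(m_{(1)}\otimes n_{(1)})\otimes\Big((u\otimes v)\lhd(m_{(2)}\otimes n_{(2)})\Big).
\end{eqnarray*}
Expanding $\lhd$ and applying coassociativity $(\triangle\otimes\Id)\triangle(m)=m_{(1)}\otimes m_{(2)}\otimes m_{(3)}$ (and similarly for $n$) to the second leg reproduces the formula \eqref{TSD-to-solution-formula} on the nose, finishing the proof.

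The only step that needs care is the verification that the candidate $\lhd$ is indeed a linear rack operation, i.e.\ that it is a coalgebra morphism satisfying the general self-distributive law \eqref{gen-self-dis} and that its inverse $\widetilde{\lhd}$ (constructed from $\widetilde{T}$ by the same tensoring recipe) satisfies \eqref{twist inverse}. This is exactly what Theorem \ref{TSD-to-quan-rack} is meant to provide, so the argument here is really a bookkeeping exercise once that theorem is established; the self-distributivity of $\lhd$ is where one uses \eqref{TSD-formula} and cocommutativity of $\huaC$ to reshuffle the iterated coproducts correctly, and this rearrangement is the only mildly subtle computation in the whole argument.
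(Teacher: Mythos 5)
Your proposal is correct and is essentially the route the paper itself takes: Theorem \ref{TSD-to-quan-rack} establishes that $(\huaC\otimes\huaC,\lhd_T)$ is a linear rack (cocommutative since $\huaC$ is), and its proof concludes by computing that the solution $R^{\lhd_T}_{\huaC\otimes\huaC}$ produced by Proposition \ref{lin-rack-to-solution} coincides with the formula \eqref{TSD-to-solution-formula}, which is precisely your identification via \eqref{coproduct-on-tensor} and coassociativity. The only substantive work is the verification of the general self-distributive law and the reversibility condition for $\lhd_T$, which you correctly locate in Theorem \ref{TSD-to-quan-rack} and which the paper carries out by the same reshuffling of iterated coproducts using \eqref{TSD-formula} and cocommutativity.
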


We will show that the above solution induced by a trilinear rack can be explained in terms of a linear rack actually.
\begin{thm}\label{TSD-to-quan-rack}
Let $(\huaC,T)$ be a trilinear shelf where $\huaC$ is cocommutative. Then $\huaC\otimes\huaC$ with the linear map $\lhd_T:(\huaC\otimes\huaC)\otimes(\huaC\otimes\huaC)\to\huaC\otimes\huaC$ defined as follows is a linear shelf:
\begin{eqnarray*}
(u\otimes v)\lhd_T(m\otimes n)=T(u,m_{(1)},n_{(1)})\otimes T(v,m_{(2)},n_{(2)}),\,\,\,\forall u\otimes v,m\otimes n\in\huaC\otimes\huaC.
\end{eqnarray*}
In addition, if $(\huaC,T)$ is a trilinear rack, then $(\huaC\otimes\huaC,\lhd_T)$ is a linear rack.
Moreover, the solution of the Yang-Baxter equation $R_{\huaC\otimes\huaC}^T$ induced by a trilinear rack $(\huaC,T)$ given in \cite{Abramov} is consistent with the solution $R_{\huaC\otimes\huaC}^{\lhd_T}$ induced by the linear rack $(\huaC\otimes\huaC,\lhd_T)$, as explained in Diagram \eqref{diagram:sec3}.
\end{thm}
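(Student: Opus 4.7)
The plan is to verify the three assertions of the theorem in turn: that $\lhd_T$ makes $\huaC\otimes\huaC$ into a linear shelf, that the rack axioms upgrade when $(\huaC,T)$ is a trilinear rack, and finally that the induced solutions on $(\huaC\otimes\huaC)\otimes(\huaC\otimes\huaC)$ coincide. Throughout, the two-fold use of the coalgebra structure on $\huaC\otimes\huaC$ given by \eqref{coproduct-on-tensor} and the cocommutativity of $\huaC$ will be essential to reindex Sweedler components.

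First I would check that $\lhd_T$ is a coalgebra morphism. Since $T:\huaC\otimes\huaC\otimes\huaC\to\huaC$ is a coalgebra morphism by hypothesis and the coproduct on $\huaC\otimes\huaC$ is the one in \eqref{coproduct-on-tensor}, applying $\triangle$ to both factors of $T(u,m_{(1)},n_{(1)})\otimes T(v,m_{(2)},n_{(2)})$ and using coassociativity of $\triangle$ on $m$ and $n$ reproduces $(u\otimes v)_{(1)}\lhd_T(m\otimes n)_{(1)}$ on the first slot and similarly on the second, while the counit identity is a direct computation. The main work is the general self-distributive law \eqref{gen-self-dis} for $\lhd_T$. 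Expanding
\[
\bigl((u\otimes v)\lhd_T(m\otimes n)\bigr)\lhd_T(p\otimes q)=T(T(u,m_{(1)},n_{(1)}),p_{(1)},q_{(1)})\otimes T(T(v,m_{(2)},n_{(2)}),p_{(2)},q_{(2)})
\]
and applying the ternary self-distributive law \eqref{TSD-formula} to each tensor factor gives an expression in which the Sweedler components of $p$ and $q$ are distributed across the three slots of the outer $T$'s. On the other hand the right-hand side of \eqref{gen-self-dis} expands as
\[
\bigl((u\otimes v)\lhd_T(p_{(1)}\otimes q_{(1)})\bigr)\lhd_T\bigl((m\otimes n)\lhd_T(p_{(2)}\otimes q_{(2)})\bigr),
\]
and applying $\lhd_T$ twice, together with the coalgebra-morphism property of $T$ to compute $\triangle\circ T$, produces the same iterated $T$'s but with the roles of certain Sweedler indices permuted; cocommutativity of $\huaC$ together with coassociativity identifies the two sides.

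Second, assuming $(\huaC,T)$ is a trilinear rack with companion map $\widetilde{T}$, I would define
\[
(u\otimes v)\,\widetilde{\lhd_T}\,(m\otimes n)=\widetilde{T}(u,m_{(1)},n_{(1)})\otimes\widetilde{T}(v,m_{(2)},n_{(2)}).
\]
Repeating the argument above with $\widetilde{T}$ in place of $T$ shows $(\huaC\otimes\huaC,\widetilde{\lhd_T})$ is a linear shelf. To verify the twist inverse condition \eqref{twist inverse} one substitutes the definitions, applies the reversibility condition \eqref{reversibility} of the trilinear rack on each tensor factor, and uses cocommutativity (to align the inner Sweedler indices of $m\otimes n$ with the order prescribed in \eqref{reversibility}) together with the counit axiom $\varepsilon_{\huaC\otimes\huaC}(m\otimes n)=\varepsilon(m)\varepsilon(n)$ to obtain $\varepsilon_{\huaC\otimes\huaC}(m\otimes n)(u\otimes v)$ as required. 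Finally, for the consistency of solutions, substitute $\lhd_T$ into the formula \eqref{quantum rack to solution formula} of Proposition~\ref{lin-rack-to-solution}:
\[
R^{\lhd_T}((u\otimes v)\otimes(m\otimes n))=(m\otimes n)_{(1)}\otimes\bigl((u\otimes v)\lhd_T(m\otimes n)_{(2)}\bigr),
\]
apply \eqref{coproduct-on-tensor} to $(m\otimes n)_{(1)}\otimes(m\otimes n)_{(2)}=(m_{(1)}\otimes n_{(1)})\otimes(m_{(2)}\otimes n_{(2)})$, and then expand $(u\otimes v)\lhd_T(m_{(2)}\otimes n_{(2)})$. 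Using coassociativity to rewrite the iterated coproducts as $m_{(1)}\otimes m_{(2)}\otimes m_{(3)}$ and $n_{(1)}\otimes n_{(2)}\otimes n_{(3)}$ yields exactly \eqref{TSD-to-solution-formula}, proving $R^{\lhd_T}=R^T_{\huaC\otimes\huaC}$.

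The main obstacle will be the general self-distributive law in the first step: there is no shortcut avoiding the Sweedler bookkeeping, and one must carefully apply \eqref{TSD-formula} on each tensor factor and then invoke cocommutativity precisely at the places where the indices of $m$ (resp.\ $n$) and of $p$ (resp.\ $q$) must be transposed across the two slots of the outer tensor product. Once this is handled, the rack axiom and the comparison of solutions are mechanical consequences of the same kind of Sweedler manipulation.
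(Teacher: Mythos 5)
Your overall strategy coincides with the paper's: verify the general self-distributive law \eqref{gen-self-dis} by expanding both sides in Sweedler notation, applying \eqref{TSD-formula} factorwise and invoking cocommutativity; then produce a companion map for the rack axiom; then compare $R^{\lhd_T}$ with \eqref{TSD-to-solution-formula}. The shelf verification and the comparison of solutions are fine as sketched.

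There is, however, one genuine error: your companion map is defined with the wrong argument order. You set $(u\otimes v)\,\widetilde{\lhd_T}\,(m\otimes n)=\widetilde{T}(u,m_{(1)},n_{(1)})\otimes\widetilde{T}(v,m_{(2)},n_{(2)})$, but the reversibility condition \eqref{reversibility} reads $\widetilde{T}(T(x,y_{(2)},z_{(2)}),z_{(1)},y_{(1)})=\varepsilon(y)\varepsilon(z)x$, i.e.\ the outer map receives the components of $y$ and $z$ in \emph{reversed} order relative to the inner map. When you expand $\bigl((u\otimes v)\lhd_T(m\otimes n)_{(2)}\bigr)\,\widetilde{\lhd_T}\,(m\otimes n)_{(1)}$ with your definition, the first factor has the shape $\widetilde{T}\bigl(T(u,m_{(\ast)},n_{(\ast)}),m_{(\ast)},n_{(\ast)}\bigr)$; cocommutativity only permutes the Sweedler components of $m$ among themselves and those of $n$ among themselves, so it cannot move the $m$-component out of the second slot and the $n$-component out of the third slot of the outer $\widetilde{T}$. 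Since $\widetilde{T}$ carries no symmetry in its last two arguments, \eqref{reversibility} is not applicable and the verification of \eqref{twist inverse} breaks down. The correct definition is $(u\otimes v)\,\widetilde{\lhd_T}\,(m\otimes n)=\widetilde{T}(u,n_{(1)},m_{(1)})\otimes\widetilde{T}(v,n_{(2)},m_{(2)})$, with $m$ and $n$ interchanged; with this choice the composite has exactly the shape demanded by \eqref{reversibility} and the rest of your argument goes through as planned.
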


\begin{proof}
For any $u\otimes v,m\otimes n,p\otimes q\in\huaC\otimes\huaC$, we have
\begin{eqnarray*}
&&\Big((u\otimes v)\lhd_T (m\otimes n)\Big)\lhd_T (p\otimes q)\\
&=&\Big(T(u,m_{(1)},n_{(1)})\otimes T(v,m_{(2)},n_{(2)})\Big)\lhd_T (p\otimes q)\\
&=&T\Big(T(u,m_{(1)},n_{(1)}),p_{(1)},q_{(1)}\Big)\otimes T\Big(T(v,m_{(2)},n_{(2)}),p_{(2)},q_{(2)}\Big)\\
&\stackrel{\eqref{TSD-formula}}{=}&
T\Big(T(u,p_{(1)(1)},q_{(1)(1)}),T(m_{(1)},p_{(1)(2)},q_{(1)(2)}),T(n_{(1)},p_{(1)(3)},q_{(1)(3)})\Big)\\
&&\otimes T\Big(T(v,p_{(2)(1)},q_{(2)(1)}),T(m_{(2)},p_{(2)(2)},q_{(2)(2)}),T(n_{(2)},p_{(2)(3)},q_{(2)(3)})\Big)\\
&=&T\Big(T(u,p_{(1)},q_{(1)}),T(m_{(1)},p_{(2)},q_{(2)}),T(n_{(1)},p_{(3)},q_{(3)})\Big)\\
&&\otimes T\Big(T(v,p_{(4)},q_{(4)}),T(m_{(2)},p_{(5)},q_{(5)}),T(n_{(2)},p_{(6)},q_{(6)})\Big),\\
&&\Big((u\otimes v)\lhd_T (p\otimes q)_{(1)}\Big)\lhd_T \Big((m\otimes n)\lhd_T (p\otimes q)_{(2)}\Big)\\
&\stackrel{\eqref{coproduct-on-tensor}}{=}&\Big((u\otimes v)\lhd_T (p_{(1)}\otimes q_{(1)})\Big)\lhd_T \Big((m\otimes n)\lhd_T (p_{(2)}\otimes q_{(2)})\Big)\\
&=&\Big(T(u,p_{(1)(1)},q_{(1)(1)})\otimes T(v,p_{(1)(2)},q_{(1)(2)})\Big)\lhd_T
\Big(T(m,p_{(2)(1)},q_{(2)(1)})\otimes T(n,p_{(2)(2)},q_{(2)(2)})\Big)\\
&=&T\Big(T(u,p_{(1)(1)},q_{(1)(1)}),T(m,p_{(2)(1)},q_{(2)(1)})_{(1)},T(n,p_{(2)(2)},q_{(2)(2)})_{(1)}\Big)\\
&&\otimes T\Big(T(v,p_{(1)(2)},q_{(1)(2)}),T(m,p_{(2)(1)},q_{(2)(1)})_{(2)},T(n,p_{(2)(2)},q_{(2)(2)})_{(2)}\Big)\\
&=&T\Big(T(u,p_{(1)(1)},q_{(1)(1)}),T(m_{(1)},p_{(2)(1)(1)},q_{(2)(1)(1)}),T(n_{(1)},p_{(2)(2)(1)},q_{(2)(2)(1)})\Big)\\
&&\otimes T\Big(T(v,p_{(1)(2)},q_{(1)(2)}),T(m_{(2)},p_{(2)(1)(2)},q_{(2)(1)(2)}),T(n_{(2)},p_{(2)(2)(2)},q_{(2)(2)(2)})\Big)\\
&=&T\Big(T(u,p_{(1)},q_{(1)}),T(m_{(1)},p_{(3)},q_{(3)}),T(n_{(1)},p_{(5)},q_{(5)})\Big)\\
&&\otimes T\Big(T(v,p_{(2)},q_{(2)}),T(m_{(2)},p_{(4)},q_{(4)}),T(n_{(2)},p_{(6)},q_{(6)})\Big).
\end{eqnarray*}
Since $\huaC$ is cocommutative, then we have
$$\Big((u\otimes v)\lhd_T (m\otimes n)\Big)\lhd_T (p\otimes q)=\Big((u\otimes v)\lhd_T (p\otimes q)_{(1)}\Big)\lhd_T \Big((m\otimes n)\lhd_T (p\otimes q)_{(2)}\Big),$$
which implies that $\lhd_T$ satisfies the general self-distributive law \eqref{gen-self-dis}.
Since $T$ is a coalgebra morphism and $\huaC$ is cocommutative, it is easy to obtain that
$$ \Big((u\otimes v)\lhd_T (m\otimes n)\Big)_{(1)}\otimes\Big((u\otimes v)\lhd_T (m\otimes n)\Big)_{(2)}
=\Big((u\otimes v)_{(1)}\lhd_T (m\otimes n)_{(1)}\Big)\otimes\Big((u\otimes v)_{(2)}\lhd_T (m\otimes n)_{(2)}\Big),$$
which implies that $\lhd_T$ is a coalgebra morphism from the coalgebra $(\huaC\otimes\huaC)\otimes(\huaC\otimes\huaC)$ to the coalgebra $\huaC\otimes\huaC$.
\emptycomment{
\begin{eqnarray*}
&&\Big((u\otimes v)\lhd_T (m\otimes n)\Big)_{(1)}\otimes\Big((u\otimes v)\lhd_T (m\otimes n)\Big)_{(2)}\\
&=&\Big(T(u,m_{(1)},n_{(1)})\otimes T(v,m_{(2)},n_{(2)})\Big)_{(1)}\otimes\Big(T(u,m_{(1)},n_{(1)})\otimes T(v,m_{(2)},n_{(2)})\Big)_{(2)}\\
&=&\Big(T(u,m_{(1)},n_{(1)})_{(1)}\otimes T(v,m_{(2)},n_{(2)})_{(1)}\Big)\otimes\Big(T(u,m_{(1)},n_{(1)})_{(2)}\otimes T(v,m_{(2)},n_{(2)})_{(2)}\Big)\\
&=&\Big(T(u_{(1)},m_{(1)(1)},n_{(1)(1)})\otimes T(v_{(1)},m_{(2)(1)},n_{(2)(1)})\Big)\otimes\Big(T(u_{(2)},m_{(1)(2)},n_{(1)(2)})\otimes T(v_{(2)},m_{(2)(2)},n_{(2)(2)})\Big)\\
&=&\Big(T(u_{(1)},m_{(1)},n_{(1)})\otimes T(v_{(1)},m_{(3)},n_{(3)})\Big)\otimes\Big(T(u_{(2)},m_{(2)},n_{(2)})\otimes T(v_{(2)},m_{(4)},n_{(4)})\Big)\\
&&\Big((u\otimes v)_{(1)}\lhd_T (m\otimes n)_{(1)}\Big)\otimes\Big((u\otimes v)_{(2)}\lhd_T (m\otimes n)_{(2)}\Big)\\
&=&\Big((u_{(1)}\otimes v_{(1)})\lhd_T (m_{(1)}\otimes n_{(1)})\Big)\otimes\Big((u_{(2)}\otimes v_{(2)})\lhd_T (m_{(2)}\otimes n_{(2)})\Big)\\
&=&\Big(T(u_{(1)},m_{(1)(1)},n_{(1)(1)})\otimes T(v_{(1)},m_{(1)(2)},n_{(1)(2)})\Big)\otimes\Big(T(u_{(2)},m_{(2)(1)},n_{(2)(1)})\otimes T(v_{(2)},m_{(2)(2)},n_{(2)(2)})\Big)\\
&=&\Big(T(u_{(1)},m_{(1)},n_{(1)})\otimes T(v_{(1)},m_{(2)},n_{(2)})\Big)\otimes\Big(T(u_{(2)},m_{(3)},n_{(3)})\otimes T(v_{(2)},m_{(4)},n_{(4)})\Big)
\end{eqnarray*}
}
Therefore, $(\huaC,\lhd_T)$ is a linear shelf.

If there exists a coalgebra morphism $\widetilde{T}:\huaC\otimes\huaC\otimes\huaC\rightarrow\huaC$ satisfying \eqref{reversibility}, then we define $\widetilde{\lhd_T}:(\huaC\otimes\huaC)\otimes(\huaC\otimes\huaC)\rightarrow\huaC\otimes\huaC$ as follows:
\begin{eqnarray*}
(u\otimes v)\widetilde{\lhd_T}(m\otimes n)
=\widetilde{T}(u,n_{(1)},m_{(1)})\otimes \widetilde{T}(v,n_{(2)},m_{(2)}),\,\,\,\forall u\otimes v,m\otimes n\in\huaC\otimes\huaC.
\end{eqnarray*}
By direct computation, we have
\begin{eqnarray*}
&&\Big((u\otimes v)\lhd_T (m\otimes n)_{(2)}\Big)~\widetilde{\lhd_T}~(m\otimes n)_{(1)}\\
&=&\Big((u\otimes v)\lhd_T (m_{(2)}\otimes n_{(2)})\Big)~\widetilde{\lhd_T}~(m_{(1)}\otimes n_{(1)})\\
&=&\Big(T(u,m_{(2)(1)},n_{(2)(1)})\otimes T(v,m_{(2)(2)},n_{(2)(2)})\Big)~\widetilde{\lhd_T}~(m_{(1)}\otimes n_{(1)})\\
&=&\widetilde{T}(T(u,m_{(2)(1)},n_{(2)(1)}),n_{(1)(1)},m_{(1)(1)})\otimes \widetilde{T}(T(v,m_{(2)(2)},n_{(2)(2)}),n_{(1)(2)},m_{(1)(2)})\\
&=&\widetilde{T}(T(u,m_{(1)(2)},n_{(1)(2)}),n_{(1)(1)},m_{(1)(1)})\otimes \widetilde{T}(T(v,m_{(2)(2)},n_{(2)(2)}),n_{(2)(1)},m_{(2)(1)})\\
&\stackrel{\eqref{reversibility}}{=}&\varepsilon(m_{(1)})\varepsilon(n_{(1)})u\otimes\varepsilon(m_{(2)})\varepsilon(n_{(1)})v\\
&=&\varepsilon(m)\varepsilon(n)u\otimes v
\stackrel{\eqref{coproduct-on-tensor}}{=}\varepsilon(m\otimes n)u\otimes v,\\
&&\Big((u\otimes v)~\widetilde{\lhd_T}~(m\otimes n)_{(2)}\Big)\lhd_T (m\otimes n)_{(1)}\\
&=&\Big((u\otimes v)~\widetilde{\lhd_T}~(m_{(2)}\otimes n_{(2)})\Big)\lhd_T (m_{(1)}\otimes n_{(1)})\\
&=&\Big(\widetilde{T}(u,n_{(2)(1)},m_{(2)(1)})\otimes \widetilde{T}(v,n_{(2)(2)},m_{(2)(2)})\Big)\lhd_T (m_{(1)}\otimes n_{(1)})\\
&=&T(\widetilde{T}(u,n_{(2)(1)},m_{(2)(1)}),m_{(1)(1)},n_{(1)(1)})\otimes T(\widetilde{T}(v,n_{(2)(2)},m_{(2)(2)}),m_{(1)(2)},n_{(1)(2)})\\
&=&T(\widetilde{T}(u,n_{(1)(2)},m_{(1)(2)}),m_{(1)(1)},n_{(1)(1)})\otimes T(\widetilde{T}(v,n_{(2)(2)},m_{(2)(2)}),m_{(2)(1)},n_{(2)(1)})\\
&\stackrel{\eqref{reversibility}}{=}&\varepsilon(n_{(1)})\varepsilon(m_{(1)})u\otimes\varepsilon(n_{(1)})\varepsilon(m_{(2)})v\\
&=&\varepsilon(m)\varepsilon(n)u\otimes v=\varepsilon(m\otimes n)u\otimes v,
\end{eqnarray*}
which means that
$$\Big((u\otimes v)\lhd_T (m\otimes n)_{(2)}\Big)~\widetilde{\lhd_T}~(m\otimes n)_{(1)}=\varepsilon(m\otimes n)u\otimes v=\Big((u\otimes v)~\widetilde{\lhd_T}~(m\otimes n)_{(2)}\Big)\lhd_T (m\otimes n)_{(1)}.$$
Therefore, $(\huaC\otimes\huaC,\lhd_T)$ is a linear rack.

By Proposition \ref{lin-rack-to-solution}, we obtain a solution of the Yang-Baxter equation $R^{\lhd_T}_{\huaC\otimes\huaC}:(\huaC\otimes\huaC)\otimes(\huaC\otimes\huaC)\rightarrow(\huaC\otimes\huaC)\otimes(\huaC\otimes\huaC)$ by the linear rack $(\huaC\otimes\huaC,\lhd_T)$ as follows:
\begin{eqnarray*}
R^{\lhd_T}_{\huaC\otimes\huaC}\Big((u\otimes v)\otimes(m\otimes n)\Big)
&=&(m\otimes n)_{(1)}\otimes\Big((u\otimes v)\lhd_T(m\otimes n)_{(2)}\Big)\\
&=&(m_{(1)}\otimes n_{(1)})\otimes\Big(T(u,m_{(2)},n_{(2)})\otimes T(v,m_{(3)},n_{(3)})\Big),
\end{eqnarray*}
which is the same as the solution $R^T_{\huaC\otimes\huaC}$ (see \eqref{TSD-to-solution-formula}) induced by the trilinear rack $(\huaC,T)$.
\end{proof}
\emptycomment{
Lemma \ref{3-rack-to-rack} and Theorem \ref{TSD-to-quan-rack} respectively show a way from $3$-racks to racks and a way from trilinear racks to linear rack. Example \ref{rack-to-lin-rack} and Example \ref{3-rack-to-tri-rack} respectively show  a way from racks to linear racks and a way from $3$-racks to trilinear racks.}
Now we explore the relationship between racks, $3$-racks, linear racks and trilinear racks. Starting from a $3$-rack $(X,T)$, on the one hand, there is a rack $(X\times X,\lhd_T)$ by Lemma \ref{3-rack-to-rack}, then we obtain a linear rack $(\K[X\times X],\lhd_T^{\K[X\times X]})$ by Example \ref{rack-to-lin-rack}. On the other hand, we construct a trilinear rack $(\K[X],T^{\K[X]})$ by Example \ref{3-rack-to-tri-rack}, which induces a linear rack $(\K[X]\otimes\K[X],\lhd_{T^{\K[X]}})$ by Theorem \ref{TSD-to-quan-rack}. Define a linear map $\varphi:\K[X\times X]\to\K[X]\otimes\K[X]$ as follows:
\begin{eqnarray*}
\varphi(a(x_1,x_2))=ax_1\otimes x_2,\,\,\,\forall a\in\K,(x_1,x_2)\in X\times X.
\end{eqnarray*}
\begin{pro}
With the above notation, $\varphi$ is a linear rack morphism, which makes the following diagram commute:
\vspace{-0.5cm}
\begin{center}
\begin{displaymath}
\xymatrix@C=10ex@R=2ex{
  &\txt{\rm rack\\ $(X\times X,\lhd_T)$}
  \ar[r]^-{{\rm Example}~\ref{rack-to-lin-rack}}
  &\txt{\rm Linear~rack\\$(\K[X\times X],\lhd_T^{\K[X\times X]})$}
  \ar@[blue]@{-->}[dd]^-{\varphi}\\
  \txt{\rm $3$-rack\\$(X,T)$}
  \ar[ur]^-{{\rm Lemma}~\ref{3-rack-to-rack}\quad}
  \ar[dr]_-{{\rm Example}~\ref{3-rack-to-tri-rack}\quad}
  &\rotatebox{165}{\color{blue}{\txt{\Huge $\circlearrowright$}}}&\\
  &\txt{\rm Trilinear~rack\\$(\K[X],T^{\K[X]})$}
  \ar@[blue]@{-->}[r]_-{{\rm Theorem}~\ref{TSD-to-quan-rack}}
  &\txt{\rm Linear~rack\\$(\K[X]\otimes\K[X],\lhd_{T^{\K[X]}})$}
}
\end{displaymath}
\end{center}
\end{pro}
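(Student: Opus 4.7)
The plan is to verify the three defining properties of a linear rack morphism for $\varphi$: compatibility with the coproduct, with the counit, and with the rack operations. All three reduce to a short computation on the grouplike basis elements, after which linearity does the rest; the commutativity of the outer diagram then follows automatically, since both constructions send the same element of $X\times X$ to images that $\varphi$ identifies.

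First I would record the coalgebra structures involved. By Example~\ref{rack-to-lin-rack}, $\K[X\times X]$ carries the grouplike coproduct $\triangle(x_1,x_2)=(x_1,x_2)\otimes(x_1,x_2)$ and counit $\varepsilon(x_1,x_2)=1_{\K}$. Since $\K[X]$ is itself grouplike, Proposition~\ref{ex-of-coalg} gives, on basis elements of $\K[X]\otimes\K[X]$,
\[
\triangle(x_1\otimes x_2)=(x_1\otimes x_2)\otimes(x_1\otimes x_2),\qquad \varepsilon(x_1\otimes x_2)=1_{\K}.
\]
The identities $(\varphi\otimes\varphi)\circ\triangle=\triangle\circ\varphi$ and $\varepsilon\circ\varphi=\varepsilon$ on $(x_1,x_2)$ are then immediate, so $\varphi$ is a coalgebra morphism.

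Next I would check compatibility with the rack operations on basis elements. Using Lemma~\ref{3-rack-to-rack} and the definition of $\varphi$, the left-hand side becomes
\[
\varphi\bigl((x_1,x_2)\lhd_T^{\K[X\times X]}(y_1,y_2)\bigr)
=\varphi\bigl(T(x_1,y_1,y_2),T(x_2,y_1,y_2)\bigr)
=T(x_1,y_1,y_2)\otimes T(x_2,y_1,y_2).
\]
For the right-hand side, since the coproduct of $y_i\in\K[X]$ is $y_i\otimes y_i$, all Sweedler factors collapse, and the formula for $\lhd_{T^{\K[X]}}$ from Theorem~\ref{TSD-to-quan-rack} specialises to
\[
\varphi(x_1,x_2)\lhd_{T^{\K[X]}}\varphi(y_1,y_2)
=(x_1\otimes x_2)\lhd_{T^{\K[X]}}(y_1\otimes y_2)
=T^{\K[X]}(x_1,y_1,y_2)\otimes T^{\K[X]}(x_2,y_1,y_2),
\]
which equals the previous expression because $T^{\K[X]}$ is the linear extension of $T$ (Example~\ref{3-rack-to-tri-rack}). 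Linearity of $\varphi$ and of both rack operations propagates the identity to all of $\K[X\times X]$.

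The commutativity of the outer diagram is then formal: for $(x_1,x_2)\in X\times X$, the upper path yields the basis vector $(x_1,x_2)\in\K[X\times X]$, whose image under $\varphi$ is $x_1\otimes x_2$, whereas the lower path directly delivers $x_1\otimes x_2\in\K[X]\otimes\K[X]$. No step is substantively difficult; the only place where care is needed is tracking how the three Sweedler factors appearing in $\lhd_{T^{\K[X]}}$ collapse back to a single triple, which is precisely the mechanism by which the coalgebraic construction reproduces the set-theoretical one when the underlying coproduct is grouplike.
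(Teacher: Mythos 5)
Your proposal is correct and follows essentially the same route as the paper's proof: verify the coalgebra-morphism conditions and the compatibility $\varphi\bigl((x_1,x_2)\lhd_T^{\K[X\times X]}(y_1,y_2)\bigr)=\varphi(x_1,x_2)\lhd_{T^{\K[X]}}\varphi(y_1,y_2)$ on (scalar multiples of) basis elements, using the grouplike coproducts to collapse the Sweedler components, and extend by linearity. The only cosmetic difference is that the paper carries the scalars $a,b$ explicitly through the computation rather than invoking linearity at the end; also note that the rack operation $\lhd_{T^{\K[X]}}$ involves two Sweedler factors of each argument, not three, though this does not affect your argument.
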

\begin{proof}
First, we recall the coproducts and counits on $\K[X\times X]$ and $\K[X]\otimes\K[X]$.
By Example \ref{rack-to-lin-rack}, for any $a(x_1,x_2)\in\K[X\times X]$, we have
\begin{eqnarray*}
\triangle_{\K[X\times X]}\big(a(x,y)\big)
&=&a\triangle_{\K[X\times X]} (x,y) =a(x,y)\otimes (x,y),\\
\varepsilon_{\K[X\times X]}\big(a(x,y)\big)
&=&a\varepsilon_{\K[X\times X]} (x,y) =a.
\end{eqnarray*}
By Proposition \ref{ex-of-coalg}, for any $ax\otimes by\in\K[X]\otimes\K[X]$, we have
\begin{eqnarray*}
\triangle_{\K[X]\otimes\K[X]}(ax\otimes by)
&=&ab\triangle_{\K[X]\otimes\K[X]}(x\otimes y)=ab(x\otimes y)\otimes(x\otimes y),\\
\varepsilon_{\K[X]\otimes\K[X]}(ax\otimes by)
&=&ab\varepsilon_{\K[X]\otimes\K[X]}(x\otimes y)=ab.
\end{eqnarray*}
Then for any $a(x_1,x_2)\in\K[X\times X]$, we have
\begin{eqnarray*}
\triangle_{\K[X]\otimes\K[X]}\circ\varphi\big(a(x,y)\big)
&=&\triangle_{\K[X]\otimes\K[X]}(ax\otimes y)
=a(x\otimes y)\otimes(x\otimes y),\\
(\varphi\otimes\varphi)\circ\triangle_{\K[X\times X]}\big(a(x,y)\big)
&=&(\varphi\otimes\varphi)\big(a(x,y)\otimes (x,y)\big)
=a(x\otimes y)\otimes(x\otimes y),\\
\varepsilon_{\K[X]\otimes\K[X]}\circ\varphi\big(a(x,y)\big)
&=&\varepsilon_{\K[X]\otimes\K[X]}(ax\otimes y)=a,\\
\varepsilon_{\K[X\times X]}\big(a(x,y)\big)&=&a,
\end{eqnarray*}
which implies that $\triangle_{\K[X]\otimes\K[X]}\circ\varphi=(\varphi\otimes\varphi)\circ\triangle_{\K[X\times X]}$ and $\varepsilon_{\K[X]\otimes\K[X]}\circ\varphi=\varepsilon_{\K[X\times X]}$.
Thus $\varphi$ is a coalgebra morphism.

For any $a(x_1,x_2),b(y_1,y_2)\in\K[X\times X]$, we have
\begin{eqnarray*}
\varphi\Big(a(x_1,x_2)\lhd_T^{\K[X\times X]}b(y_1,y_2)\Big)
&=&ab\varphi\Big((x_1,x_2)\lhd_T^{\K[X\times X]}(y_1,y_2)\Big)\\
&=&ab\varphi\Big(T(x_1,y_1,y_2),T(x_2,y_1,y_2)\Big)\\
&=&ab\Big(T(x_1,y_1,y_2)\otimes T(x_2,y_1,y_2)\Big),\\
\varphi\Big(a(x_1,x_2)\Big)\lhd_{T^{\K[X]}}\varphi\Big(b(y_1,y_2)\Big)
&=&\Big(ax_1\otimes x_2\Big)\lhd_{T^{\K[X]}}\Big(by_1\otimes y_2\Big)\\
&=&ab\Big(x_1\otimes x_2\Big)\lhd_{T^{\K[X]}}\Big(y_1\otimes y_2\Big)\\
&=&ab\Big(T(x_1,y_1,y_2)\otimes T(x_2,y_1,y_2)\Big),
\end{eqnarray*}
which implies that $\varphi\Big(a(x_1,x_2)\lhd_T^{\K[X\times X]}b(y_1,y_2)\Big)=\varphi\Big(a(x_1,x_2)\Big)\lhd_{T^{\K[X]}}\varphi\Big(b(y_1,y_2)\Big)$.
Therefore, $\varphi$ is a linear rack morphism from $(\K[X\times X],\lhd_T^{\K[X\times X]})$ to $(\K[X]\otimes\K[X],\lhd_{T^{\K[X]}})$.
\end{proof}

\section{A passage from 3-Leibniz algebras to trilinear racks}\label{sec-3-Lei-tri}

By Lemma \ref{cen-Lei-to-lin-rack}, we have seen that starting from a Leibniz algebra $\huaE$, there is a linear rack structure on the trivial central extension $\K\oplus\huaE$ of $\huaE$. In this section, we show that there is a similar relationship between $3$-Leibniz algebras and trilinear racks. Consequently, a $3$-Leibniz algebra $\huaL$ gives rise to a solution of the Yang-Baxter equation on $(\K\oplus\huaL)\otimes(\K\oplus\huaL)$ through the corresponding trilinear rack by Theorem \ref{TSD-to-solution}.

Let $(\huaL,[\cdot,\cdot,\cdot]_\huaL)$ be a $3$-Leibniz algebra over $\K$. Consider the vector space $\overline{\huaL}\triangleq\K\oplus\huaL$, the following two linear maps $\triangle:\overline{\huaL}\rightarrow\overline{\huaL}\otimes\overline{\huaL}$ and $\varepsilon:\overline{\huaL}\rightarrow\K$ make $(\overline{\huaL},\triangle,\varepsilon)$ a cocommutative coassociative counital coalgebra:
\begin{equation}\label{coproduct-on-overline-L}
\begin{aligned}
\triangle(a,x)&=a(1_\K,0)\otimes(1_\K,0)+(0,x)\otimes(1_\K,0)+(1_\K,0)\otimes(0,x)\\
&=(a,x)\otimes(1_\K,0)+(1_\K,0)\otimes(0,x),\\
\varepsilon(a,x)&=a,
\end{aligned}
\end{equation}
where $(a,x)\in\overline{\huaL}$.
Define a linear map $T:\overline{\huaL}\otimes\overline{\huaL}\otimes\overline{\huaL}\rightarrow\overline{\huaL}$ as follows:
\begin{eqnarray}\label{3-lei-to-TSD-formula}
T\Big((a,x),(b,y),(c,z)\Big)=(abc,bcx+[x,y,z]_\huaL),\,\,\,
\forall (a,x),(b,y),(c,z)\in\overline{\huaL}.
\end{eqnarray}

\begin{thm}\label{3-Lei-to-TSD}
Let $(\huaL,[\cdot,\cdot,\cdot]_\huaL)$ be a $3$-Leibniz algebra. Then $(\overline{\huaL},T)$ is a trilinear rack.
\end{thm}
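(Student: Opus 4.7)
The plan is to verify in turn the three defining conditions of a trilinear rack: that $T$ is a coalgebra morphism, satisfies the ternary self-distributive law \eqref{TSD-formula}, and admits a twin operation $\widetilde{T}$ realising the reversibility condition \eqref{reversibility}. The guiding principle throughout is the decomposition $\overline{\huaL} = \K(1_\K,0) \oplus \{(0,x) : x \in \huaL\}$ together with the observations that $(1_\K,0)$ is group-like and each $(0,x)$ is primitive under the coproduct $\triangle$ from \eqref{coproduct-on-overline-L}.

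For the coalgebra morphism property I would expand $(T \otimes T) \circ \triangle_{\overline{\huaL}^{\otimes 3}}$ applied to $(a,x)\otimes(b,y)\otimes(c,z)$ into eight summands via \eqref{coproduct-on-tensor}, observe that five of them vanish because the resulting $T$-value contains a $3$-Leibniz bracket with a zero entry, and use the scalar identity $(bc,0)\otimes(0,x) = (1_\K,0)\otimes(0,bcx)$ to reassemble the three survivors into $\triangle T((a,x),(b,y),(c,z))$; compatibility with $\varepsilon$ is immediate from the explicit formula. For the TSD law, by trilinearity it suffices to test \eqref{TSD-formula} on basis profiles of $(X_1, X_2, X_3; Y_1, Y_2)$ drawn from $\{(1_\K,0),\,(0,\cdot)\}$. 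Profiles with a group-like $X_i$, or with exactly one of $Y_1, Y_2$ group-like, collapse to equal scalar outputs or to $(0,0)$ on both sides. The decisive profile is the all-primitive case $X_i = (0, x_i)$, $Y_j = (0, y_j)$: the left side of \eqref{TSD-formula} equals $(0, [[x_1, x_2, x_3]_\huaL, y_1, y_2]_\huaL)$, while $(\triangle \otimes \Id)\triangle(Y_j)$ places each $y_j$ in exactly one of three slots, so of the $3 \times 3 = 9$ placements only the three with $y_1, y_2$ in the same position survive on the right, yielding precisely $(0,\, [[x_1, y_1, y_2]_\huaL, x_2, x_3]_\huaL + [x_1, [x_2, y_1, y_2]_\huaL, x_3]_\huaL + [x_1, x_2, [x_3, y_1, y_2]_\huaL]_\huaL)$. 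Equality thus reduces to the $3$-Leibniz identity \eqref{3-Lei-equation}.

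For the reversibility I would set $\widetilde{T}((a,x),(b,y),(c,z)) := (abc,\; bcx - [x, z, y]_\huaL)$. A routine rearrangement of \eqref{3-Lei-equation} shows that $[x, y, z]^{-} := -[x, z, y]_\huaL$ is another right $3$-Leibniz bracket on $\huaL$, so applying the first two steps to this opposite bracket ensures that $\widetilde{T}$ is a coalgebra morphism and that $(\overline{\huaL}, \widetilde{T})$ is a trilinear shelf. For \eqref{reversibility} itself, I would expand $\widetilde{T}(T(X, Y_{(2)}, Z_{(2)}), Z_{(1)}, Y_{(1)})$ over the four summands of $\triangle(Y) \otimes \triangle(Z)$: the two mixed summands have inner $T$-value $(0,0)$, whereas the summand with $Y_{(2)} = Z_{(2)} = (1_\K,0)$ contributes $(abc,\, bcx - [x, y, z]_\huaL)$ and the summand with $Y_{(1)} = Z_{(1)} = (1_\K,0)$ contributes $(0,\, [x, y, z]_\huaL)$, summing to $(abc, bcx) = \varepsilon(Y)\varepsilon(Z)\,X$; the companion identity $T(\widetilde{T}(X, Y_{(2)}, Z_{(2)}), Z_{(1)}, Y_{(1)}) = \varepsilon(Y)\varepsilon(Z)X$ is checked symmetrically.

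The main technical obstacle is the TSD verification: although each individual case is elementary, tracking all primitive/group-like profiles of the five arguments together with the nine placements produced by the extended coproducts is combinatorially heavy, and the reduction to \eqref{3-Lei-equation} becomes visible only after the cancellation pattern of all unwanted terms has been checked.
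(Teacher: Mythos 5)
Your proposal is correct and follows essentially the same route as the paper's proof: expand the iterated coproducts of the last two arguments so that only the placements keeping $y_1,y_2$ in the same slot survive and \eqref{TSD-formula} reduces to \eqref{3-Lei-equation}, verify the coalgebra-morphism property by the same eight-summand expansion with five vanishing terms, and use the identical $\widetilde{T}((a,x),(b,y),(c,z))=(abc,\,bcx-[x,z,y]_\huaL)$ with the same four-summand computation for \eqref{reversibility}. Your observation that $-[x,z,y]_\huaL$ is again a right $3$-Leibniz bracket, so that the shelf and coalgebra-morphism properties of $\widetilde{T}$ follow from the cases already proved, is a slightly cleaner justification than the paper's bare ``similarly'', but the argument is otherwise the same.
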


\begin{proof}
\emptycomment{For any $(a,x)\in\K\oplus\huaL$, we have
\begin{eqnarray*}
(\triangle\otimes{\Id})\circ\triangle(a,x)
&=&(\triangle\otimes{\Id})\Big((a,x)\otimes(1_\K,0)+(1_\K,0)\otimes(0,x)\Big)\\
&=&(a,x)\otimes(1_\K,0)\otimes(1_\K,0)+(1_\K,0)\otimes(0,x)\otimes(1_\K,0)+(1_\K,0)\otimes(0,x)\otimes(1_\K,0)\\
&=&({\Id}\otimes\triangle)\circ\triangle(a,x),\\
(\varepsilon\otimes\varepsilon)\triangle(a,x)
&=&(\varepsilon\otimes\varepsilon)\Big((a,x)\otimes(1_\K,0)+(1_\K,0)\otimes(0,x)\Big)=a
=\varepsilon(a,x),\\
\tau\triangle(a,x)
&=&\tau\Big((a,x)\otimes(1_\K,0)+(1_\K,0)\otimes(0,x)\Big)
=(1_\K,0)\otimes(a,x)+(0,x)\otimes(1_\K,0)\\
&=&a(1_\K,0)\otimes(1_\K,0)+(1_\K,0)\otimes(0,x)+(0,x)\otimes(1_\K,0)\\
&=&(a,x)\otimes(1_\K,0)+(1_\K,0)\otimes(0,x)=\triangle(a,x),
\end{eqnarray*}
which implies that $(\K\oplus\huaL,\triangle,\varepsilon)$ is a cocommutative coassociative counital coalgebra.}
For any $(a,x),(b,y),(c,z),(d,u),(e,v)\in\overline{\huaL}$, we have
\begin{eqnarray*}
&&T\Big(T\Big((a,x),(b,y),(c,z)\Big),(d,u),(e,v)\Big)\\
&=&T\Big((abc,bcx+[x,y,z]_\huaL),(d,u),(e,v)\Big)\\
&=&(abcde,bcdex+de[x,y,z]_\huaL+bc[x,u,v]_\huaL+[[x,y,z]_\huaL,u,v]_\huaL),\\
&&T\Big(T\Big((a,x),(d,u)_{(1)},(e,v)_{(1)}\Big),T\Big((b,y),(d,u)_{(2)},(e,v)_{(2)}\Big),T\Big((c,z),(d,u)_{(3)},(e,v)_{(3)}\Big)\Big)\\
&=&T\Big(T\Big((a,x),(d,u),(e,v)\Big),T\Big((b,y),(1_\K,0),(1_\K,0)\Big),T\Big((c,z),(1_\K,0),(1_\K,0)\Big)\Big)\\
&&+T\Big(T\Big((a,x),(d,u),(1_\K,0)\Big),T\Big((b,y),(1_\K,0),(0,v)\Big),T\Big((c,z),(1_\K,0),(1_\K,0)\Big)\Big)\\
&&+T\Big(T\Big((a,x),(d,u),(1_\K,0)\Big),T\Big((b,y),(1_\K,0),(1_\K,0)\Big),T\Big((c,z),(1_\K,0),(0,v)\Big)\Big)\\
&&+T\Big(T\Big((a,x),(1_\K,0),(e,v)\Big),T\Big((b,y),(0,u),(1_\K,0)\Big),T\Big((c,z),(1_\K,0),(1_\K,0)\Big)\Big)\\
&&+T\Big(T\Big((a,x),(1_\K,0),(1_\K,0)\Big),T\Big((b,y),(0,u),(0,v)\Big),T\Big((c,z),(1_\K,0),(1_\K,0)\Big)\Big)\\
&&+T\Big(T\Big((a,x),(1_\K,0),(1_\K,0)\Big),T\Big((b,y),(0,u),(1_\K,0)\Big),T\Big((c,z),(1_\K,0),(0,v)\Big)\Big)\\
&&+T\Big(T\Big((a,x),(1_\K,0),(e,v)\Big),T\Big((b,y),(1_\K,0),(1_\K,0)\Big),T\Big((c,z),(0,u),(1_\K,0)\Big)\Big)\\
&&+T\Big(T\Big((a,x),(1_\K,0),(1_\K,0)\Big),T\Big((b,y),(1_\K,0),(0,v)\Big),T\Big((c,z),(0,u),(1_\K,0)\Big)\Big)\\
&&+T\Big(T\Big((a,x),(1_\K,0),(1_\K,0)\Big),T\Big((b,y),(1_\K,0),(1_\K,0)\Big),T\Big((c,z),(0,u),(0,v)\Big)\Big)\\
&=&T\Big((ade,dex+[x,u,v]_\huaL),(b,y),(c,z)\Big)\\
&&+T\Big((a,x),(0,[y,u,v]_\huaL),(c,z)\Big)
+T\Big((a,x),(b,y),(0,[z,u,v]_\huaL\Big)\\
&=&(abcde,bcdex+bc[x,u,v]_\huaL+de[x,y,z]_\huaL+[[x,u,v]_\huaL,y,z]_\huaL)\\
&&+(0,[x,[y,u,v]_\huaL,z]_\huaL)+(0,[x,y,[z,u,v]_\huaL]_\huaL),
\end{eqnarray*}
which implies that $T$ satisfies \eqref{TSD-formula} by \eqref{3-Lei-equation}.
By Proposition \ref{ex-of-coalg},   $(\overline{\huaL}\otimes\overline{\huaL}\otimes\overline{\huaL},\triangle_{3},\varepsilon_{3})$ is a coassociative counital coalgebra, where for any $(a,x)\otimes(b,y)\otimes(c,z)\in\overline{\huaL}\otimes\overline{\huaL}\otimes\overline{\huaL}$, \begin{eqnarray*}
\triangle_3\Big((a,x)\otimes(b,y)\otimes(c,z)\Big)
&=&\Big((a,x)_{(1)}\otimes(b,y)_{(1)}\otimes(c,z)_{(1)}\Big)\otimes\Big((a,x)_{(2)}\otimes(b,y)_{(2)}\otimes(c,z)_{(2)}\Big)\\
\varepsilon_{3}\Big((a,x)\otimes(b,y)\otimes(c,z)\Big)
&=&\varepsilon(a,x)\varepsilon(b,y)\varepsilon(c,z).
\end{eqnarray*}
Then by \eqref{coproduct-on-overline-L} and \eqref{3-lei-to-TSD-formula}, we have
\begin{eqnarray*}
&&(T\otimes T)\circ\triangle_3\Big((a,x)\otimes(b,y)\otimes(c,z)\Big)\\
&=&(T\otimes T)\Big(\big((a,x)_{(1)}\otimes(b,y)_{(1)}\otimes(c,z)_{(1)}\big)\otimes\big((a,x)_{(2)}\otimes(b,y)_{(2)}\otimes(c,z)_{(2)}\big)\Big)\\
&=&(T\otimes T)\Big(\big((a,x)\otimes(b,y)\otimes(c,z)\big)\otimes\big((1_\K,0)\otimes(1_\K,0)\otimes(1_\K,0)\big)\Big)\\
&&+(T\otimes T)\Big(\big((a,x)\otimes(b,y)\otimes(1_\K,0)\big)\otimes\big((1_\K,0)\otimes(1_\K,0)\otimes(0,z)\big)\Big)\\
&&+(T\otimes T)\Big(\big((a,x)\otimes(1_\K,0)\otimes(c,z)\big)\otimes\big((1_\K,0)\otimes(0,y)\otimes(1_\K,0)\big)\Big)\\
&&+(T\otimes T)\Big(\big((a,x)\otimes(1_\K,0)\otimes(1_\K,0)\big)\otimes\big((1_\K,0)\otimes(0,y)\otimes(0,z)\big)\Big)\\
&&+(T\otimes T)\Big(\big((1_\K,0)\otimes(b,y)\otimes(c,z)\big)\otimes\big((0,x)\otimes(1_\K,0)\otimes(1_\K,0)\big)\Big)\\
&&+(T\otimes T)\Big(\big((1_\K,0)\otimes(b,y)\otimes(1_\K,0)\big)\otimes\big((0,x)\otimes(1_\K,0)\otimes(0,z)\big)\Big)\\
&&+(T\otimes T)\Big(\big((1_\K,0)\otimes(1_\K,0)\otimes(c,z)\big)\otimes\big((0,x)\otimes(0,y)\otimes(1_\K,0)\big)\Big)\\
&&+(T\otimes T)\Big(\big((1_\K,0)\otimes(1_\K,0)\otimes(1_\K,0)\big)\otimes\big((0,x)\otimes(0,y)\otimes(0,z)\big)\Big)\\
&=&(abc,bcx+[x,y,z]_\huaL)\otimes(1_\K,0)+(1_\K,0)\otimes(0,bcx)+(1_\K,0)\otimes(0,[x,y,z]_\huaL)\\
&=&(abc,bcx+[x,y,z]_\huaL)\otimes(1_\K,0)+(1_\K,0)\otimes(0,bcx+[x,y,z]_\huaL),\\
&=&\triangle(abc,bcx+[x,y,z]_\huaL)\\
&=&\triangle\circ T\Big((a,x),(b,y),(c,z)\Big),\\
&&\varepsilon\circ T\Big((a,x),(b,y),(c,z)\Big)\\
&=&\varepsilon(abc,bcx+[x,y,z]_\huaL)=abc
=\varepsilon(a,x)\varepsilon(b,y)\varepsilon(c,z)\\
&=&\varepsilon_3\Big((a,x)\otimes(b,y)\otimes(c,z)\Big),
\end{eqnarray*}
which implies that $T$ is a coalgebra morphism. Therefore, $(\overline{\huaL},T)$ is a trilinear shelf.

Moreover, define $\widetilde{T}:\overline{\huaL}\otimes\overline{\huaL}\otimes\overline{\huaL}\rightarrow\overline{\huaL}$
as follows:
\begin{eqnarray}\label{rever-of-T}
\widetilde{T}\Big((a,x),(b,y),(c,z)\Big)=(abc,bcx-[x,z,y]_\huaL).
\end{eqnarray}
Similarly, we obtain that $(\overline{\huaL},\widetilde{T})$ is also a trilinear shelf. For any $(a,x),(b,y),(c,z)\in\overline{\huaL}$, we have
\begin{eqnarray*}
&&\widetilde{T}\Big(T\Big((a,x),(b,y)_{(2)},(c,z)_{(2)}\Big),(c,z)_{(1)},(b,y)_{(1)}\Big)\\
&=&\widetilde{T}\Big(T\Big((a,x),(1_\K,0),(1_\K,0)\Big),(c,z),(b,y)\Big)
+\widetilde{T}\Big(T\Big((a,x),(1_\K,0),(0,z)\Big),(1_\K,0),(b,y)\Big)\\
&&+\widetilde{T}\Big(T\Big((a,x),(0,y),(1_\K,0)\Big),(c,z),(1_\K,0)\Big)
+\widetilde{T}\Big(T\Big((a,x),(0,y),(0,z)\Big),(1_\K,0),(1_\K,0)\Big)\\
&=&\widetilde{T}\Big((a,x),(c,z),(b,y)\Big)+\widetilde{T}\Big((0,[x,y,z]_\huaL),(1_\K,0),(1_\K,0)\Big)\\
&=&(abc,bcx-[x,y,z]_\huaL)+(0,[x,y,z]_\huaL)\\
&=&(abc,bcx)=bc(a,x)=\varepsilon(b,y)\varepsilon(c,z)\cdot(a,x).
\end{eqnarray*}
Similarly, ${T}\Big(\widetilde {T}\Big((a,x),(b,y)_{(2)},(c,z)_{(2)}\Big),(c,z)_{(1)},(b,y)_{(1)}\Big)=\varepsilon(b,y)\varepsilon(c,z)\cdot(a,x)$,
which implies that $(\overline{\huaL},T)$ is a trilinear rack.
\end{proof}

\begin{rmk}
Note that in \cite{Abramov}, the authors constructed a trilinear rack by a $3$-Lie algebra. Here we generalize the construction to a $3$-Leibniz algebra. Furthermore, the $\widetilde {T}$ given by \eqref{rever-of-T} is different from the one given in \cite[Lemma~4.1]{Abramov}.
\end{rmk}

\begin{ex}
Consider the $3$-Leibniz algebra $(\BO,[\cdot,\cdot,\cdot])$ showed in {\rm Example \ref{ex-octonion}}. By {\rm Theorem \ref{3-Lei-to-TSD}}, we obtain a trilinear rack
$(\R\oplus\BO,T)$, where for all $(a,x),(b,y),(c,z)\in\R\oplus\BO$,
$$T\Big((a,x),(b,y),(c,z)\Big)=\Big(abc,bcx+z(yx)-y(zx)+(xy)z-(xz)y+(yx)z-y(xz)\Big).$$
\end{ex}

\begin{ex}
Consider the $3$-Leibniz algebra $(\gl(V)\oplus V,[\cdot,\cdot,\cdot])$ showed in {\rm Example \ref{ex-omni-3-lie}}, where $V$ is a vector space over $\K$.
Then we obtain a trilinear rack
$(\K\oplus\gl(V)\oplus V,T)$ by {\rm Theorem \ref{3-Lei-to-TSD}}, where for all $(a,A,u),(b,B,v),(c,C,w)\in\K\oplus\gl(V)\oplus V$,
\begin{eqnarray*}
T\Big((a,A,u),(b,B,v),(c,C,w)\Big)=(abc,bcA+[A,[B,C]],bcu-[B,C]u).
\end{eqnarray*}
\end{ex}

\begin{pro}\label{3-lei-to-sol}
Let $(\huaL,[\cdot,\cdot,\cdot]_\huaL)$ be a $3$-Leibniz algebra. Then there is a solution of the Yang-Baxter equation $R^{T}_{\overline{\huaL}\otimes\overline{\huaL}}$ on $\overline{\huaL}\otimes\overline{\huaL}$ as follows:
\begin{eqnarray*}
&&R^{T}_{\overline{\huaL}\otimes\overline{\huaL}}
\Big((a_1,x_1)\otimes(a_2,x_2)\otimes(b_1,y_1)\otimes(b_2,y_2)\Big)\\
&=&(b_1,y_1)\otimes(b_2,y_2)\otimes(a_1,x_1)\otimes(a_2,x_2)\\
&&+(1_\mathds{K},0)\otimes(1_\mathds{K},0)\otimes\Big((0,[x_1,y_1,y_2]_\huaL)\otimes(a_2,x_2)+(a_1,x_1)\otimes(0,[x_2,y_1,y_2]_\huaL)\Big).
\end{eqnarray*}
\end{pro}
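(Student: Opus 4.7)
The strategy is to compose the two theorems already established: first invoke Theorem \ref{3-Lei-to-TSD} to obtain the trilinear rack $(\overline{\huaL},T)$ associated to the 3-Leibniz algebra $\huaL$, and then apply Theorem \ref{TSD-to-solution} to convert this trilinear rack into a solution of the Yang-Baxter equation on $\overline{\huaL}\otimes\overline{\huaL}$. For the second step I need $\overline{\huaL}$ to be cocommutative, which is immediate from \eqref{coproduct-on-overline-L}: since $(a,x)=a(1_\K,0)+(0,x)$, the coproduct splits into summands that are manifestly symmetric under the flip $\tau$. The abstract existence of the solution is therefore immediate; the content of the proposition is the concrete formula.

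To extract that formula, I would substitute $u=(a_1,x_1)$, $v=(a_2,x_2)$, $m=(b_1,y_1)$, $n=(b_2,y_2)$ into \eqref{TSD-to-solution-formula} and expand. Each of the iterated coproducts $(\triangle\otimes\Id)\circ\triangle(m)$ and $(\triangle\otimes\Id)\circ\triangle(n)$ has exactly three summands: one where the full element $(b_i,y_i)$ sits in the first slot and the other two slots are $(1_\K,0)$, and two others where $(0,y_i)$ sits in the second or third slot. Pairing these gives nine combinations of triples to plug into $T(u,m_{(2)},n_{(2)})\otimes T(v,m_{(3)},n_{(3)})$.

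The key observation that trims the nine terms to three is that the formula $T((a,x),(b,y),(c,z))=(abc,\,bcx+[x,y,z]_\huaL)$ annihilates any input with a zero entry in the second or third slot: the scalar component vanishes because $bc=0$, and the $\huaL$-component vanishes because $[x,y,z]_\huaL$ is trilinear in its arguments. Consequently the only surviving combinations are the "all-scalar" one giving the swap term $(b_1,y_1)\otimes(b_2,y_2)\otimes(a_1,x_1)\otimes(a_2,x_2)$, and the two "all-$\huaL$" mixed combinations that produce $(1_\K,0)\otimes(1_\K,0)\otimes(0,[x_1,y_1,y_2]_\huaL)\otimes(a_2,x_2)$ and $(1_\K,0)\otimes(1_\K,0)\otimes(a_1,x_1)\otimes(0,[x_2,y_1,y_2]_\huaL)$ respectively.

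The main (and essentially only) difficulty is the bookkeeping involved in correctly enumerating the nine tensor contributions and identifying the vanishing ones; no conceptual obstacle remains, since the trilinear rack axioms and the coalgebra morphism property of $T$ have already been verified in Theorem \ref{3-Lei-to-TSD}, and Theorem \ref{TSD-to-solution} provides the Yang-Baxter property for free. Summing the three surviving terms yields exactly the formula stated in the proposition.
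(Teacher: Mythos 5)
Your proposal is correct and follows essentially the same route as the paper's own proof: invoke Theorem \ref{3-Lei-to-TSD} to obtain the trilinear rack $(\overline{\huaL},T)$, feed it into Theorem \ref{TSD-to-solution}, expand the two iterated coproducts into nine tensor contributions, and observe that only three survive. The only caveat is that your stated vanishing criterion is worded too loosely --- $T$ does not annihilate every input with a zero entry in the second or third slot (for instance $T\big((a,x),(1_\K,0),(1_\K,0)\big)=(a,x)$); what actually kills the six mixed terms is that one of the last two arguments has zero scalar part (forcing $bc=0$) while another has zero $\huaL$-part (forcing the bracket to vanish by trilinearity) --- but the three surviving terms you identify are exactly the right ones, matching the paper's computation.
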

\begin{proof}
By {\rm Theorem \ref{3-Lei-to-TSD}}, there is a trilinear rack $(\overline{\huaL},T)$ constructed by the $3$-Leibniz algebra $(\huaL,[\cdot,\cdot,\cdot]_\huaL)$, which induces a solution of the Yang-Baxter equation on $\overline{\huaL}\otimes\overline{\huaL}$ by {\rm Theorem \ref{TSD-to-solution}} as follows:
\begin{eqnarray*}
&&R^{T}_{\overline{\huaL}\otimes\overline{\huaL}}
\Big((a_1,x_1)\otimes(a_2,x_2)\otimes(b_1,y_1)\otimes(b_2,y_2)\Big)\\
&=&\Big((b_1,y_1)_{(1)}\otimes (b_2,y_2)_{(1)}\Big)\otimes\Big(T((a_1,x_1),(b_1,y_1)_{(2)},(b_2,y_2)_{(2)})\otimes T((a_2,x_2),(b_1,y_1)_{(3)},(b_2,y_2)_{(3)})\Big)\\
&=&\Big((b_1,y_1)\otimes(b_2,y_2)\Big)\otimes\Big(T\big((a_1,x_1),(1_\K,0),(1_\K,0)\big)\otimes T\big((a_2,x_2),(1_\K,0),(1_\K,0)\big)\Big)\\
&&+\Big((1_\K,0)\otimes(b_2,y_2)\Big)\otimes\Big(T\big((a_1,x_1),(0,y_1),(1_\K,0)\big)\otimes T\big((a_2,x_2),(1_\K,0),(1_\K,0)\big)\Big)\\
&&+\Big((1_\K,0)\otimes(b_2,y_2)\Big)\otimes\Big(T\big((a_1,x_1),(1_\K,0),(1_\K,0)\big)\otimes T\big((a_2,x_2),(0,y_1),(1_\K,0)\big)\Big)\\
&&+\Big((b_1,y_1)\otimes(1_\K,0)\Big)\otimes\Big(T\big((a_1,x_1),(1_\K,0),(0,y_2)\big)\otimes T\big((a_2,x_2),(1_\K,0),(1_\K,0)\big)\Big)\\
&&+\Big((1_\K,0)\otimes(1_\K,0)\Big)\otimes\Big(T\big((a_1,x_1),(0,y_1),(0,y_2)\big)\otimes T\big((a_2,x_2),(1_\K,0),(1_\K,0)\big)\Big)\\
&&+\Big((1_\K,0)\otimes(1_\K,0)\Big)\otimes\Big(T\big((a_1,x_1),(1_\K,0),(0,y_2)\big)\otimes T\big((a_2,x_2),(0,y_1),(1_\K,0)\big)\Big)\\
&&+\Big((b_1,y_1)\otimes(1_\K,0)\Big)\otimes\Big(T\big((a_1,x_1),(1_\K,0),(1_\K,0)\big)\otimes T\big((a_2,x_2),(1_\K,0),(0,y_2)\big)\Big)\\
&&+\Big((1_\K,0)\otimes(1_\K,0)\Big)\otimes\Big(T\big((a_1,x_1),(0,y_1),(1_\K,0)\big)\otimes T\big((a_2,x_2),(1_\K,0),(0,y_2)\big)\Big)\\
&&+\Big((1_\K,0)\otimes(1_\K,0)\Big)\otimes\Big(T\big((a_1,x_1),(1_\K,0),(1_\K,0)\big)\otimes T\big((a_2,x_2),(0,y_1),(0,y_2)\big)\Big)\\
&=&\Big((b_1,y_1)\otimes(b_2,y_2)\Big)\otimes\Big((a_1,x_1)\otimes(a_2,x_2)\Big)\\
&&+\Big((1_\K,0)\otimes(1_\K,0)\Big)\otimes\Big((0,[x_1,y_1,y_2]_\huaL)\otimes(a_2,x_2)\Big)\\
&&+\Big((1_\K,0)\otimes(1_\K,0)\Big)\otimes\Big((a_1,x_1)\otimes(0,[x_2,y_1,y_2]_\huaL)\Big)\\
&=&(b_1,y_1)\otimes(b_2,y_2)\otimes(a_1,x_1)\otimes(a_2,x_2)\\
&&+(1_\mathds{K},0)\otimes(1_\mathds{K},0)\otimes\Big((0,[x_1,y_1,y_2]_\huaL)\otimes(a_2,x_2)+(a_1,x_1)\otimes(0,[x_2,y_1,y_2]_\huaL)\Big).\qedhere
\end{eqnarray*}
\end{proof}

\section{Central extensions of (3-)Leibniz algebras and the Yang-Baxter equation}\label{sec-lei-3-lei}

Recall that a central Leibniz algebra gives rise to a solution of the Yang-Baxter equation (Theorem \ref{lei to solution}). In this section, first we use central extensions of Leibniz algebras to construct solutions of the Yang-Baxter equation. Then  we prove that  a $3$-Leibniz algebra $\huaL$ gives rise to two central Leibniz algebras, which are  $(\K\oplus\huaL)\otimes(\K\oplus\huaL)$ and  $\K\oplus(\huaL\otimes\huaL)$ respectively. We show that there is a homomorphism between the corresponding solutions of the Yang-Baxter equation.
\subsection{Leibniz algebras and the Yang-Baxter equation}
A central Leibniz algebra $(\huaE,[\cdot,\cdot]_\huaE,{\bf 1})$ can be viewed as a central extension of Leibniz algebras. Recall that a central extension of a Leibniz algebra $(\huaE,[\cdot,\cdot]_\huaE)$ is a short exact sequence of Leibniz algebras
$$0\longrightarrow \K\stackrel{i}{\longrightarrow} \hat{\huaE}\stackrel{p}{\longrightarrow} \huaE\longrightarrow 0,$$
such that ${i}(\K)$ is in the center of $\hat{\huaE}$. Then we can see that a central Leibniz algebra $(\huaE,[\cdot,\cdot]_\huaE,{\bf 1})$ gives rise to a central extension of $\huaE/{\K{\bf 1}}$:
$$0\longrightarrow\K{\bf 1}\longrightarrow \huaE\longrightarrow{\huaE/{\K{\bf 1}}}\longrightarrow 0.$$

 Two such extensions
$\hat{\huaE}_1$ and $\hat{\huaE}_2$ of a Leibniz algebra $(\huaE,[\cdot,\cdot]_\huaE)$ by $\K$ are isomorphic if there exists a Leibniz algebra homomorphism $\theta$ from $\hat{\huaE}_1$ to $\hat{\huaE}_2$ such that
$$\theta\circ i_1=i_2,\,\,\,p_2\circ\theta=p_1.$$
It is well known that central extensions of  a Leibniz algebra $(\huaE,[\cdot,\cdot]_\huaE)$ by $\K$  are classified by the second cohomology group $\huaH^2(\huaE;\K)$ of the Leibniz algebra $(\huaE,[\cdot,\cdot]_\huaE)$ with coefficients in the trivial representations \cite{Loday}. In fact, suppose that $s:\huaE\rightarrow\hat{\huaE}$ is a linear map satisfying $p\circ s={\Id}_\huaE$, then $\hat{\huaE}$ is isomorphic to $\K\oplus\huaE$ as vector space by $F(X)=(X-sp(X),p(X))$ for all $X\in\hat{\huaE}$ and $F^{-1}(a,x)=a+s(x)$ for all $(a,x)\in\K\oplus\huaE$. Define a linear map $\omega:\huaE\otimes\huaE\rightarrow\K$ as follows:
\begin{eqnarray*}
\omega(x,y)=[s(x),s(y)]_{\hat{\huaE}}-s[x,y]_\huaE,\,\,\,\forall x,y\in\huaE.
\end{eqnarray*}
Then we obtain a bilinear map $[\cdot,\cdot]_\omega$ on $\K\oplus\huaE$ by
\begin{eqnarray}\label{cen-ext-bracket}
[(a,x),(b,y)]_\omega=F[F^{-1}(a,x),F^{-1}(b,y)]_{\hat{\huaE}}=(\omega(x,y),[x,y]_\huaE).
\end{eqnarray}
$(\K\oplus\huaE,[\cdot,\cdot]_\omega)$ is a Leibniz algebra if and only if $\omega$ is a 2-cocycle with coefficients in the trivial representation $\K$. Moreover, $[\omega]\in\huaH^2(\huaE;\K)$ is independent of the choice of $s$.
\emptycomment{
\begin{proof}
$(\huaV\oplus\huaE,[\cdot,\cdot]_\omega)$ is a Leibniz algebra if and only if
\begin{eqnarray*}
[[(a,x),(b,y)]_\omega,(c,z)]_\omega=[[(a,x),(c,z)]_\omega,(b,y)]_\omega+[(a,x),[(b,y),(c,z)]_\omega]_\omega,
\end{eqnarray*}
which is equivalent to
\begin{eqnarray*}
\omega([x,y]_\huaE,z)-\omega([x,z]_\huaE,y)-\omega(x,[y,z]_\huaE)=0.
\end{eqnarray*}
It shows that $\partial\omega(x,y,z)=0$, where $\partial$ is the coboundary operator with respect to the trivial representation $\huaV$. Thus, $(\huaV\oplus\huaE,[\cdot,\cdot]_\omega)$ is a Leibniz algebra if and only if $\omega$ is a 2-cocycle.

Let $s':\huaE\rightarrow{\hat{\huaE}}$ be another section of the central extension of $\huaE$ by $\huaV$ and $\omega'$ the associated $2$-cocycle. By $p\circ s'=\Id_{\huaE}=$ $p\circ s$, we obtain that $(s'-s)(x)\in$ Ker$p\cong$ Im$i=\huaV$. Then we define $f=s'-s\in$ Hom$(\huaE,\huaV)=\mathcal{C}^1(\huaE;\huaV)$. By direct computation, we have
\begin{eqnarray*}
(\omega'-\omega)(x,y)&=&[s'(x),s'(y)]_{\tilde{\huaE}}-s'[x,y]_\huaE-[s(x),s(y)]_{\tilde{\huaE}}+s[x,y]_\huaE\\
&=&[s'(x),s'(y)-s(y)]_{\tilde{\huaE}}+[s'(x)-s(x),s(y)]_{\tilde{\huaE}}-s'[x,y]_\huaE+s[x,y]_\huaE\\
&=&-f[x,y]_\huaE=\partial f(x,y),
\end{eqnarray*}
which implies that $\omega$ and $\omega'$ are in the same cohomology class, i.e. $[\omega]\in\huaH^2(\huaE;\huaV)$ is independent of the choice of sections.
\end{proof}}
Since any central extension of $\huaE$ by $\K$ is isomorphic to $(\K\oplus\huaE,[\cdot,\cdot]_\omega)$, then we only consider central extensions of the form $(\K\oplus\huaE,[\cdot,\cdot]_\omega)$ in the sequel.

\emptycomment{\begin{pro}\label{extension and 2-cocycle pro}\cite{Loday}
Two central extensions $(\K\oplus\huaE,[\cdot,\cdot]_{\omega_1})$ and $(\K\oplus\huaE,[\cdot,\cdot]_{\omega_2})$ of $\huaE$ by $\K$ are isomorphic if and only if $\omega_1$ and $\omega_2$ are in the same cohomology class, that is, $\omega_2-\omega_1=\partial f$, where $\partial$ is the coboundary operator of $\huaE$ with coefficient in the trivial representation $\K$ and $f:\huaE\rightarrow\K$ is a linear map.
\end{pro}
\begin{proof}
Let $\theta$ be the Leibniz homomorphism such that two central extensions $(\huaV\oplus\huaE)_{\omega_1}$ and $(\huaV\oplus\huaE)_{\omega_2}$ are isomorphic. By the commutative diagram in Definition \ref{def of ext}, it is obvious that there exists $f:\huaE\rightarrow\huaV$ such that
$$\theta(a,x)=(a-f(x),x),\,\,\,\forall a\in\huaV,\,x\in\huaE.$$
Then for all $a,b\in\huaV, x,y\in\huaE$, we have
\begin{eqnarray*}
\theta[(a,x),(b,y)]_{\omega_1}&=&\theta(\omega_1(x,y),[x,y]_{\huaE})=(\omega_1(x,y)-f[x,y]_{\huaE},[x,y]_{\huaE}),\\
{[}\theta(a,x),\theta(b,y){]}_{\omega_2}&=&[(a-f(x),x),(b-f(y),y)]_{\omega_2}=(\omega_2(x,y),[x,y]_{\huaE}),
\end{eqnarray*}
which implies that $\omega_2(x,y)=\omega_1(x,y)-f[x,y]_\huaE=\omega_1(x,y)+\partial f(x,y)$ since $\theta[(a,x),(b,y)]_{\omega_1}=[\theta(a,x),\theta(b,y)]_{\omega_2}$.
Therefore, $\omega_1$ and $\omega_2$ are in the same cohomology class.

Conversely, let $\omega_2-\omega_1=\partial f$ where $f:\huaE\rightarrow\huaV$ is a linear map. There exists a Leibniz algebra homomorphism $\theta:(\huaV\oplus\huaE)_{\omega_1}\rightarrow(\huaV\oplus\huaE)_{\omega_2}$ defined by $\theta(m,x)=(m-f(x),x)$ such that the central extensions $(\huaV\oplus\huaE,[\cdot,\cdot]_{\omega_1})$ and $(\huaV\oplus\huaE,[\cdot,\cdot]_{\omega_2})$ of are isomorphic.
\end{proof}}

\begin{thm}\label{cen-ext-to-sol}
Let $(\huaE,[\cdot,\cdot]_\huaE)$ be a Leibniz algebra and $(\K\oplus\huaE,[\cdot,\cdot]_\omega)$ a central extension of $\huaE$ by $\K$. Then the linear map $R^{Lei}:(\K\oplus\huaE)\otimes(\K\oplus\huaE)\rightarrow(\K\oplus\huaE)\otimes(\K\oplus\huaE)$ defined as follows is a solution of the Yang-Baxter equation:
\begin{eqnarray}\label{Lei-cen-ext-to-solution}
&&R^{Lei}((a,x)\otimes(b,y))=(b,y)\otimes(a,x)+(1_\K,0)\otimes(\omega(x,y),[x,y]_\huaE).
\end{eqnarray}
Moreover,  solutions induced by  isomorphic central extensions of $\huaE$ are equivalent.
\end{thm}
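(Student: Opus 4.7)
The plan is to reduce the first assertion to a direct application of Theorem~\ref{lei to solution}. First I would observe that the formula \eqref{cen-ext-bracket} for the bracket on the central extension gives $[(1_\K,0),(b,y)]_\omega=(\omega(0,y),[0,y]_\huaE)=(0,0)$ and symmetrically $[(a,x),(1_\K,0)]_\omega=(0,0)$, using bilinearity of $\omega$ and of $[\cdot,\cdot]_\huaE$. Thus $(1_\K,0)$ is a central element of $(\K\oplus\huaE,[\cdot,\cdot]_\omega)$, and $(\K\oplus\huaE,[\cdot,\cdot]_\omega,(1_\K,0))$ is a central Leibniz algebra. Plugging into formula \eqref{lei to solution formula} of Theorem~\ref{lei to solution} with ${\bf 1}=(1_\K,0)$ reproduces exactly \eqref{Lei-cen-ext-to-solution}, so $R^{Lei}$ is a solution of the Yang-Baxter equation.

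For the second assertion, let $\theta:(\K\oplus\huaE,[\cdot,\cdot]_{\omega_1})\to(\K\oplus\huaE,[\cdot,\cdot]_{\omega_2})$ be an isomorphism of central extensions, with solutions $R^{Lei}_{\omega_1}$ and $R^{Lei}_{\omega_2}$ constructed as in \eqref{Lei-cen-ext-to-solution}. By the defining compatibility $\theta\circ i_1=i_2$, we have $\theta(1_\K,0)=(1_\K,0)$. A direct computation using the definition \eqref{Lei-cen-ext-to-solution} together with the fact that $\theta$ is a Leibniz algebra homomorphism then yields
\begin{eqnarray*}
(\theta\otimes\theta)R^{Lei}_{\omega_1}\big((a,x)\otimes(b,y)\big)
&=&\theta(b,y)\otimes\theta(a,x)+\theta(1_\K,0)\otimes\theta[(a,x),(b,y)]_{\omega_1}\\
&=&\theta(b,y)\otimes\theta(a,x)+(1_\K,0)\otimes[\theta(a,x),\theta(b,y)]_{\omega_2}\\
&=&R^{Lei}_{\omega_2}(\theta\otimes\theta)\big((a,x)\otimes(b,y)\big),
\end{eqnarray*}
so the invertible linear map $\theta\otimes\theta$ intertwines $R^{Lei}_{\omega_1}$ and $R^{Lei}_{\omega_2}$, witnessing their equivalence.

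The argument is essentially formal: there is no substantive obstacle beyond (i) recognizing the central extension $(\K\oplus\huaE,[\cdot,\cdot]_\omega)$ as a central Leibniz algebra with canonical central element $(1_\K,0)$, so that Theorem~\ref{lei to solution} applies verbatim, and (ii) observing that any isomorphism of central extensions automatically preserves the distinguished central element and the bracket, hence intertwines the two solutions by naturality of the construction. The only point worth flagging is a convention: by "equivalent solutions" we mean related by conjugation with an invertible linear map on the underlying space, which is exactly the content of the displayed intertwining identity above.
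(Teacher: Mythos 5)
Your proposal is correct and follows essentially the same route as the paper: the first assertion is reduced to Theorem \ref{lei to solution} by observing that $(1_\K,0)$ is a central element of $(\K\oplus\huaE,[\cdot,\cdot]_\omega)$, and the second is proved by showing that $\theta\otimes\theta$ intertwines the two solutions. The only (cosmetic) difference is that the paper carries out the intertwining computation in coordinates, writing $\theta(a,x)=(a-f(x),x)$ with $\omega_2-\omega_1=\partial f$, whereas you argue abstractly from $\theta\circ i_1=i_2$ and the homomorphism property of $\theta$, which is arguably cleaner.
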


\begin{proof}
It is obvious that $(1_\K,0)$ is a central element of the Leibniz algebra $\K\oplus\huaE$, then we obtain a solution of the Yang-Baxter equation $R^{Lei}:(\K\oplus\huaE)\otimes(\K\oplus\huaE)\rightarrow(\K\oplus\huaE)\otimes(\K\oplus\huaE)$
given by \eqref{lei to solution formula} as follows:
\begin{eqnarray*}\label{Lei-cen-ext-to-solution}
R^{Lei}((a,x)\otimes(b,y))&=&(b,y)\otimes(a,x)+(1_\K,0)\otimes[(a,x),(b,y)]_\omega\\
&=&(b,y)\otimes(a,x)+(1_\K,0)\otimes(\omega(x,y),[x,y]_\huaE).
\end{eqnarray*}

Let $\theta:(\K\oplus\huaE)_{\omega_1}\rightarrow(\K\oplus\huaE)_{\omega_2}$ be the Leibniz algebra homomorphism such that two central extensions $(\K\oplus\huaE,[\cdot,\cdot]_{\omega_1})$ and $(\K\oplus\huaE,[\cdot,\cdot]_{\omega_2})$ are isomorphic. Then  we have $\theta(a,x)=(a-f(x),x)$, where $f:\huaE\rightarrow\K$ such that $\omega_2-\omega_1=\partial f$, where $\partial:\Hom(\otimes^k\huaE,\K)\to \Hom(\otimes^{k+1}\huaE,\K)$ is the coboundary operator given by
\begin{eqnarray*}
(\partial f)(x_1,\cdots,x_{k+1})
&=&\sum_{1\le i<j\le k+1}(-1)^{j+1}f(x_{1},\cdots,x_{i-1},[x_i,x_j]_\huaE,x_{i+1},\cdots,\hat{x_j},\cdots,x_{k+1}).
\end{eqnarray*}
Denote by $R^{Lei}_1$ and $R^{Lei}_2$the corresponding solutions of the Yang-Baxter equation induced by central Leibniz algebras $(\K\oplus\huaE,[\cdot,\cdot]_{\omega_1})$ and $(\K\oplus\huaE,[\cdot,\cdot]_{\omega_2})$ respectively. For all $a,b\in\K,\,x,y\in\huaE$, we have
\begin{eqnarray*}
(\theta\otimes\theta)R^{Lei}_1((a,x)\otimes(b,y))&=&(\theta\otimes\theta)((b,y)\otimes(a,x)+(1_\K,0)\otimes(\omega_1(x,y),[x,y]_\huaE))\\
&=&(b-f(y),y)\otimes(a-f(x),x)+(1_\K,0)\otimes(\omega_1(x,y)-f[x,y]_\huaE,[x,y]_\huaE),\\
R^{Lei}_2(\theta\otimes\theta)((a,x)\otimes(b,y))&=&R^{Lei}_2((a-f(x),x)\otimes(b-f(y),y))\\
&=&(b-f(y),y)\otimes(a-f(x),x)+(1_\K,0)\otimes(\omega_2(x,y),[x,y]_\huaE),
\end{eqnarray*}
which implies that $(\theta\otimes\theta)R^{Lei}_1=R^{Lei}_2(\theta\otimes\theta)$, i.e. $R^{Lei}_1$ and $R^{Lei}_2$ are equivalent.
\end{proof}

\begin{ex}
Consider the $1$-dimensional abelian Leibniz algebra $\huaE$ over $K$ with the basis  $\{e\}$. Define $\omega:\huaE\otimes\huaE\to\K$ by $\omega(e,e)=1_\K$, which is obviously a $2$-cocycle with coefficients in the trivial representation $\K$. By {\rm Theorem \ref{cen-ext-to-sol}}, we obtain a solution of the Yang-Baxter equation $R^{Lei}_{\K\oplus\huaE}$ given with respect to the basis $\{(1_\K,0)\otimes(1_\K,0),(1_\K,0)\otimes(0,e),(0,e)\otimes(1_\K,0),(0,e)\otimes(0,e)\}$ of $(\K\oplus\huaE)\otimes(\K\oplus\huaE)$ as follows:
\begin{equation*}
R^{Lei}_{\K\oplus\huaE}=
\left(
\begin{array}{cccc}
1 & 0 & 0 & 1\\
0 & 0 & 1 & 0\\
0 & 1 & 0 & 0\\
0 & 0 & 0 & 1
\end{array}
\right).
\end{equation*}
\end{ex}

\begin{ex}
In \cite{AOR}, the author showed that there are $4$ type of $2$-dimensional Leibniz algebras over $\C$ up to isomorphism with respect to the basis $\{e_1,e_2\}$:
\begin{eqnarray*}
&&\huaE_1:\quad abelian~algebra,\\
&&\huaE_2:\quad [e_1,e_2]_{\huaE_2}=e_2=-[e_2,e_1]_{\huaE_2},\\
&&\huaE_3:\quad [e_2,e_2]_{\huaE_3}=e_1,\\
&&\huaE_4:\quad [e_1,e_2]_{\huaE_4}=[e_2,e_2]_{\huaE_4}=e_1.
\end{eqnarray*}
Since $\{e_1,e_2\}$ is the basis of $\huaE_i$, $\{(1,0)\otimes(1,0),(1,0)\otimes(0,e_1),(1,0)\otimes(0,e_2),\cdots,(0,e_2)\otimes(0,e_2)\}$ is a basis of $(\C\oplus\huaE_i)\otimes(\C\oplus\huaE_i)$. By {\rm Theorem \ref{cen-ext-to-sol}}, we obtain solutions of the Yang-Baxter equation $R^{Lei}_{\C\oplus\huaE_i}$ given with respect to the above basis as follows:
\begin{equation*}
R_{\C\oplus\huaE_1}^{Lei}=
\left(
\begin{array}{ccccccccc}
    1 & 0 & 0 & 0 & a_1 & a_2 & 0 & a_3 & a_4 \\
    0 & 0 & 0 & 1 & 0 & 0 & 0 & 0 & 0 \\
    0 & 0 & 0 & 0 & 0 & 0 & 1 & 0 & 0 \\
    0 & 1 & 0 & 0 & 0 & 0 & 0 & 0 & 0 \\
    0 & 0 & 0 & 0 & 1 & 0 & 0 & 0 & 0  \\
    0 & 0 & 0 & 0 & 0 & 0 & 0 & 1 & 0 \\
    0 & 0 & 1 & 0 & 0 & 0 & 0 & 0 & 0  \\
    0 & 0 & 0 & 0 & 0 & 1 & 0 & 0 & 0 \\
    0 & 0 & 0 & 0 & 0 & 0 & 0 & 0 & 1
\end{array}
\right),\quad
R_{\C\oplus\huaE_2}^{Lei}=
\left(
\begin{array}{ccccccccc}
    1 & 0 & 0 & 0 & b_1 & b_2 & 0 & 0 & 0 \\
    0 & 0 & 0 & 1 & 0 & 0 & 0 & 0 & 0 \\
    0 & 0 & 0 & 0 & 0 & 1 & 1 & -1 & 0 \\
    0 & 1 & 0 & 0 & 0 & 0 & 0 & 0 & 0 \\
    0 & 0 & 0 & 0 & 1 & 0 & 0 & 0 & 0  \\
    0 & 0 & 0 & 0 & 0 & 0 & 0 & 1 & 0 \\
    0 & 0 & 1 & 0 & 0 & 0 & 0 & 0 & 0  \\
    0 & 0 & 0 & 0 & 0 & 1 & 0 & 0 & 0 \\
    0 & 0 & 0 & 0 & 0 & 0 & 0 & 0 & 1
\end{array}
\right),
\end{equation*}
\begin{equation*}
R_{\C\oplus\huaE_3}^{Lei}=
\left(
\begin{array}{ccccccccc}
    1 & 0 & 0 & 0 & 0 & 0 & 0 & c_1 & c_2 \\
    0 & 0 & 0 & 1 & 0 & 0 & 0 & 0 & 1 \\
    0 & 0 & 0 & 0 & 0 & 0 & 1 & 0 & 0 \\
    0 & 1 & 0 & 0 & 0 & 0 & 0 & 0 & 0 \\
    0 & 0 & 0 & 0 & 1 & 0 & 0 & 0 & 0  \\
    0 & 0 & 0 & 0 & 0 & 0 & 0 & 1 & 0 \\
    0 & 0 & 1 & 0 & 0 & 0 & 0 & 0 & 0  \\
    0 & 0 & 0 & 0 & 0 & 1 & 0 & 0 & 0 \\
    0 & 0 & 0 & 0 & 0 & 0 & 0 & 0 & 1
\end{array}
\right),\quad
R_{\C\oplus\huaE_4}^{Lei}=
\left(
\begin{array}{ccccccccc}
    1 & 0 & 0 & 0 & 0 & 0 & 0 & d_1 & d_2 \\
    0 & 0 & 0 & 1 & 0 & 1 & 0 & 0 & 1 \\
    0 & 0 & 0 & 0 & 0 & 0 & 1 & 0 & 0 \\
    0 & 1 & 0 & 0 & 0 & 0 & 0 & 0 & 0 \\
    0 & 0 & 0 & 0 & 1 & 0 & 0 & 0 & 0  \\
    0 & 0 & 0 & 0 & 0 & 0 & 0 & 1 & 0 \\
    0 & 0 & 1 & 0 & 0 & 0 & 0 & 0 & 0  \\
    0 & 0 & 0 & 0 & 0 & 1 & 0 & 0 & 0 \\
    0 & 0 & 0 & 0 & 0 & 0 & 0 & 0 & 1
\end{array}
\right),
\end{equation*}
where $a_i,b_j,c_j,d_j\in\C$ for $i=1,2,3,4$ and $j=1,2$.
\emptycomment{
Consider the trivial representation $\mathds{C}$, we obtain the $2$-cocycle $\omega_i:\huaE_i\otimes\huaE_i\rightarrow\mathds{C}$ corresponding to the central extension $\mathds{C}\oplus\huaE_i$ for each Leibniz algebra $\huaE_i$ as follows:
\begin{itemize}
  \item $\omega_1(e_i,e_j)=a_{ij}$
  \item $\begin{cases}
        \omega_2(e_1,e_1)=b_1& \\
        \omega_2(e_1,e_2)=b_2& \\
        \omega_2(e_2,e_1)=\omega_2(e_2,e_2)=0
        \end{cases}$
  \item $\begin{cases}
        \omega_3(e_1,e_1)=0& \\
        \omega_3(e_1,e_2)=0& \\
        \omega_2(e_2,e_1)=c_1 &\\
        \omega_3(e_2,e_2)=c_2
        \end{cases}$
  \item $\begin{cases}
        \omega_4(e_1,e_1)=0& \\
        \omega_4(e_1,e_2)=0& \\
        \omega_4(e_2,e_1)=d_1 &\\
        \omega_4(e_2,e_2)=d_2
        \end{cases}$
\end{itemize}
Then by \eqref{Lei-cen-ext-to-solution}, we can obtain the concrete Yang-Baxter operator $R^{Lei}_i:(\mathds{C}\oplus\huaE_i)^{\otimes 2}\rightarrow(\mathds{C}\oplus\huaE_i)^{\otimes 2}$ for $1\leq i\leq 4$.}

\end{ex}
\subsection{$3$-Leibniz algebras and the Yang-Baxter equation}
\begin{defi}
Let $(\huaL,[\cdot,\cdot,\cdot]_\huaL)$ be a $3$-Leibniz algebra.
An element ${\bf 1}\in\huaL$ is called a {\bf central element} of the $3$-Leibniz algebra $\huaL$ if for any $x,y\in\huaL$,
\begin{eqnarray*}
&&[{\bf 1},x,y]_\huaL=[x,{\bf 1},y]_\huaL=[x,y,{\bf 1}]_\huaL=0.
\end{eqnarray*}
Then $(\huaL,[\cdot,\cdot,\cdot]_\huaL,{\bf 1})$ is called a {\bf central $3$-Leibniz algebra}.
\end{defi}

\begin{thm}\label{3-Lei-to-Lei-and-sol}
Let $(\huaL,[\cdot,\cdot,\cdot]_\huaL,{\bf 1})$ be a central $3$-Leibniz algebra. Then $(\huaL\otimes\huaL,\{\cdot,\cdot\},{\bf 1}\otimes{\bf 1})$ is a central Leibniz algebra, where the Leibniz algebra structure $\{\cdot,\cdot\}$ on $\huaL\otimes\huaL$ is defined by \eqref{fundamental-element-formula}. Consequently, a central $3$-Leibniz algebra gives rise to a solution of the Yang-Baxter equation, illustrated by the following diagram:
\vspace{-0.5cm}
\begin{center}
\begin{displaymath}
\xymatrix@C=10ex{
  \txt{\rm central $3$-Leibniz algebra~\\$(\huaL,[\cdot,\cdot,\cdot]_\huaL,{\bf 1})$}
   \ar@[blue]@{-->}[r]
  &\txt{{\rm central Leibniz algebra}~\\$(\huaL\otimes\huaL,\{\cdot,\cdot\},{\bf 1}\otimes{\bf 1})$}
  \ar[r]^-{{\rm Lemma}~\ref{lei to solution}}
  &\txt{{$R^{Lei}_{\huaL\otimes\huaL}$}}
}
\end{displaymath}
\end{center}
\end{thm}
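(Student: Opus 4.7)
The plan is to assemble the theorem directly from two results already in the paper, with only one new (short) verification in between. Proposition \ref{3-lei-to-lei} already asserts that $(\huaL\otimes\huaL,\{\cdot,\cdot\})$ is a Leibniz algebra whenever $(\huaL,[\cdot,\cdot,\cdot]_\huaL)$ is a $3$-Leibniz algebra; no use of the central element is made there, so that half of the statement requires no new work. It remains to check that ${\bf 1}\otimes{\bf 1}$ is a central element of $(\huaL\otimes\huaL,\{\cdot,\cdot\})$, after which an appeal to Theorem \ref{lei to solution} yields the promised solution of the Yang-Baxter equation.

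The centrality verification is a one-line substitution into formula \eqref{fundamental-element-formula}. For any $y_1\otimes y_2\in\huaL\otimes\huaL$, the bracket
\begin{eqnarray*}
\{{\bf 1}\otimes{\bf 1},y_1\otimes y_2\}
=[{\bf 1},y_1,y_2]_\huaL\otimes{\bf 1}+{\bf 1}\otimes[{\bf 1},y_1,y_2]_\huaL
\end{eqnarray*}
vanishes since $[{\bf 1},y_1,y_2]_\huaL=0$ by the assumption that ${\bf 1}$ is central in $\huaL$. Similarly, for any $x_1\otimes x_2\in\huaL\otimes\huaL$,
\begin{eqnarray*}
\{x_1\otimes x_2,{\bf 1}\otimes{\bf 1}\}
=[x_1,{\bf 1},{\bf 1}]_\huaL\otimes x_2+x_1\otimes[x_2,{\bf 1},{\bf 1}]_\huaL=0,
\end{eqnarray*}
because ${\bf 1}$ occupying the second or third slot also kills the $3$-bracket. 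Thus $(\huaL\otimes\huaL,\{\cdot,\cdot\},{\bf 1}\otimes{\bf 1})$ is a central Leibniz algebra, and Theorem \ref{lei to solution} produces the associated solution $R^{Lei}_{\huaL\otimes\huaL}$.

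There is essentially no obstacle here: the definition of central element in a $3$-Leibniz algebra is tailored precisely so that ${\bf 1}$ in any of the three input slots annihilates the bracket, and the fundamental-object formula \eqref{fundamental-element-formula} only ever places the tensor factors of the second argument into the second and third slots of $[\cdot,\cdot,\cdot]_\huaL$, while placing the tensor factors of the first argument into the first slot. So the four terms appearing in the two centrality computations cover exactly the three positions for ${\bf 1}$ that one needs to kill, and the argument terminates.
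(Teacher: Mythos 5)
Your proposal is correct and matches the paper's own proof essentially verbatim: the paper likewise takes the Leibniz structure from Proposition \ref{3-lei-to-lei} for granted, verifies centrality of ${\bf 1}\otimes{\bf 1}$ by the same two substitutions into \eqref{fundamental-element-formula}, and then invokes Theorem \ref{lei to solution}. No gaps.
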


\begin{proof}

We only need to show that ${\bf 1}\otimes{\bf 1}$ is a central element of the Leibniz algebra $(\huaL,\{\cdot,\cdot\})$.
Since ${\bf 1}$ is a central element of the $3$-Leibniz algebra $\huaL$, then for any $x_1\otimes x_2\in\huaL\otimes\huaL$, we have
\begin{eqnarray*}
&&\{{\bf 1}\otimes{\bf 1},x_1\otimes x_2\}
=[{\bf 1},x_1,x_2]_\huaL\otimes {\bf 1}+{\bf 1}\otimes[{\bf 1},x_1,x_2]_\huaL
=0,\\
&&\{x_1\otimes x_2,{\bf 1}\otimes{\bf 1}\}
=[x_1,{\bf 1},{\bf 1}]_\huaL\otimes x_2+x_1\otimes[x_2,{\bf 1},{\bf 1}]_\huaL
=0,
\end{eqnarray*}
which implies that ${\bf 1}\otimes{\bf 1}$ is a central element of the Leibniz algebra $(\huaL\otimes\huaL,\{\cdot,\cdot\})$. Then by Theorem \ref{lei to solution}, there is a solution of the Yang-Baxter equation $R^{Lei}_{\huaL\otimes\huaL}:
\Big(\huaL\otimes\huaL\Big)\otimes\Big(\huaL\otimes\huaL\Big)\rightarrow\Big(\huaL\otimes\huaL\Big)\otimes\Big(\huaL\otimes\huaL\Big)$
defined as follows:
\begin{eqnarray*}
\quad\quad&&R^{Lei}_{\huaL\otimes\huaL}\Big((x_1\otimes x_2)\otimes(y_1\otimes y_2)\Big)\\
&=&(y_1\otimes y_2)\otimes(x_1\otimes x_2)
+({\bf 1\otimes 1})\otimes([x_1,y_1,y_2]_\huaL\otimes x_2)
+({\bf 1\otimes 1})\otimes(x_1\otimes[x_2,y_1,y_2]_\huaL).\,\,\,\qedhere
\end{eqnarray*}
\end{proof}

Similar to the case of Leibniz algebras, for any $3$-Leibniz algebra, we consider a central extension of $3$-Leibniz algebras, which will give rise to a central 3-Leibniz algebras. Therefore we obtain a solution of the Yang-Baxter equation by Theorem \ref{3-Lei-to-Lei-and-sol}.

\begin{cor}\label{3-lei-cen-ext-to-sol}
Let $(\huaL,[\cdot,\cdot,\cdot]_\huaL)$ be a $3$-Leibniz algebra over $\mathds{K}$.
Then we obtain a solution of the Yang-Baxter equation $R^{Lei}_{\overline{\huaL}\otimes\overline{\huaL}}$ on $\overline{\huaL}\otimes\overline{\huaL}$ as follows, where $\overline{\huaL}=\mathds{K}\oplus\huaL$:
\begin{eqnarray*}
&&R^{Lei}_{\overline{\huaL}\otimes\overline{\huaL}}
\Big((a_1,x_1)\otimes(a_2,x_2)\otimes(b_1,y_1)\otimes(b_2,y_2)\Big)\\
&=&(b_1,y_1)\otimes(b_2,y_2)\otimes(a_1,x_1)\otimes(a_2,x_2)\\
&&+(1_\mathds{K},0)\otimes(1_\mathds{K},0)\otimes\Big((0,[x_1,y_1,y_2]_\huaL)\otimes(a_2,x_2)+(a_1,x_1)\otimes(0,[x_2,y_1,y_2]_\huaL)\Big).
\end{eqnarray*}
\end{cor}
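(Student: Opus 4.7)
The strategy is to realize this corollary as a direct application of Theorem~\ref{3-Lei-to-Lei-and-sol} to the trivial central extension of $\huaL$. The key observation is that although $\huaL$ need not itself possess a central element, the trivially extended space $\overline{\huaL}=\mathds{K}\oplus\huaL$ naturally does.

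First, I would promote $\overline{\huaL}$ to a $3$-Leibniz algebra by declaring the $\mathds{K}$-summand to act trivially, i.e.\ defining
\begin{eqnarray*}
[(a,x),(b,y),(c,z)]_{\overline{\huaL}}=(0,[x,y,z]_\huaL).
\end{eqnarray*}
The $3$-Leibniz identity~\eqref{3-Lei-equation} for $\overline{\huaL}$ follows immediately from that of $\huaL$, because the scalar components are killed and on the $\huaL$-part the brackets coincide with $[\cdot,\cdot,\cdot]_\huaL$. Next, one checks directly that $(1_\mathds{K},0)$ is a central element of $\overline{\huaL}$: whenever $(1_\mathds{K},0)$ appears in any slot of the extended bracket, the $\huaL$-component involves a bracket of $0\in\huaL$ in that slot, which is zero. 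Thus $(\overline{\huaL},[\cdot,\cdot,\cdot]_{\overline{\huaL}},(1_\mathds{K},0))$ is a central $3$-Leibniz algebra.

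Now I would apply Theorem~\ref{3-Lei-to-Lei-and-sol}: this turns the central $3$-Leibniz algebra $(\overline{\huaL},[\cdot,\cdot,\cdot]_{\overline{\huaL}},(1_\mathds{K},0))$ into a central Leibniz algebra $(\overline{\huaL}\otimes\overline{\huaL},\{\cdot,\cdot\},(1_\mathds{K},0)\otimes(1_\mathds{K},0))$, whose bracket, given by the fundamental-object formula~\eqref{fundamental-element-formula}, reads
\begin{eqnarray*}
\{(a_1,x_1)\otimes(a_2,x_2),(b_1,y_1)\otimes(b_2,y_2)\}
&=&(0,[x_1,y_1,y_2]_\huaL)\otimes(a_2,x_2)\\
&&+(a_1,x_1)\otimes(0,[x_2,y_1,y_2]_\huaL).
\end{eqnarray*}
Substituting this bracket and the central element $(1_\mathds{K},0)\otimes(1_\mathds{K},0)$ into Lebed's formula~\eqref{lei to solution formula} of Theorem~\ref{lei to solution} yields precisely the displayed operator $R^{Lei}_{\overline{\huaL}\otimes\overline{\huaL}}$.

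Since the corollary is essentially the concatenation of two already-established results applied to the trivial extension, there is no serious obstacle; the only thing to be careful about is verifying centrality of $(1_\mathds{K},0)$ in all three slots of the $3$-bracket, which requires a one-line check per slot. The explicit formula is then a routine substitution, and no further computation is needed beyond unwinding~\eqref{fundamental-element-formula} and~\eqref{lei to solution formula}.
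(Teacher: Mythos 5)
Your proposal is correct and follows essentially the same route as the paper: define the trivial extension bracket $[(a,x),(b,y),(c,z)]_{\overline{\huaL}}=(0,[x,y,z]_\huaL)$, observe that $(1_{\mathds{K}},0)$ is central, apply Theorem~\ref{3-Lei-to-Lei-and-sol} to get the central Leibniz algebra on $\overline{\huaL}\otimes\overline{\huaL}$, and unwind formula~\eqref{fundamental-element-formula} together with Lebed's formula~\eqref{lei to solution formula}. No discrepancies to report.
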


\begin{proof}
Define a linear map $[\cdot,\cdot,\cdot]_{\overline{\huaL}}:\overline{\huaL}\otimes\overline{\huaL}\to\overline{\huaL}$ as follows:
\begin{eqnarray*}
[(a,x),(b,y),(c,z)]_{\overline{\huaL}}=(0,[x,y,z]_\huaL),
\,\,\,\forall (a,x),(b,y),(c,z)\in\overline{\huaL}.
\end{eqnarray*}
By direct computation, we obtain that
$(\overline{\huaL},[\cdot,\cdot,\cdot]_{\overline{\huaL}},(1_\mathds{K},0))$
is a central $3$-Leibniz algebra.
By {\rm Theorem \ref{3-Lei-to-Lei-and-sol}}, there is a central Leibniz algebra $(\overline{\huaL}\otimes\overline{\huaL},\{\cdot,\cdot\},(1_\mathds{K},0)\otimes(1_\mathds{K},0))$  and the induced solution of the Yang-Baxter equation is
\begin{eqnarray*}
\,\,\quad\quad&&R^{Lei}_{\overline{\huaL}\otimes\overline{\huaL}}
\Big((a_1,x_1)\otimes(a_2,x_2)\otimes(b_1,y_1)\otimes(b_2,y_2)\Big)\\
&=&(b_1,y_1)\otimes(b_2,y_2)\otimes(a_1,x_1)\otimes(a_2,x_2)\\
&&+{(1_\mathds{K},0)\otimes(1_\mathds{K},0)}\otimes[(a_1,x_1),(b_1,y_1),(b_2,y_2)]_{\overline{\huaL}}\otimes (a_2,x_2)\\
&&+{(1_\mathds{K},0)\otimes(1_\mathds{K},0)}\otimes(a_1,x_1)\otimes[(a_2,x_2),(b_1,y_1),(b_2,y_2)]_{\overline{\huaL}}\\
&=&(b_1,y_1)\otimes(b_2,y_2)\otimes(a_1,x_1)\otimes(a_2,x_2)\\
&&+(1_\mathds{K},0)\otimes(1_\mathds{K},0)\otimes\Big((0,[x_1,y_1,y_2]_\huaL)\otimes(a_2,x_2)+(a_1,x_1)\otimes(0,[x_2,y_1,y_2]_\huaL)\Big).\quad\quad\qedhere
\end{eqnarray*}
\end{proof}

It is obvious that the above solution of the Yang-Baxter equation is consistent with the one given in {\rm Proposition \ref{3-lei-to-sol}}, that is, we have the following commutative diagram:
\vspace{-0.5cm}
\begin{center}
\begin{displaymath}
\xymatrix@C=7.5ex@R=2ex{
  &\txt{{\rm trilinear rack}~\\$(\overline{\huaL},T)$}
  \ar@[blue]@{-->}[r]^-{{\rm Theorem}~\ref{TSD-to-quan-rack}}
  &\txt{$R^{T}_{\overline{\huaL}\otimes\overline{\huaL}}=R^{\lhd_T}_{\overline{\huaL}\otimes\overline{\huaL}}$}
  \ar@[blue]@{==}[dd]\\
  \txt{{\rm $3$-Leibniz algebra}\\$(\huaL,[\cdot,\cdot,\cdot]_\huaL)$}
  \ar@[blue]@{-->}[dr]_-{{\rm Corollary}~\ref{3-lei-cen-ext-to-sol}\quad}
  \ar@[blue]@{-->}[ur]^-{{\rm Theorem}~\ref{3-Lei-to-TSD}\quad}
  &\rotatebox{165}{\color{blue}{\txt{\Huge $\circlearrowright$}}}&\\
  &\txt{{\rm central Leibniz algebra~}\\
  $(\overline{\huaL}\otimes\overline{\huaL},\{\cdot,\cdot\},(1_\mathds{K},0)\otimes(1_\mathds{K},0))$}
  \ar[r]_-{\rm Theorem~\ref{lei to solution}}
  &\txt{$R^{Lei}_{\overline{\huaL}\otimes\overline{\huaL}}$}
}
\end{displaymath}
\end{center}

There is another way to construct a solution of the Yang-Baxter equation from the given $3$-Leibniz algebra $(\huaL,[\cdot,\cdot,\cdot]_\huaL)$. We consider a special central extension $\K\oplus(\huaL\otimes\huaL)$ of the Leibniz algebra$(\huaL\otimes\huaL,\{\cdot,\cdot\})$, where $\{\cdot,\cdot\}$ is defined by \eqref{fundamental-element-formula}, then we obtain a solution of the Yang-Baxter equation by Theorem \ref{lei to solution}.

\begin{cor}\label{3-lei-to-lei-and-sol-1}
Let $(\huaL,[\cdot,\cdot,\cdot]_\huaL)$ be a $3$-Leibniz algebra over $\K$.
Then we obtain a solution of the Yang-Baxter equation $R^{Lei}_{\mathds{K}\oplus(\huaL\otimes\huaL)}$ on $\mathds{K}\oplus(\huaL\otimes\huaL)$ given by
\begin{eqnarray*}
&&R^{Lei}_{\mathds{K}\oplus(\huaL\otimes\huaL)}
\Big((a,x_1\otimes x_2)\otimes(b,y_1\otimes y_2)\Big)\\
&=&(b,y_1\otimes y_2)\otimes(a,x_1\otimes x_2)+(1_\mathds{K},0)\otimes(0,[x_1,y_1,y_2]_\huaL\otimes x_2+x_1\otimes[x_2,y_1,y_2]_\huaL).
\end{eqnarray*}
\end{cor}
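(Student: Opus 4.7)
The plan is to follow the route already traced out in Diagram \eqref{diagram:sec5}: pass from the $3$-Leibniz algebra $\huaL$ to the Leibniz algebra $(\huaL\otimes\huaL,\{\cdot,\cdot\})$ of Proposition \ref{3-lei-to-lei}, form the \emph{trivial} central extension by $\mathds{K}$, and then apply Theorem \ref{lei to solution}. The reading of the diagram makes clear that this is exactly Corollary \ref{3-lei-to-lei-and-sol-1}, so the work is essentially bookkeeping.

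First I would set $\huaE:=\huaL\otimes\huaL$ equipped with the Leibniz bracket $\{\cdot,\cdot\}$ supplied by Proposition \ref{3-lei-to-lei}, and consider $\mathds{K}\oplus\huaE$ with the bilinear operation
\begin{eqnarray*}
[(a,u),(b,v)]_0=(0,\{u,v\}),\,\,\,\forall (a,u),(b,v)\in\mathds{K}\oplus\huaE.
\end{eqnarray*}
This is the central extension associated to the zero $2$-cocycle $\omega\equiv 0$, so by the general discussion around \eqref{cen-ext-bracket}, $(\mathds{K}\oplus\huaE,[\cdot,\cdot]_0)$ is a Leibniz algebra. By construction $[(1_\mathds{K},0),(b,v)]_0=0=[(b,v),(1_\mathds{K},0)]_0$, so $(1_\mathds{K},0)$ is a central element and $(\mathds{K}\oplus\huaE,[\cdot,\cdot]_0,(1_\mathds{K},0))$ is a central Leibniz algebra.

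Next I would invoke Theorem \ref{lei to solution}, which applied to this central Leibniz algebra yields directly a solution $R^{Lei}_{\mathds{K}\oplus\huaE}$ of the Yang-Baxter equation, with formula
\begin{eqnarray*}
R^{Lei}_{\mathds{K}\oplus\huaE}\big((a,u)\otimes(b,v)\big)
=(b,v)\otimes(a,u)+(1_\mathds{K},0)\otimes(0,\{u,v\}).
\end{eqnarray*}
To finish, one specializes $u=x_1\otimes x_2$ and $v=y_1\otimes y_2$ and substitutes the explicit formula \eqref{fundamental-element-formula}, namely $\{x_1\otimes x_2,y_1\otimes y_2\}=[x_1,y_1,y_2]_\huaL\otimes x_2+x_1\otimes[x_2,y_1,y_2]_\huaL$, to obtain precisely the claimed expression for $R^{Lei}_{\mathds{K}\oplus(\huaL\otimes\huaL)}$.

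There is really no obstacle: the only piece of content is the combination of Proposition \ref{3-lei-to-lei} (which guarantees that $\{\cdot,\cdot\}$ is a genuine Leibniz bracket) with Theorem \ref{lei to solution}; the triviality of the $2$-cocycle makes the centrality of $(1_\mathds{K},0)$ automatic. If one wished to state this as a self-contained proof, the only calculation worth spelling out is the substitution in the last step, since the Leibniz identity for $\{\cdot,\cdot\}$ and the centrality of $(1_\mathds{K},0)$ are immediate from the definitions.
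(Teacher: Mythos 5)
Your proposal is correct and follows exactly the paper's own argument: form the Leibniz algebra $(\huaL\otimes\huaL,\{\cdot,\cdot\})$ via Proposition \ref{3-lei-to-lei}, take the central extension $(\K\oplus(\huaL\otimes\huaL),[\cdot,\cdot]_0)$ with $\omega=0$, and apply Theorem \ref{lei to solution}, then substitute \eqref{fundamental-element-formula}. No differences worth noting.
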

\begin{proof}
By Proposition \ref{3-lei-to-lei}, $(\huaL\otimes\huaL,\{\cdot,\cdot\})$ is a Leibniz algebra. Then $(\K\oplus(\huaL\otimes\huaL),[\cdot,\cdot]_0,(1_\K,0))$ is a central extension of the Leibniz algebra $\huaL\otimes\huaL$, where $[\cdot,\cdot]_0$ is given by \eqref{cen-ext-bracket} with $\omega=0$, that is, for any $(a,x_1\otimes x_2)$, $(b,y_1\otimes y_2)\in\mathds{K}\oplus(\huaL\otimes\huaL)$,
\begin{eqnarray*}
&&[(a,x_1\otimes x_2),(b,y_1\otimes y_2)]_0=(0,\{x_1\otimes x_2,y_1\otimes y_2\})
=(0,[x_1,y_1,y_2]_\huaL\otimes x_2+x_1\otimes[x_2,y_1,y_2]_\huaL).
\end{eqnarray*}
Then by Theorem \ref{lei to solution}, there is a solution of the Yang-Baxter equation as follows:
\begin{eqnarray*}
\quad\quad&&R^{Lei}_{\mathds{K}\oplus(\huaL\otimes\huaL)}
\Big((a,x_1\otimes x_2)\otimes(b,y_1\otimes y_2)\Big)\\
&=&(b,y_1\otimes y_2)\otimes (a,x_1\otimes x_2)
+(1_\K,0)\otimes[(a,x_1\otimes(b,y_1\otimes y_2)),y]_0\\
&=&(b,y_1\otimes y_2)\otimes (a,x_1\otimes x_2)
+(1_\K,0)\otimes(0,[x_1,y_1,y_2]_\huaL\otimes x_2+x_1\otimes[x_2,y_1,y_2]_\huaL).\quad\quad\qedhere
\end{eqnarray*}
\end{proof}

Define an injective map
$\frks:\mathds{K}\oplus(\huaL\otimes\huaL)\rightarrow\overline{\huaL}\otimes\overline{\huaL}$ as follows:
\begin{eqnarray*}
\frks(a,x\otimes y)=a(1_\mathds{K},0)\otimes(1_\mathds{K},0)+(0,x)\otimes(0,y),\,\,\,\forall (a,x\otimes y)\in\mathds{K}\oplus(\huaL\otimes\huaL).
\end{eqnarray*}
For any $(a,x_1\otimes x_2)$, $(b,y_1\otimes y_2)\in\mathds{K}\oplus(\huaL\otimes\huaL)$, we have
\begin{eqnarray*}
&&\frks[(a,x_1\otimes x_2),(b,y_1\otimes y_2)]_0\\
&=&\frks(0,[x_1,y_1,y_2]_\huaL\otimes x_2+x_1\otimes[x_2,y_1,y_2]_\huaL)\\
&=&(0,[x_1,y_1,y_2]_\huaL)\otimes(0,x_2)+(0,x_1)\otimes(0,[x_2,y_1,y_2]_\huaL),\\
&&\{\frks(a,x_1\otimes x_2),\frks(b,y_1\otimes y_2)\}\\
&=&\{a(1_\mathds{K},0)\otimes(1_\mathds{K},0)+(0,x_1)\otimes(0,x_2),b(1_\mathds{K},0)\otimes(1_\mathds{K},0)+(0,y_1)\otimes(0,y_2)\}\\
&=&[(0,x_1),(0,y_1),(0,y_2)]_{\overline{\huaL}}\otimes(0,x_2)+(0,x_1)\otimes[(0,x_2),(0,y_1),(0,y_2)]_{\overline{\huaL}}\\
&=&(0,[x_1,y_1,y_2]_\huaL)\otimes(0,x_2)+(0,x_1)\otimes(0,[x_2,y_1,y_2]_\huaL),
\end{eqnarray*}
which implies that $\frks[(a,x_1\otimes x_2),(b,y_1\otimes y_2)]_0=\{\frks(a,x_1\otimes x_2),\frks(b,y_1\otimes y_2)\}$. Therefore, $(\mathds{K}\oplus(\huaL\otimes\huaL),[\cdot,\cdot]_0)$ is a subalgebra of $(\overline{\huaL}\otimes\overline{\huaL},\{\cdot,\cdot\})$.

\emptycomment{
\begin{lem}
Let $(\huaL,[\cdot,\cdot,\cdot]_\huaL)$ be a $3$-Leibniz algebra over $\mathds{K}$, then $(\mathds{K}\oplus(\huaL\otimes\huaL),[\cdot,\cdot]_0)$ is a subalgebra of $((\mathds{K}\oplus\huaL)\otimes(\mathds{K}\oplus\huaL),\{\cdot,\cdot\})$.
\end{lem}

\begin{proof}
Define an injective map
$\frks:\mathds{K}\oplus(\huaL\otimes\huaL)\rightarrow(\mathds{K}\oplus\huaL)\otimes(\mathds{K}\oplus\huaL)$ by
\begin{eqnarray*}
\frks(a,x\otimes y)=a(1_\mathds{K},0)\otimes(1_\mathds{K},0)+(0,x)\otimes(0,y).
\end{eqnarray*}
Then for any $(a,x_1\otimes x_2),(b,y_1\otimes y_2)\in\mathds{K}\oplus(\huaL\otimes\huaL)$, we have
\begin{eqnarray*}
\frks[(a,x_1\otimes x_2),(b,y_1\otimes y_2)]_0
&=&\frks(0,\{x_1\otimes x_2,y_1\otimes y_2\})\\
&=&\frks(0,[x_1,y_1,y_2]_\huaL\otimes x_2+x_1\otimes[x_2,y_1,y_2]_\huaL)\\
&=&(0,[x_1,y_1,y_2]_\huaL)\otimes(0,x_2)+(0,x_1)\otimes(0,[x_2,y_1,y_2]_\huaL),\\
\{\frks(a,x_1\otimes x_2),\frks(b,y_1\otimes y_2)\}
&=&\{a(1_\mathds{K},0)\otimes(1_\mathds{K},0)+(0,x_1)\otimes(0,x_2),
b(1_\mathds{K},0)\otimes(1_\mathds{K},0)+(0,y_1)\otimes(0,y_2)\}\\
&=&(0,[x_1,y_1,y_2]_\huaL)\otimes(0,x_2)+(0,x_1)\otimes(0,[x_2,y_1,y_2]_\huaL),
\end{eqnarray*}
which implies that $\frks[(a,x_1\otimes x_2),(b,y_1\otimes y_2)]_0=\{\frks(a,x_1\otimes x_2),\frks(b,y_1\otimes y_2)\}$. Thus, $(\mathds{K}\oplus(\huaL\otimes\huaL),[\cdot,\cdot]_0)$ is a subalgebra of $(\mathds{K}\oplus\huaL)\otimes(\mathds{K}\oplus\huaL)$.
\end{proof}
}

\begin{pro}
With the above notation, $\frks$ is a homomorphism from $R^{Lei}_{\mathds{K}\oplus(\huaL\otimes\huaL)}$ to $R^{Lei}_{\overline{\huaL}\otimes\overline{\huaL}}$, such that we have the commutative diagram \eqref{diagram:sec5}.
\end{pro}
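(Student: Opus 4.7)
The plan is to verify the identity $(\frks\otimes\frks)\circ R^{Lei}_{\mathds{K}\oplus(\huaL\otimes\huaL)} = R^{Lei}_{\overline{\huaL}\otimes\overline{\huaL}}\circ(\frks\otimes\frks)$, which is exactly what it means for $\frks$ to be a homomorphism between the two Yang-Baxter solutions, and is at the same time the commutativity of the right-hand face of Diagram~\eqref{diagram:sec5}.

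The first step is to isolate two compatibility properties of $\frks$ that have essentially already been established in the discussion immediately preceding the statement: (i)~$\frks$ is a Leibniz algebra homomorphism, i.e.\ $\frks[\xi,\eta]_0=\{\frks(\xi),\frks(\eta)\}$ for all $\xi,\eta\in\mathds{K}\oplus(\huaL\otimes\huaL)$; and (ii)~$\frks$ sends the distinguished central element to the distinguished central element, i.e.\ $\frks(1_\mathds{K},0)=(1_\mathds{K},0)\otimes(1_\mathds{K},0)$, which is immediate from the definition of $\frks$.

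With (i) and (ii) in hand, the homomorphism property reduces to a single direct computation. Using the defining formula \eqref{lei to solution formula} of $R^{Lei}$ for the central Leibniz algebra $(\mathds{K}\oplus(\huaL\otimes\huaL),[\cdot,\cdot]_0,(1_\mathds{K},0))$, for any $\xi,\eta\in\mathds{K}\oplus(\huaL\otimes\huaL)$ one has
\begin{align*}
(\frks\otimes\frks)R^{Lei}_{\mathds{K}\oplus(\huaL\otimes\huaL)}(\xi\otimes\eta)
&=\frks(\eta)\otimes\frks(\xi)+\frks(1_\mathds{K},0)\otimes\frks[\xi,\eta]_0\\
&=\frks(\eta)\otimes\frks(\xi)+\bigl((1_\mathds{K},0)\otimes(1_\mathds{K},0)\bigr)\otimes\{\frks(\xi),\frks(\eta)\}\\
&=R^{Lei}_{\overline{\huaL}\otimes\overline{\huaL}}(\frks(\xi)\otimes\frks(\eta)),
\end{align*}
where the middle equality uses (i) and (ii), and the last equality is again \eqref{lei to solution formula}, now for the central Leibniz algebra $(\overline{\huaL}\otimes\overline{\huaL},\{\cdot,\cdot\},(1_\mathds{K},0)\otimes(1_\mathds{K},0))$.

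I expect no serious obstacle: the argument is an instance of the general principle that any morphism of central Leibniz algebras preserving the chosen central element induces, by tensoring with itself, a morphism of the Yang-Baxter operators produced via Theorem~\ref{lei to solution}. The remaining two faces of Diagram~\eqref{diagram:sec5} are supplied by Corollary~\ref{3-lei-to-lei-and-sol-1} and Corollary~\ref{3-lei-cen-ext-to-sol}, so combining these with the identity above yields the full commutativity of \eqref{diagram:sec5}.
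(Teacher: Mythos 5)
Your proposal is correct and is essentially the paper's own argument: the paper's proof consists only of the words ``by direct calculation,'' and the calculation it intends is precisely the one you carry out, resting on the two facts established immediately before the proposition, namely that $\frks$ is a Leibniz algebra homomorphism onto a subalgebra of $(\overline{\huaL}\otimes\overline{\huaL},\{\cdot,\cdot\})$ and that $\frks(1_{\mathds{K}},0)=(1_{\mathds{K}},0)\otimes(1_{\mathds{K}},0)$. Your explicit statement of the underlying principle (a morphism of central Leibniz algebras preserving the distinguished central element intertwines the associated Yang--Baxter operators via its tensor square) is a useful clarification rather than a genuinely different route.
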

\begin{proof}
By direct calculation, we have the following commutative diagram:
\vspace{-0.5cm}
\begin{center}
\begin{displaymath}
\xymatrix@C=8ex{
  \txt{$\Big(\mathds{K}\oplus(\huaL\otimes\huaL)\Big)\otimes\Big(\mathds{K}\oplus(\huaL\otimes\huaL)\Big)$}
  \ar@{^{(}-->}[d]_-{\frks\otimes\frks} \ar[r]^-{R^{Lei}_{\mathds{K}\oplus(\huaL\otimes\huaL)}}
  &  \txt{$\Big(\mathds{K}\oplus(\huaL\otimes\huaL)\Big)\otimes\Big(\mathds{K}\oplus(\huaL\otimes\huaL)\Big)$}
  \ar@{^{(}-->}[d]^-{\frks\otimes\frks} \\
   \txt{$\Big(\overline{\huaL}\otimes\overline{\huaL}\Big)\otimes\Big(\overline{\huaL}\otimes\overline{\huaL}\Big)$}
  \ar[r]_-{R^{Lei}_{\overline{\huaL}\otimes\overline{\huaL}}}
  & \txt{$\Big(\overline{\huaL}\otimes\overline{\huaL}\Big)\otimes\Big(\overline{\huaL}\otimes\overline{\huaL}\Big)$}
}
\end{displaymath}
\end{center}
which means that $\frks$ is a homomorphism from $R^{Lei}_{\mathds{K}\oplus(\huaL\otimes\huaL)}$ to $R^{Lei}_{\overline{\huaL}\otimes\overline{\huaL}}$.
\end{proof}
\emptycomment{
For central Leibniz algebras $(\mathds{K}\oplus(\huaL\otimes\huaL),[\cdot,\cdot]_0,(1_\mathds{K},0))$ and $((\mathds{K}\oplus\huaL)\otimes(\mathds{K}\oplus\huaL),\{\cdot,\cdot\},(1_\mathds{K},0)\otimes(1_\mathds{K},0))$, by Proposition \ref{cen-Lei-to-quan-rack}, we obtain respectively linear racks $(\mathds{K}\oplus(\huaL\otimes\huaL),\lhd_{\mathds{K}\oplus(\huaL\otimes\huaL)})$ and $((\mathds{K}\oplus\huaL)\otimes(\mathds{K}\oplus\huaL),\lhd_{(\mathds{K}\oplus\huaL)\otimes(\mathds{K}\oplus\huaL)})$ as follows:
\begin{eqnarray*}
&&(a,x_1\otimes x_2)\lhd_{\mathds{K}\oplus(\huaL\otimes\huaL)}(b,y_1\otimes y_2)\\
&=&(ab,bx_1\otimes x_2+[x_1,y_1,y_2]_\huaL\otimes x_2+x_1\otimes[x_2,y_1,y_2]_\huaL),\\
&&(a_1,x_1)\otimes(a_2,x_2)\lhd_{(\mathds{K}\oplus\huaL)\otimes(\mathds{K}\oplus\huaL)}(b_1,x_1)\otimes(b_2,x_2)\\
&=&b_1 b_2(a_1,x_1)\otimes(a_2,x_2)+(0,[x_1,y_1,y_2]_\huaL)\otimes(a_2,x_2)+(a_1,x_1)\otimes(0,[x_2,y_1,y_2]_\huaL).
\end{eqnarray*}}

\begin{ex}
Consider the $3$-Leibniz algebra $(\BO,[\cdot,\cdot,\cdot])$ showed in {\rm Example \ref{ex-octonion}}. By {\rm Corollary \ref{3-lei-cen-ext-to-sol}}, we obtain a solution of the Yang-Baxter equation on $(\R\oplus\BO)\otimes(\R\oplus\BO)$ as follows:
\begin{eqnarray*}
&&R^{Lei}_{(\R\oplus\BO)\oplus(\R\oplus\BO)}
\Big((a_1,x_1)\otimes(a_2,x_2)\otimes(b_1,y_1)\otimes(b_2,y_2)\Big)\\
&=&(b_1,y_1)\otimes(b_2,y_2)\otimes(a_1,x_1)\otimes(a_2,x_2)\\
&&+(1,0)\otimes(1,0)\otimes(0,y_2(y_1x_1)-y_1(y_2x_1)+(x_1y_1)y_2-(x_1y_2)y_1+(y_1x_1)y_2-y_1(x_1y_2))\otimes(a_2,x_2)\\
&&+(1,0)\otimes(1,0)\otimes(a_1,x_1)\otimes(0,y_2(y_1x_2)-y_1(y_2x_2)+(x_2y_1)y_2-(x_2y_2)y_1+(y_1x_2)y_2-y_1(x_2y_2)).
\end{eqnarray*}
By {\rm Corollary \ref{3-lei-to-lei-and-sol-1}}, we obtain a solution of the Yang-Baxter equation on $\R\oplus(\BO\otimes\BO)$ as follows:
\begin{eqnarray*}
&&R^{Lei}_{\R\oplus(\BO\otimes\BO)}
\Big((a,x_1\otimes x_2)\otimes(b,y_1\otimes y_2)\Big)\\
&=&(b,y_1\otimes y_2)\otimes(a,x_1\otimes x_2)\\
&&+(1,0)\otimes\Big(0,\big(y_2(y_1x_1)-y_1(y_2x_1)+(x_1y_1)y_2-(x_1y_2)y_1+(y_1x_1)y_2-y_1(x_1y_2)\big)\otimes x_2\Big)\\
&&+(1,0)\otimes\Big(x_1\otimes \big(y_2(y_1x_2)-y_1(y_2x_2)+(x_2y_1)y_2-(x_2y_2)y_1+(y_1x_2)y_2-y_1(x_2y_2)\big)\Big).
\end{eqnarray*}
\end{ex}

\begin{ex}
Let $\huaL$ be a $2$-dimensional vector space with a basis $\{e_1,e_2\}$.
Then $\huaL$ equipped with the following linear map $[\cdot,\cdot,\cdot]:\huaL\otimes\huaL\otimes\huaL\to\huaL$ is a $3$-Leibniz algebra:
$$[e_1,e_1,e_2]=e_2=-[e_1,e_2,e_1].$$
Denote by $\frke_1=e_1\otimes e_1$, $\frke_2=e_1\otimes e_2$, $\frke_3=e_2\otimes e_1$ and $\frke_4=e_2\otimes e_2$. Then we obtain a basis of $\Big(\K\oplus(\huaL\otimes\huaL)\Big)\otimes\Big(\K\oplus(\huaL\otimes\huaL)\Big)$ as follows:
$$\{(1_\K,0)\otimes(1_\K,0),(1_\K,0)\otimes(0,\frke_1),(1_\K,0)\otimes(0,\frke_2),
\cdots,(0,\frke_4)\otimes(0,\frke_3),(0,\frke_4)\otimes(0,\frke_4)\}.$$
By {\rm Corollary \ref{3-lei-to-lei-and-sol-1}}, we obtain a solution of the Yang-Baxter equation $R^{Lei}_{\mathds{K}\oplus(\huaL\otimes\huaL)}$ on $\mathds{K}\oplus(\huaL\otimes\huaL)$ with respect to the above basis as follows:
\begin{equation*}
R^{Lei}_{\mathds{K}\oplus(\huaL\otimes\huaL)}={\scriptsize
\left(
  \begin{array}{ccccccccccccccccccccccccc}
    1 & 0 & 0 & 0 & 0 & 0 & 0 & 0 & 0 & 0 & 0 & 0 & 0 & 0 & 0 & 0 & 0 & 0 & 0 & 0 & 0 & 0 & 0 & 0 & 0 \\
    0 & 0 & 0 & 0 & 0 & 1 & 0 & 0 & 0 & 0 & 0 & 0 & 0 & 0 & 0 & 0 & 0 & 0 & 0 & 0 & 0 & 0 & 0 & 0 & 0 \\
    0 & 0 & 0 & 0 & 0 & 0 & 0 & 1 & -1 & 0 & 1 & 0 & 0 & 0 & 0 & 0 & 0 & 0 & 0 & 0 & 0 & 0 & 0 & 0 & 0 \\
    0 & 0 & 0 & 0 & 0 & 0 & 0 & 1 & -1 & 0 & 0 & 0 & 0 & 0 & 0 & 1 & 0 & 0 & 0 & 0 & 0 & 0 & 0 & 0 & 0 \\
    0 & 0 & 0 & 0 & 0 & 0 & 0 & 0 & 0 & 0 & 0 & 0 & 1 & -1 & 0 & 0 & 0 & 1 & -1 & 0 & 1 & 0 & 0 & 0 & 0  \\
    0 & 1 & 0 & 0 & 0 & 0 & 0 & 0 & 0 & 0 & 0 & 0 & 0 & 0 & 0 & 0 & 0 & 0 & 0 & 0 & 0 & 0 & 0 & 0 & 0 \\
    0 & 0 & 0 & 0 & 0 & 0 & 1 & 0 & 0 & 0 & 0 & 0 & 0 & 0 & 0 & 0 & 0 & 0 & 0 & 0 & 0 & 0 & 0 & 0 & 0 \\
    0 & 0 & 0 & 0 & 0 & 0 & 0 & 0 & 0 & 0 & 0 & 1 & 0 & 0 & 0 & 0 & 0 & 0 & 0 & 0 & 0 & 0 & 0 & 0 & 0 \\
    0 & 0 & 0 & 0 & 0 & 0 & 0 & 0 & 0 & 0 & 0 & 0 & 0 & 0 & 0 & 0 & 1 & 0 & 0 & 0 & 0 & 0 & 0 & 0 & 0 \\
    0 & 0 & 0 & 0 & 0 & 0 & 0 & 0 & 0 & 0 & 0 & 0 & 0 & 0 & 0 & 0 & 0 & 0 & 0 & 0 & 0 & 1 & 0 & 0 & 0 \\
    0 & 0 & 1 & 0 & 0 & 0 & 0 & 0 & 0 & 0 & 0 & 0 & 0 & 0 & 0 & 0 & 0 & 0 & 0 & 0 & 0 & 0 & 0 & 0 & 0 \\
    0 & 0 & 0 & 0 & 0 & 0 & 0 & 1 & 0 & 0 & 0 & 0 & 0 & 0 & 0 & 0 & 0 & 0 & 0 & 0 & 0 & 0 & 0 & 0 & 0 \\
    0 & 0 & 0 & 0 & 0 & 0 & 0 & 0 & 0 & 0 & 0 & 0 & 1 & 0 & 0 & 0 & 0 & 0 & 0 & 0 & 0 & 0 & 0 & 0 & 0 \\
    0 & 0 & 0 & 0 & 0 & 0 & 0 & 0 & 0 & 0 & 0 & 0 & 0 & 0 & 0 & 0 & 0 & 1 & 0 & 0 & 0 & 0 & 0 & 0 & 0 \\
    0 & 0 & 0 & 0 & 0 & 0 & 0 & 0 & 0 & 0 & 0 & 0 & 0 & 0 & 0 & 0 & 0 & 0 & 0 & 0 & 0 & 0 & 1 & 0 & 0 \\
    0 & 0 & 0 & 1 & 0 & 0 & 0 & 0 & 0 & 0 & 0 & 0 & 0 & 0 & 0 & 0 & 0 & 0 & 0 & 0 & 0 & 0 & 0 & 0 & 0 \\
    0 & 0 & 0 & 0 & 0 & 0 & 0 & 0 & 1 & 0 & 0 & 0 & 0 & 0 & 0 & 0 & 0 & 0 & 0 & 0 & 0 & 0 & 0 & 0 & 0 \\
    0 & 0 & 0 & 0 & 0 & 0 & 0 & 0 & 0 & 0 & 0 & 0 & 0 & 0 & 0 & 0 & 0 & 0 & 0 & 0 & 0 & 0 & 0 & 0 & 0 \\
    0 & 0 & 0 & 0 & 0 & 0 & 0 & 0 & 0 & 0 & 0 & 0 & 0 & 1 & 0 & 0 & 0 & 0 & 1 & 0 & 0 & 0 & 0 & 0 & 0 \\
    0 & 0 & 0 & 0 & 0 & 0 & 0 & 0 & 0 & 0 & 0 & 0 & 0 & 0 & 0 & 0 & 0 & 0 & 0 & 0 & 0 & 0 & 0 & 1 & 0 \\
    0 & 0 & 0 & 0 & 1 & 0 & 0 & 0 & 0 & 0 & 0 & 0 & 0 & 0 & 0 & 0 & 0 & 0 & 0 & 0 & 0 & 0 & 0 & 0 & 0 \\
    0 & 0 & 0 & 0 & 0 & 0 & 0 & 0 & 0 & 1 & 0 & 0 & 0 & 0 & 0 & 0 & 0 & 0 & 0 & 0 & 0 & 0 & 0 & 0 & 0 \\
    0 & 0 & 0 & 0 & 0 & 0 & 0 & 0 & 0 & 0 & 0 & 0 & 0 & 0 & 1 & 0 & 0 & 0 & 0 & 0 & 0 & 0 & 0 & 0 & 0  \\
    0 & 0 & 0 & 0 & 0 & 0 & 0 & 0 & 0 & 0 & 0 & 0 & 0 & 0 & 0 & 0 & 0 & 0 & 0 & 1 & 0 & 0 & 0 & 0 & 0 \\
    0 & 0 & 0 & 0 & 0 & 0 & 0 & 0 & 0 & 0 & 0 & 0 & 0 & 0 & 0 & 0 & 0 & 0 & 0 & 0 & 0 & 0 & 0 & 0 & 1 \\
  \end{array}
\right)}.
\end{equation*}
\end{ex}

\noindent
{\bf Acknowledgements. } This research is supported by NSFC (12471060).

 \end{document}